\documentclass[12pt, reqno]{amsart}

\usepackage[utf8]{inputenc}
\usepackage{amsthm}
\usepackage{caption}
\usepackage{amsmath}
\usepackage{amssymb}
\usepackage{mathrsfs}
\usepackage{mathtools}
\usepackage{rotating}
\usepackage{bbm}
\usepackage{adjustbox}
\usepackage{booktabs}
\usepackage[online,flushleft]{threeparttable}
\usepackage{multirow}
\usepackage{array}
\usepackage{import}
\newcolumntype{C}[1]{>{\centering\arraybackslash}p{#1}}

\usepackage[open,openlevel=1]{bookmark}
\usepackage{xcolor}
\usepackage{float}
\hypersetup{
colorlinks,
linkcolor={red!50!black},
citecolor={blue!50!black},
urlcolor={blue!80!black}
}

\usepackage{thmtools, thm-restate}  

\declaretheoremstyle[
spaceabove=6pt, spacebelow=6pt,
headfont=\normalfont\bfseries,
notefont=\mdseries, notebraces={(}{)},
bodyfont=\normalfont,
postheadspace=1em,
qed=$\blacksquare$
]{examplestyle}

\declaretheoremstyle[
spaceabove=6pt, spacebelow=6pt,
headfont=\normalfont\bfseries,
notefont=\mdseries, notebraces={(}{)},
bodyfont=\itshape,
postheadspace=1em
]{theorem}

\declaretheoremstyle[
spaceabove=6pt, spacebelow=6pt,
headfont=\normalfont\bfseries,
notefont=\mdseries, notebraces={(}{)},
bodyfont=\normalfont,
postheadspace=1em
]{assumption}

\declaretheoremstyle[
spaceabove=4pt, spacebelow=4pt,
headfont=\itshape\bfseries,
notefont=\mdseries, notebraces={(}{)},
bodyfont=\itshape,
postheadspace=0.2em,
qed=\qedsymbol
]{remark}

\declaretheorem[style=theorem]{theorem}
\declaretheorem[style=theorem, numbered=no, name=Theorem]{theorem*}
\declaretheorem[style=remark,name=Remark]{remark}

\declaretheorem[style=assumption,name=Assumption]{assumption}
\declaretheorem[style=assumption,name=Assumption 1', numbered=no]{assumption1prime}
\declaretheorem[style=assumption,name=Assumption 2', numbered=no]{assumption2prime}

\declaretheorem[style=theorem,name=Definition]{definition}
\declaretheorem[style=theorem, name=Lemma]{lemma}

\declaretheorem[numbered=no,style=definition,name=Question]{question*}  
\declaretheorem[style=definition,name=Definition, numbered=no]{definition*}  

\declaretheorem[style=examplestyle]{example}
\renewcommand\thmcontinues[1]{continued}

\newtheorem{thm}{Assumption}

\newcommand{\real}{\mathbb{R}}

\newcommand{\norm}[1]{\left\Vert #1\right\Vert }

\newcommand{\indicator}{\mathbbm{1}}

\newcommand{\ud}{\mathrm{d}}
\newcommand{\prob}{\mathbb{P}}
\newcommand{\expt}{\mathbb{E}}
\newcommand{\var}{\text{\normalfont Var}}

\newcounter{steps}

\DeclareMathOperator*{\argmax}{arg\,max}
\DeclareMathOperator*{\argmin}{arg\,min}

\usepackage[backend=biber, 
    style=authoryear, 
	  date=year,
		uniquename=false,
		uniquelist=false,
		maxcitenames=3,   
		maxbibnames=99,  
		doi=false,    
    isbn=false,   
    url=false,    
    eprint=false  
]{biblatex}

\AtEveryBibitem{
    \clearfield{note}      
    \clearfield{addendum}  
    \clearlist{publisher}
    \clearlist{location}
    \clearfield{issn}
    \clearfield{isbn}
}
\renewbibmacro{in:}{}

\usepackage[nodisplayskipstretch]{setspace}
\usepackage{geometry}
\usepackage{enumitem}
\setenumerate[1]{label=(\roman*)}
\setenumerate[2]{label=\normalfont{(\alph*)}}

\begin{document}

\title[]{Econometric Inference with Machine-Learned Proxies: \\ Partial Identification via Data Combination}


\author[]{Lixiong Li}
\address{Johns Hopkins University}
\email{lixiong.li@jhu.edu}
\thanks{ 
I thank Marc Henry and Yingyao Hu for their helpful feedbacks, and seminar participants at UPenn.\\
\noindent Address: Wyman Park Building 5th Floor, 3100 Wyman Park Drive, Baltimore, MD 21211 \\  Email: lixiong.li@jhu.edu \\
This version: \today
}

\keywords{}

\date{}

\dedicatory{Johns Hopkins University}

\begin{abstract}

Empirical researchers increasingly use upstream machine-learning (ML) methods to construct proxies for latent target variables from complex, unstructured data. A naive plug-in use of such proxies in downstream econometric models, however, can lead to biased estimation and invalid inference. This paper develops a framework for partial identification and inference in general moment models with ML-generated proxies. Our approach does not require restrictive assumptions on the upstream ML procedure, such as consistency or known convergence rates, nor does it require a complete validation sample containing all variables used in the downstream analysis. Instead, we assume access to two datasets: a downstream sample containing observed covariates and the proxy, and an auxiliary validation sample containing joint observations on the proxy and its target variable. We treat the proxy as a linking variable between these two samples, rather than as a literal noisy substitute for the latent target variable. Building on this idea, we develop a sharp identification strategy based on an unconditional optimal transport characterization and an inference procedure that controls asymptotic size using analytical critical values without resampling. Monte Carlo simulations show reliable size control and informative confidence sets across a range of predictive-accuracy scenarios.

\vspace{3em}

\noindent\textbf{Keywords:} machine learning proxies, data combination, partial identification, optimal transport, cross-fitting, nonclassical measurement error. 
\end{abstract}

\maketitle
\newpage

\section*{Introduction}

There is an increasing prevalence in economics and the social sciences of leveraging complex, unstructured data---such as text and imagery---to construct variables central to empirical analysis. In these contexts, machine learning (ML) algorithms are frequently employed to map these raw inputs onto proxies for latent constructs that would otherwise be inherently unobservable or prohibitively costly to quantify at scale. Such approaches have been instrumental in deriving measures of media slant \autocite{groseclose_measure_2005,gentzkow_what_2010}, firm-level political risk \autocite{hassan_firm-level_2019}, local economic activity \autocite{hu_illuminating_2022}, remote-work status in job postings \autocite{hansen_remote_2023}, and air pollution exposure \autocite{sager_clean_2025}. For broader overviews, see \textcite{mullainathan_machine_2017} and \textcite{gentzkow_text_2019}, along with the references therein.

This procedure is typically implemented in two steps. First, in an upstream step, a prediction rule $g$ is trained on a labeled dataset to map unstructured or high-dimensional data $X$ to a target variable $Z$. Second, in a downstream step, a researcher studies an economic model with observed covariates $W$, latent vector $Z$, and parameter $\theta$. Although $Z$ is not observed in the downstream step, the researcher does observe the unstructured inputs $X$, which allows the researcher to apply the prediction rule from the upstream step and construct the proxy $\hat{Z}=g(X)$. While we distinguish between these two steps for clarity, the distinction is conceptual, since the upstream ML training and the downstream empirical estimation may be conducted by the same researcher.

While ML-generated proxies have substantially expanded the scope of empirical research, incorporating them into downstream economic models poses a fundamental challenge. A naive plug-in approach that treats the ML-generated proxy $\hat{Z}$ as if it were the true latent variable $Z$ ignores both measurement error and the generated-regressor problem, and can therefore lead to biased estimation and invalid inference. This issue is compounded by two additional features of modern ML applications. First, the prediction rules used in practice are often highly complex, making it analytically difficult to characterize the statistical properties of $\hat{Z}$. In particular, convergence rates for $\hat{Z}$ are often unavailable, and in some settings even consistency of $\hat{Z}$ for $Z$ is unclear. Second, because the unstructured input $X$ may contain rich information not only about $Z$ but also about the observed covariates $W$, the resulting measurement error $Z-\hat{Z}$ is generally nonclassical: it may depend on $Z$, remain correlated with $W$ even conditional on $Z$, and even be endogenous to the downstream economic model.

In this paper, we propose a new framework to address these challenges by leveraging an auxiliary validation sample. We consider a setting in which the researcher has access to two distinct datasets. The first is a downstream sample containing observed covariates $W$, unstructured inputs $X$, and the proxy $\hat{Z}$ constructed from $X$ using the upstream prediction rule $g$. The second is an auxiliary validation sample containing observations on $Z$ and $X$, and therefore also on the proxy $\hat{Z}=g(X)$. Such validation data often arise naturally in practice, for example as a held-out sample from the upstream training data used to assess the predictive performance of the rule $g$. Importantly, we do not require the validation sample to contain $W$, nor do we assume any direct correspondence or individual-level matching between the observations in the validation and downstream samples.

The key idea of this paper is to reframe the role of $\hat{Z}$. Rather than treating $\hat{Z}$ as a literal but noisy substitute for $Z$, we view it as a low-dimensional summary of the raw unstructured input $X$ that links the validation and downstream samples. Because the validation sample contains joint observations on $(Z,\hat{Z})$, it allows us to learn the conditional distribution of $Z$ given $\hat{Z}$. Because $\hat{Z}$ is also observed in the downstream sample, this information can then be carried over to the downstream setting: for each observation $(W_i,\hat{Z}_i)$, although the realization of $Z_i$ remains unobserved, the validation sample informs the conditional distribution of $Z_i$ given $\hat{Z}_i$. In this sense, $\hat{Z}$ serves not merely as a noisy proxy for $Z$, but as the key linking variable that allows information about $Z$ from the validation sample to be incorporated into the downstream economic analysis. 

We then develop an identification strategy based on this idea. A natural way to implement the linking role of $\hat{Z}$ is through an optimal transport (OT) characterization conditional on $\hat{Z}$, as in \textcite{fan_partial_2025}. However, such a conditional OT strategy can be difficult to implement in practice, especially when $\hat{Z}$ is continuous or high-dimensional, because it requires estimating the conditional distribution of $Z$ given $\hat{Z}$ and solving a separate OT problem for each value in the support of $\hat{Z}$. To overcome this difficulty, we instead develop an \emph{unconditional} OT characterization based on the decoupling idea of \textcite{li_finite_2025}. Our approach replaces the continuum of conditional transport problems with a single OT problem formulated in terms of the unconditional distribution of $(W,\hat{Z})$ and the unconditional distribution of $(Z,\hat{Z})$, thereby rendering the method tractable in practice. Importantly, this unconditional characterization remains \emph{sharp} for the identified set of $\theta$: given the available data and maintained assumptions, the resulting bounds cannot be further tightened.

Finally, we develop an inference procedure for testing whether a candidate value of $\theta$ belongs to the identified set. Inference is challenging in this setting because OT problems typically exhibit nonstandard asymptotic behavior. A common workaround is to replace the OT problem with an entropy-regularized counterpart, but that approach is not suitable here: a fixed entropy penalty introduces regularization bias that invalidates the underlying OT characterization, whereas accommodating that bias through worst-case bounds leads to excessively conservative inference. We instead base our procedure on the Kantorovich dual formulation together with a sieve approximation to the relevant dual function space. The number of sieve terms is allowed to increase with the sample size, so that the approximation error, and hence its effect on the test statistic, vanishes asymptotically. To maintain tractability, the resulting procedure avoids resampling methods such as the bootstrap or subsampling. Instead, by applying sample splitting and cross-fitting, we obtain critical values directly from standard normal quantiles. This yields a tractable inference procedure in practice, as illustrated in our simulation exercises.

Our approach differs from the existing literature. In addressing the measurement error embedded in $\hat{Z}$, prior work has largely proceeded along two lines:

One line imposes structural restrictions on the measurement error or asymptotic requirements on the proxy itself. A standard example is a conditional independence assumption, under which the proxy $\hat{Z}$ is independent of auxiliary measurements or covariates conditional on the true latent variable $Z$ \autocite[see, e.g.,][]{hu_instrumental_2008, hu_illuminating_2022}. Related identifying assumptions also appear in recent work by \textcite{rambachan_program_2024} when a joint sample of $(W,Z,\hat{Z})$ is unavailable. A different version of this approach assumes that the measurement error in $\hat{Z}$ vanishes at a rate comparable to sampling error \autocite{battaglia_inference_2025}. In the present setting, however, these assumptions can be restrictive. Conditional independence is difficult to justify when the unstructured input $X$ contains information that may remain correlated with other covariates in the economic model even after conditioning on $Z$, while rate conditions are hard to verify when $\hat{Z}$ is generated by complex ML procedures whose asymptotic properties are analytically intractable.

A second line assumes access to a \emph{complete} validation sample containing joint observations on $(W,Z,\hat{Z})$ that is sufficient for point identification of $\theta_0$ on its own \autocite[see, e.g.,][]{chen_unified_2000, chen_measurement_2005, allon_machine_2023, angelopoulos_prediction-powered_2023, angelopoulos_ppi_2024, zhang_debiasing_2023, zrnic_cross-prediction-powered_2023, miao_task-agnostic_2024, rambachan_program_2024, kluger_prediction-powered_2025, sanford_adversarial_2025}. Within this framework, the downstream sample of $(W,\hat{Z})$ serves mainly to improve estimation efficiency. Although this approach avoids both structural assumptions on the measurement error and formal asymptotic requirements on the ML procedure, it imposes substantially stronger data requirements. In particular, this approach requires either that the downstream model not depend on the covariates $W$, or that the validation sample contain precisely the covariates $W$ needed for the downstream analysis. In practice, such complete validation samples are not always available: upstream ML researchers typically train prediction rules and release held-out validation data on $(Z,\hat{Z})$ without knowing which covariates $W$ will matter for a particular downstream application. Conversely, downstream researchers often face prohibitive costs or practical barriers to collecting the true values of $Z$ needed to obtain a joint sample of $(W,Z,\hat{Z})$.

The first contribution of this paper is to provide an alternative framework that avoids the limitations of existing approaches. Unlike the first line of the literature, our framework does not impose structural assumptions on the measurement error $Z-\hat{Z}$, nor does it require formal statistical guarantees, such as convergence rates or consistency, for the proxy $\hat{Z}$. Instead, we rely on validation data to learn the joint distribution of the true latent variable $Z$ and the proxy $\hat{Z}$. In this respect, our framework is closer in spirit to the second line of the literature. The key difference is that we do not require a \emph{complete} validation sample that, by itself, is sufficient for consistent estimation of the parameter of interest. The trade-off is that our framework generally delivers partial rather than point identification for the model parameter. The informativeness of our partial-identification framework depends on the quality of the proxy. When the ML-generated proxy is highly accurate, the identified set for the parameter would be tight. In the extreme case in which $\hat{Z}=Z$, the parameter is point identified within our framework. At the same time, the validity of our approach does not depend on the accuracy of the proxy. When $\hat{Z}$ is only a crude measure of $Z$, the resulting identified set remains valid, although it may be wide and therefore only weakly informative about the parameter. This feature faithfully reflects the informational content of a poor proxy: when the proxy contains little reliable information about the latent variable, one should \emph{not} expect the data to deliver sharp conclusions about the parameter of interest.

Beyond bypassing restrictive structural or asymptotic assumptions and strong data requirements, our framework offers several practical advantages. Because we treat $\hat{Z}$ not as a plug-in substitute for $Z$, but rather as a linking variable between the validation and downstream samples, the framework naturally accommodates settings in which $Z$ and $\hat{Z}$ take values in different spaces. For example, if $Z$ is a discrete group identity, $\hat{Z}$ need not itself be a categorical label. It may instead take the form of a vector of predicted probabilities, a record of the model's top-two guesses, a likelihood ranking, or some other statistical summary. When available, such richer representations of $\hat{Z}$ can carry more information about $Z$ and therefore lead to a more informative identified set. This flexibility also provides a simple way to combine proxies produced by multiple competing ML procedures: one can define a multidimensional $\hat{Z}$ whose components collect the outputs of the different procedures. More broadly, our perspective suggests that, in empirical economics, the quality of an ML procedure should be judged not solely by its out-of-sample predictive accuracy, but by the extent to which its output preserves the economically relevant information contained in the unstructured input $X$ for downstream moment conditions.

As a second contribution, the identification strategy developed in this paper applies more broadly to general data combination problems \autocite{cross_regressions_2002,ridder_chapter_2007} and is therefore of independent interest. Because our framework recasts the proxy as a variable that links the downstream and validation samples, our identification strategy carries over directly to classical data combination environments. Such problems are pervasive in empirical work and remain of substantial econometric interest. Recent applications include assessing algorithmic fairness \autocite{kallus_assessing_2022}, measuring intergenerational mobility \autocite{santavirta_name-based_2024}, estimating long-term treatment effects \autocite{athey_surrogate_2025, obradovic_identification_2026}, and combining stated and revealed preferences \autocite{meango_combining_2025}.

Recent advances in this literature have used OT tools to characterize data combination problems \autocite[see, e.g.,][]{hwang_bounding_2025, dhaultfoeuille_partially_2025, fan_partial_2025}. The paper most closely related to ours is \textcite{fan_partial_2025}, who develop a \emph{conditional} OT characterization for the same class of general moment models studied here. However, implementing their approach requires solving a separate transport problem for each realization of the overlap variable shared across the two datasets---in our setting, $\hat{Z}$. When this linking variable has large support or includes continuous components, the conditional approach can become computationally prohibitive. The \emph{unconditional} OT characterization developed in this paper complements their result by providing an alternative formulation. Rather than solving a continuum of conditional OT problems and estimating the corresponding conditional distributions, we solve a single OT problem defined over unconditional distributions. The trade-off is that the resulting unconditional OT problem is higher-dimensional than each conditional problem in \textcite{fan_partial_2025}. We view this as a favorable trade-off when the support of $\hat{Z}$ is large or when $\hat{Z}$ contains continuous components.

The remainder of the paper is organized as follows. Section \ref{sec:framework} introduces the analytical framework and discusses several empirical applications that fit into this setup. Section \ref{sec:validation_identification} establishes the identification results when an auxiliary validation sample containing both the proxy and the true latent target variable is available, while Section \ref{sec:validation_inference} develops the corresponding statistical inference procedures. Section \ref{sec:simulation} demonstrates the finite-sample performance of the proposed methods through Monte Carlo simulations. Section \ref{sec:extensions} explores an extension of the benchmark model to settings in which the validation sample is only available to identify the marginal distributions of subvectors of $(Z, \hat{Z})$. Finally, Section \ref{sec:discussion} concludes.

\subsection*{Notation} Throughout the paper, capital letters (e.g., $Z$, $\hat{Z}$, $W$, and $S$) denote random variables or random vectors, and lower-case letters (e.g., $z$, $\hat{z}$, $w$, and $s$) denote their corresponding realizations. All random variables are defined on a common complete probability space. Vectors are written as column vectors, and the superscript $\top$ denotes transpose. For vectors in finite-dimensional Euclidean spaces, $\norm{\cdot}$ denotes the Euclidean norm. 
We write $\expt_F[\cdot]$ for expectation under the distribution $F$; when no ambiguity arises, we simply write $\expt[\cdot]$.
 For a distribution $F$, we write $\mathcal{L}^1(F)$ for the space of $F$-integrable functions. We use $\coloneqq$ to denote definitions. Unless stated otherwise, all equalities and inequalities involving random variables are understood to hold almost surely.

\section{Analytical Framework and Examples}\label{sec:framework}
We study a two-stage empirical workflow consisting of an upstream machine-learning (ML) prediction step and a downstream
econometric inference step. In the downstream step, the applied researcher seeks to learn about a finite-dimensional
parameter $\theta_0\in\Theta$, characterized by the population moment conditions
\begin{equation}\label{eq:downstream_model}
	\expt\!\left[q(W,Z;\theta_0)\right] = 0,
\end{equation}
where $W$ is a vector of observed covariates, $Z$ is a vector of latent target variables that are \emph{not} observed in the downstream dataset, and $q(\cdot)$ is a known moment function. We assume that $W$ and $Z$ take values in finite-dimensional Euclidean spaces.

Because $Z$ is unobserved, the downstream researcher relies on a proxy generated by an upstream prediction rule,
\begin{equation}\label{eq:upstream_model}
	\hat{Z} = g(X),
\end{equation}
where $X$ denotes observed predictors. The prediction rule $g(\cdot)$ is produced upstream, using separate
training data on $(X,Z)$. In this paper, $X$ may consist of high-dimensional structured
covariates or unstructured inputs such as text or images. 

We impose minimal structure on the upstream learning step. The training procedure is left completely unrestricted, and the prediction rule $g(\cdot)$ could be misspecified for the conditional distribution of $Z$ given $X$. In particular, our analysis does not rely on any consistency, rate-of-convergence properties, or asymptotic distribution results of the estimated predictor. Dispensing with such asymptotic requirements is particularly appealing for empirical work, as formal convergence analysis are often intractable for the complex, off-the-shelf machine learning algorithms deployed in practice. For expositional simplicity in this section, we assume that the proxy $\hat{Z}$ and the latent target $Z$ share the same dimensionality. Nevertheless, our methodology readily accommodates settings in which $\hat{Z}$ and $Z$ take values in spaces of differing dimensions, as we discuss later in Section~\ref{sec:validation_identification}.

Our framework distinguishes the \emph{informativeness} of the proxy from the \emph{validity} of the econometric procedure.
The predictive accuracy of $g(\cdot)$ governs the informativeness of the resulting bounds. Intuitively, stronger predictive performance yields tighter restrictions linking $Z$ and $\hat Z$ and therefore a smaller identified set for $\theta_0$ in the downstream problem. By contrast, the validity of our approach does not hinge on $g(\cdot)$ being highly accurate or statistically consistent. Instead, validity is ensured whenever the downstream researcher has access to an auxiliary validation sample containing joint observations of $(Z,\hat Z)$. The joint distribution of $(Z,\hat Z)$ quantifies both the predictive performance of $g(\cdot)$ and the nature of the discrepancy between the proxy $\hat Z$ and the latent target variable $Z$. This data requirement is often modest. For example, the validation sample may be a subset of labeled observations held out from the data used to train $g(\cdot)$, or it may be obtained through a separate validation exercise after $g(\cdot)$ is trained. In the next section, we formalize the assumptions governing the validation sample and show how to combine this information with the downstream observations and moment conditions to construct \emph{sharp} bounds on $\theta_0$ that account for the proxy-induced measurement error.

To summarize, in this two-stage empirical setting, the downstream researcher observes a sample drawn from the joint distribution of $(W,X)$ but lacks direct observations of the target variable $Z$. Consequently, the sample analogs of the moment conditions in \eqref{eq:downstream_model} cannot be directly evaluated. This motivate us to develop an approach that leverages the observed proxy $\hat{Z}$, alongside the joint distribution of $(Z, \hat{Z})$ recovered from the validation sample, to bound the moments and partially identify $\theta_0$. 

We conclude this section with several examples illustrating how  empirical applications map into the framework of downstream econometric model \eqref{eq:downstream_model} and the upstream prediction \eqref{eq:upstream_model}.

\begin{example}[Machine-learned measures as dependent variables]\label{ex:pm25}
Consider the empirical application in \textcite{sager_clean_2025}, which studies the effects of U.S.\ Clean Air Act standards on fine particulate matter (PM$_{2.5}$). For concreteness, we focus on their baseline specification (equation (2) in \textcite{sager_clean_2025}),
\begin{equation}\label{eq:pm25_reg}
	\Delta \mathit{PM}_i = \alpha_0 + \beta_0 \Delta \mathit{NA}_i + \varepsilon_i,
\end{equation}
where $\Delta \mathit{PM}_i$ denotes the change in true PM$_{2.5}$ concentration for census tract $i$ between a pre-treatment period and a post-treatment period, and $\Delta \mathit{NA}_i$ is a treatment indicator capturing the change in nonattainment status.\footnote{In \textcite{sager_clean_2025}, $\Delta \mathit{NA}_i=1$ for tracts that become subject to regulatory treatment from 2005 onward.} The true parameter of interest is $\theta_0=(\alpha_0,\beta_0)'$.

To connect \eqref{eq:pm25_reg} to the generic formulation in \eqref{eq:downstream_model}, suppose for simplicity that there are two periods $t\in\{0,1\}$ and write $\Delta \mathit{PM}_i = \mathit{PM}_{i,1}-\mathit{PM}_{i,0}$. Under the standard exogeneity assumption that $\expt[\varepsilon_i \mid \Delta \mathit{NA}_i] = 0$, the linear regression model \eqref{eq:pm25_reg} implies the population moment conditions
\begin{equation}\label{eq:pm25_moments}
\expt\!\left[
\begin{pmatrix}
1\\[2pt]
\Delta \mathit{NA}_i
\end{pmatrix}
\big\{(\mathit{PM}_{i,1}-\mathit{PM}_{i,0})-\alpha_0-\beta_0\Delta \mathit{NA}_i\big\}
\right]=0,
\end{equation}
which matches the structure of \eqref{eq:downstream_model} with the observed covariates $W_i=\Delta \mathit{NA}_i$ and the latent target variables given by the true PM$_{2.5}$ levels $Z_i = (\mathit{PM}_{i,0},\mathit{PM}_{i,1})'$.

In the empirical setting of \textcite{sager_clean_2025}, the downstream dataset contains the treatment indicator $\Delta \mathit{NA}_i$ but lacks direct observations of true PM$_{2.5}$ at the tract-by-year level. Instead, the analysis relies on high-resolution historical PM$_{2.5}$ estimates constructed by \textcite{meng_estimated_2019}, which combine information from chemical transport modeling, satellite remote sensing, and ground-based monitoring. Letting $\widehat{\mathit{PM}}_{i,t}$ denote the published estimate for tract $i$ in year $t$, this setting maps cleanly into our framework by defining the ML-generated proxy as $\hat{Z}_i=(\widehat{\mathit{PM}}_{i,0},\widehat{\mathit{PM}}_{i,1})'$, which serves as the upstream prediction for the unobserved vector $Z_i$.

By comparing their predicted PM$_{2.5}$ levels to actual ground-based monitor measurements, \textcite{meng_estimated_2019} report that their estimates have a mean bias of $-0.33~\mu\mathrm{g}/\mathrm{m}^3$ and a root mean squared error (RMSE) of $1.94~\mu\mathrm{g}/\mathrm{m}^3$. Given that typical ambient PM$_{2.5}$ concentrations range from $2~\mu\mathrm{g}/\mathrm{m}^3$ to $20~\mu\mathrm{g}/\mathrm{m}^3$, these predictions, while highly informative, are subject to non-negligible prediction errors. Neglecting these errors by naively treating the estimated PM$_{2.5}$ levels as the ground truth has a risk of inducing bias and invalidating downstream statistical inference. Our partial identification framework explicitly accounts for these errors by leveraging the reported performance metrics to construct valid bounds on the structural parameters.
\end{example}

\begin{example}[Machine-learned measures as regressors]\label{ex:remote}
	\hspace{-1em}\footnote{This example is adapted from the ``remote work and wage inequality'' application in \textcite{battaglia_inference_2025}.}
Remote work has risen markedly since the onset of the COVID-19 pandemic \textcite{barrero_why_2021,aksoy_working_2022}. An important empirical question is how remote-work arrangements affect posted wages. Consider an applied researcher estimating the linear regression model
\begin{equation}\label{eq:remote_wage_equation}
	Y_i = \beta_0 Z_i + \alpha_0^\top C_i + \varepsilon_i,
\end{equation}
where $i$ indexes a job posting, $Y_i$ is the posted log wage, $Z_i$ is a latent binary indicator for whether the job allows remote work, and $C_i$ is a vector of observed controls (including an intercept). Assuming standard exogeneity, $\expt[\varepsilon_i\mid Z_i,C_i]=0$, the model implies the population moment conditions
\begin{equation}\label{eq:remote_downstream_model}
	\expt\!\left[
	\begin{pmatrix}
		Z_i\\[2pt]
		C_i
	\end{pmatrix}
	\big(Y_i-\beta_0 Z_i-\alpha_0^\top C_i\big)
	\right]=0.
\end{equation}
In many job-posting datasets, $Y_i$ and $C_i$ are directly observed, whereas the true remote-work status $Z_i$ is not explicitly recorded and must be inferred from the unstructured text of the job description. This environment maps cleanly into the analytical framework of \eqref{eq:downstream_model}, with the observed variables $W_i=(Y_i,C_i^\top)^\top$, the latent regressor $Z_i$, the unstructured inputs $X_i$ given by the job-description text, and the structural parameter vector $\theta_0=(\beta_0,\alpha_0^\top)^\top$.

To infer $Z_i$ from $X_i$, \textcite{hansen_remote_2023} train a text classifier $g(\cdot)$ that maps the job-posting text $X_i$ into a predicted remote-work label $\hat Z_i=g(X_i)$. They obtain human annotations from Amazon Mechanical Turk for a subset of postings and use them to fine-tune a pre-trained language model (DistilBERT; \textcite{sanh_distilbert_2020}). Using an auxiliary validation sample of human-labeled postings, they report an out-of-sample classification error rate of $2\%$. As a benchmark, remote job postings account for approximately $28\%$ of all postings in their sample. 

While $\hat{Z}_i$ is a highly precise estimate, the residual prediction error remains a first-order econometric concern. In this context, the unstructured job-posting text $X_i$ need not be exogenous with respect to the wage shock $\varepsilon_i$. Because job descriptions naturally contain rich information about firm characteristics and job amenities, $X_i$ is highly likely to be correlated with unobserved determinants of wages. This implies that the ML proxy $\hat Z_i=g(X_i)$ could be endogenous, even if the true status $Z_i$ is strictly exogenous. Consequently, a naive regression that directly substitutes $\hat Z_i$ for $Z_i$ in \eqref{eq:remote_wage_equation} suffers from non-classical measurement error and endogeneity bias. 

Our partial identification approach would circumvent this concern. The moment conditions in \eqref{eq:remote_downstream_model} are stated in terms of the \emph{true} regressor $Z_i$. By leveraging an auxiliary validation sample or valid summary performance measures, our procedure disciplines the relationship between $Z_i$ and $\hat{Z}_i$ without imposing exogeneity of $\hat{Z}_i$.
 Finally, if the exogeneity of the true $Z_i$ itself is in doubt, one can readily replace \eqref{eq:remote_downstream_model} with moments that exploit valid instrumental variables for $Z_i$.
\end{example}

\begin{example}[Media slant]\label{ex:media_slant}
Consider a setting in the spirit of \textcite{groseclose_measure_2005} and \textcite{gentzkow_what_2010}, in which newspaper ideology affects its consumer demand. To fix ideas, suppose newspaper demand is modeled using the BLP framework of \textcite{berry_automobile_1995}. A market $t$ may be defined by time, location, or both. Let $\mathcal{J}_t$ be the set of newspapers availabe in market $t$, and let $j=0$ denote the outside option. 
The utility that potential reader $i$ derives from choosing newspaper $j$ in market $t$ is
\begin{align*}
	u_{ijt} &= \alpha_i Z_{jt} + \beta_i^\top C_{jt} + \xi_{jt} + \epsilon_{ijt}, \qquad j\in \mathcal{J}_t,\\
	u_{i0t} &= 0,
\end{align*}
where $Z_{jt}$ denotes the ideological position, or slant index, of newspaper $j$ in market $t$. We normalize $Z_{jt}$ so that $Z_{jt}=0$ corresponds to the most liberal position and $Z_{jt}=1$ to the most conservative position. The vector $C_{jt}$ collects other observed newspaper characteristics (including price), $\xi_{jt}$ is an unobserved product characteristic, and $\epsilon_{ijt}$ is an i.i.d.\ Type~I extreme value shock.

Let $F(\alpha,\beta;\theta_0)$ denote the distribution of heterogeneous tastes $(\alpha_i,\beta_i)$, indexed by the parameter vector $\theta_0$. Writing $Z_t=(Z_{jt})_{j\in\mathcal{J}_t}$, $C_t=(C_{jt})_{j\in\mathcal{J}_t}$, and $\xi_t=(\xi_{jt})_{j\in\mathcal{J}_t}$, the implied market share for product $j\in\mathcal{J}_t$ is
\[
	\sigma_j(Z_t,C_t,\xi_t;\theta_0)
	=
	\int
	\frac{\exp\!\big(\alpha Z_{jt} + \beta^\top C_{jt} + \xi_{jt}\big)}
	{1+\sum_{j'\in\mathcal{J}_t}\exp\!\big(\alpha Z_{j't} + \beta^\top C_{j't} + \xi_{j't}\big)}
	\, \ud F(\alpha,\beta;\theta_0).
\]
Let $S_t=(S_{jt})_{j\in\mathcal{J}_t}$ denote observed market shares, and let $\xi_t(S_t,Z_t,C_t;\theta_0)$ denote the unique vector solving the standard BLP inversion equations
\[
	S_{jt}=\sigma_j(Z_t,C_t,\xi_t;\theta_0),\qquad j\in\mathcal{J}_t.
\]
Suppose, as in BLP, that there exists a valid instrument $V_{jt}$. Then the demand system implies the moment conditions
\begin{equation}\label{eq:BLP_demand_newsppaper}
	\expt\!\left[V_{jt}\xi_{jt}(S_t,Z_t,C_t;\theta_0)\right]=0,
	\qquad j\in\mathcal{J}_t.
\end{equation}

The key departure from the standard BLP setup is that the ideology index $Z_t$ is not directly observed. Following \textcite{groseclose_measure_2005} and \textcite{gentzkow_what_2010}, however, one can construct a proxy from newspaper text. Let $X_{jt}$ denote the text content of newspaper $j$ in market $t$. Using text-based ML methods, we can train a predictor of ideological position from an external labeled sample. For example, \textcite{groseclose_measure_2005} use speeches by members of the U.S.\ Congress as training data, taking each politician's Americans for Democratic Action (ADA) score---a measure of left--right ideology based on congressional voting records---as the label and the speech transcript as the text input. Their text-analysis procedure yields a prediction rule $\tilde{g}$ that maps text into an ideology score. Applying this rule to newspaper text $X_{jt}$ produces a proxy $\hat{Z}_{jt}$ for each latent $Z_{jt}$. Collecting the newspaper texts into $X_t=(X_{jt})_{j\in\mathcal{J}_t}$ and the corresponding predictions into a vector, we obtain
\begin{equation}\label{eq:ideology_prediction}
	\hat{Z}_t = g(X_t) = \big(\tilde{g}(X_{jt}) : j\in \mathcal{J}_t\big).
\end{equation}

In this application, \eqref{eq:BLP_demand_newsppaper} and \eqref{eq:ideology_prediction} play the roles of \eqref{eq:downstream_model} and \eqref{eq:upstream_model}, respectively, with $(V_t, S_t, C_t)$ being the covariates only observed in the downstream dataset.  The only complication is that the available validation data typically contain observations on $(Z_{jt},\hat{Z}_{jt})$, rather than joint observations on the entire vector $(Z_t,\hat{Z}_t)$. Because the prediction rule is trained to recover each newspaper's ideology score separately, rather than the full vector of scores within a market, validation is available componentwise rather than at the market level. Section \ref{sec:extensions} shows that this case can be accommodated as a straightforward extension of the baseline framework.
\end{example}

\section{Identification with Validation Data}\label{sec:validation_identification}
In this section, we study partial identification of the structural parameter $\theta_0$ when an auxiliary validation sample is available. We assume that this sample contains joint observations of the latent target variable $Z$, the ML-generated proxy $\hat{Z}$, and, potentially, a low-dimensional vector of observable characteristics $h(X)$ extracted from the raw predictors $X$. This data environment arises naturally when an upstream data provider releases a held-out sample of $(Z,X)$ together with the trained prediction rule, allowing the downstream researcher to compute both $\hat{Z} = g(X)$ and $h(X)$ and hence to obtain validation data on $(Z, \hat{Z},h(X))$.

Let $S \coloneqq h(X)$ denote a low-dimensional attribute of $X$. Its role is to capture observable strata along which the relationship between $Z$ and $\hat{Z}$ may vary. For example, in Example~\ref{ex:pm25}, $S$ might indicate the geographic region or time period associated with the remote-sensing data; in Example~\ref{ex:remote}, it might indicate the language or industry of the job posting. At one extreme, $S$ may simply be a constant and therefore carry no additional information. More informative choices of $S$, however, could tighten the identified set by exploiting variation in the conditional joint distribution of $(Z,\hat{Z})$ across subpopulations indexed by $S$.

Throughout the identification analysis, we treat the prediction rule $g(\cdot)$ and the feature map $h(\cdot)$ as fixed and deterministic functions. This treatment is justified under the standard assumption that the upstream training sample used to train $g(\cdot)$ is independent of both the validation sample and the downstream empirical sample. Under this assumption, the population analysis can be interpreted conditional on the realized $(g,h)$, so that the only randomness in $\hat{Z}=g(X)$ and $S=h(X)$ arises from the randomness in $X$.

A useful feature of our setup is that the validation sample need not contain the downstream covariates $W$. This is empirically important. Upstream data collection is typically organized around obtaining labeled pairs $(Z,X)$ for prediction, whereas the covariates $W$ are application-specific and often unavailable to the upstream provider. Moreover, from the downstream researcher's perspective, obtaining ground-truth measurements of $Z$ in the downstream sample is often costly or infeasible. We therefore assume the validation data to contain $(Z,\hat{Z},S)$ but not $W$.

\subsection{Compatibility and the identified set}\label{subsec:validation_idset}
The downstream and validation samples are informative about the same latent target only if they can be viewed as marginals of a common population distribution. We formalize this requirement as a compatibility condition. Let $F$ denote the population distribution of the downstream observables $(W,\hat Z,S)$, where $\hat Z=g(X)$ and $S=h(X)$ are computed using the fixed maps $(g,h)$. Let $G$ denote the population distribution of the validation observables $(Z,\hat Z,S)$.

\begin{assumption}[Compatibility of Marginal Distributions]\label{assu:consistent_validation}
	Let $H_0$ denote the true joint distribution of $(W,Z,\hat Z,S)$. Assume that (\emph{i}) the marginal distribution of $(W,\hat Z,S)$ under $H_0$ is $F$, and (\emph{ii}) the marginal distribution of $(Z,\hat Z,S)$ under $H_0$ is $G$.
\end{assumption}

Assumption~\ref{assu:consistent_validation} requires the downstream and validation samples to agree on the overlap variables $(\hat Z,S)$. In practice, however, a validation sample constructed as a held-out subset of an upstream training dataset may exhibit covariate shift relative to the downstream population of interest, in which case the implied distributions of $(\hat Z,S)$ need not coincide across the two samples. When the support of the validation sample covers that of the downstream sample, such discrepancies can often be addressed using standard reweighting methods, such as inverse probability weighting, to align the distribution of $X$ and, hence, that of $(\hat Z,S)$ in the validation sample with that in the downstream population. We abstract from this issue here to focus on the core identification problem, and leave a formal treatment of reweighting to future work.

Under Assumption~\ref{assu:consistent_validation}, the identified set for $\theta_0$ is defined as the set of parameters consistent with the observed marginal distributions and the population moment conditions. Formally, let $\mathcal{H}(F,G)$ denote the set of all joint distributions over $(W,Z,\hat Z,S)$ whose marginals on $(W,\hat Z,S)$ and $(Z,\hat Z,S)$ are $F$ and $G$, respectively.

\begin{definition}[Identified Set with Validation Data]\label{def:id_set_validation}
	The identified set $\Theta_I(F, G)$ is the set of parameters $\theta \in \Theta$ for which there exists a joint distribution $H \in \mathcal{H}(F, G)$ such that the structural moment conditions are satisfied: $\expt_{H}[q(W, Z;\theta)] = 0$, i.e.,
	\[
	\Theta_I(F,G)
	\coloneqq
	\left\{
	\theta\in\Theta :
	\exists\, H\in\mathcal{H}(F,G)
	\text{ such that }
	\expt_{H}\!\left[q(W,Z;\theta)\right]=0
	\right\}.
	\]
\end{definition}

\begin{remark}[Relation to Data Combination]\label{rem:data_combination}
	If the full predictor vector $X$ were low-dimensional and observed in both samples, one could instead link the downstream and validation samples directly through $X$, as in the classical data combination literature \autocite[e.g.][]{cross_regressions_2002,ridder_chapter_2007}. Formally, one would impose compatibility at the level of the full distributions of $(W,X)$ and $(Z,X)$. The identified set implied by this stronger compatibility condition would generally be weakly tighter than the set in Definition~\ref{def:id_set_validation}, as it exploits the full information contained in $X$. However, when $X$ is high-dimensional or unstructured (e.g., text or images), nonparametrically modeling and transporting the joint distribution of $(W, X)$ and $(Z, W)$ across samples is practically impossible due to the curse of dimensionality. Our approach circumvents this by treating the $(g,h)$ as dimension-reduction devices, adapting the data combination perspective to settings where only low-dimensional summary features are computationally tractable.
\end{remark}

\subsection{an unconditional optimal transport characterization}\label{sec:uncond_OT_characterization}
Characterizing the identified set $\Theta_I(F, G)$ computationally is a non-trivial task. Recently, \textcite{fan_partial_2025} proposed a sharp identification method that relies on \emph{conditional} optimal transport. However, when the proxy $\hat{Z}$ or the stratifying variables $S$ contain continuous components, their approach requires solving a continuum of optimal transport problems---specifically, one for every realization of $(\hat{Z}, S)$ in the joint support. Solving those conditional optimal transport problems or approximating them by fine discretization could be burdensome in practice.

We develop an alternative identification method that reduces the characterization to an \emph{unconditional} optimal transport problem while preserving sharpness. The key device, adapted from \textcite{li_finite_2025}, is to introduce auxiliary copies of the overlap variables and move the exact-matching restrictions from the coupling class into moment conditions. Specifically, let $(\hat{Z}',S')$ be auxiliary random vectors and consider the augmented random vector
\begin{equation*}
	(W,Z,\hat{Z},S,\hat{Z}',S').
\end{equation*}
For the purpose of this representation, we view the downstream sample as providing observations on $(W,\hat{Z},S)$ with distribution $F$, and the validation sample as providing observations on $(Z,\hat{Z}',S')$ with distribution $G$. Let $\mathcal{H}'(F,G)$ denote the \emph{unconditional} Fr\'{e}chet class of all joint distributions $H'$ over $(W,Z,\hat{Z},S,\hat{Z}',S')$ whose marginal distribution on $(W,\hat{Z},S)$ is $F$ and whose marginal distribution on $(Z,\hat{Z}',S')$ is $G$.

The advantage of this construction is that the almost-sure restrictions $\hat{Z}=\hat{Z}'$ and $S=S'$, which are implicit in $\mathcal{H}(F,G)$ and give rise to a computationally demanding conditional coupling problem, are no longer imposed directly in $\mathcal{H}'(F,G)$. Instead, we reintroduce these exact-matching requirements through additional moment restrictions. This yields the following equivalent representation of the identified set.

\begin{definition}[Equivalent Representation of Identified Set]\label{def:alias_id_set}
Define $\Theta'_I(F, G)$ as the set of parameters $\theta\in \Theta$ for which there exists an augmented joint distribution $H'\in \mathcal{H}'(F, G)$ such that 
\begin{align}
	\expt_{H'}[q(W, Z; \theta)] &= 0, \label{eq:alias_structural}\\
	\expt_{H'}\big[|\hat{Z}_j - \hat{Z}'_j|\big] &= 0, \quad \text{for all } j=1,\dots,d_{\hat{Z}}, \label{eq:alias_Z}\\
	\expt_{H'}\big[|S_k - S'_k | \big] &= 0, \quad \text{for all } k=1,\dots,d_S, \label{eq:alias_S}
\end{align}
where $d_{\hat{Z}}$ and $d_S$ are the dimensions of $\hat{Z}$ and $S$, respectively.
\end{definition}

Because equations \eqref{eq:alias_Z} and \eqref{eq:alias_S} force $\hat{Z}' = \hat{Z}$ and $S' = S$ almost surely under $H'$, any valid distribution in $\mathcal{H}'(F, G)$ that satisfies \eqref{eq:alias_Z} and \eqref{eq:alias_S} collapses to an element in $\mathcal{H}(F, G)$, establishing that $\Theta'_I(F, G) = \Theta_I(F, G)$. While theoretically equivalent, this augmented representation successfully isolates the binding coupling constraints into the moment restrictions, allowing us to characterize the identified set using purely unconditional optimal transport.

To state this formally, let $\tilde{q}(W, Z, \hat{Z}, S, \hat{Z}', S'; \theta)$ denote the augmented moment vector stacking the structural moments and the squared proxy differences:
\begin{equation}\label{eq:stacked_q}
	\tilde{q}(W, Z, \hat{Z}, S, \hat{Z}', S';\theta) 
	\coloneqq 
	\begin{pmatrix}
		q(W, Z;\theta) \\[4pt]
		|\hat{Z}_1 - \hat{Z}'_1| \\
		\vdots \\
		|\hat{Z}_{d_{\hat{Z}}} - \hat{Z}'_{d_{\hat{Z}}}| \\[4pt]
		|S_1 - S'_1| \\
		\vdots \\
		|S_{d_S} - S'_{d_S}|
	\end{pmatrix}.
\end{equation}
The system of equations \eqref{eq:alias_structural}--\eqref{eq:alias_S} can be written compactly as the single vector moment condition $\expt_{H'}[\tilde{q}(\cdot; \theta)] = 0$. 


Let $d_q$ be the dimension of $q$, and let $\mathbb{B} \subset \real^{d_q + d_{\hat{Z}} + d_S}$ be any compact convex set containing an open neighborhood of the origin (for example, the closed unit ball $\mathbb{B} = \{\lambda : \norm{\lambda} \le 1\}$). By standard properties of moments, $\expt_{H'}[\tilde{q}(\cdot; \theta)] = 0$ holds if and only if
\begin{equation*}
	\sup_{\lambda \in \mathbb{B}} \, \expt_{H'}\big[\lambda^{\top}\tilde{q}(W, Z, \hat{Z}, S, \hat{Z}', S'; \theta)\big] = 0.
\end{equation*}
Therefore, for any valid parameter $\theta \in \Theta'_I(F, G)$, there must exist an $H' \in \mathcal{H}'(F, G)$ such that the supremum over $\lambda$ is exactly zero. Consequently, taking the minimum over all joint distributions in $\mathcal{H}'(F, G)$ yields:
\begin{equation*}
	\min_{H'\in \mathcal{H}'(F, G)} \sup_{\lambda \in \mathbb{B}} \, \expt_{H'}\big[\lambda^{\top}\tilde{q}(W, Z, \hat{Z}, S, \hat{Z}', S'; \theta)\big] \le 0.
\end{equation*}
Because the supremum of a minimum is always bounded above by the minimum of a supremum (i.e., the weak duality), swapping the order of the operators therefore establishes a necessary condition for identification:
\begin{equation*}
	\sup_{\lambda \in \mathbb{B}} \min_{H'\in \mathcal{H}'(F, G)} \expt_{H'}\big[\lambda^{\top}\tilde{q}(W, Z, \hat{Z}, S, \hat{Z}', S'; \theta)\big] \le 0.
\end{equation*}
Our main identification theorem establishes that this relationship is, in fact, not only necessary but also sufficient under mild regularity conditions.  

Let $\mathcal{X}_d$ and $\mathcal{X}_v$ denote the supports of the downstream observables $(W, \hat{Z}, S)$ and the validation observables $(Z, \hat{Z}, S)$, respectively. Let $\mathcal{W}$ and $\mathcal{Z}$ denote the support of $W$ and $Z$ respectively.

\begin{assumption}[Regularity conditions]\label{assu:regularity_minimax}
	For any parameter $\theta \in \Theta$, assume that
	\begin{enumerate}
		\item\label{enu:polish} The supports $\mathcal{X}_d$ and $\mathcal{X}_v$ are closed subsets of finite-dimensional Euclidean spaces.
		\item\label{enu:continous} The structural moment function $q(w, z; \theta)$ is continuous jointly in $(w, z)$.
		\item\label{enu:fintie_variance}  $\hat{Z}$ and $S$ have finite first-order moments under both distributions $F$ and $G$. That is, $\expt_F\big[\|\hat{Z}\| + \norm{S}\big] < \infty$ and $\expt_G\big[\|\hat{Z}\| + \norm{S}\big] < \infty$. 
		\item \label{enu:envelop}There exist non-negative continuous functions $b_1 \in \mathcal{L}^1(F)$ and $b_2 \in \mathcal{L}^1(G)$ such that the structural moment function is bounded by an additively separable envelope:
		\[
			\|q(w, z; \theta)\| \le b_1(w) + b_2(z) \quad \text{for all } w \in \mathcal{W} \text{ and } z \in \mathcal{Z}.
		\]
	\end{enumerate}
\end{assumption}
Assumption \ref{assu:regularity_minimax}\ref{enu:polish} ensures $\mathcal{X}_d$ and $\mathcal{X}_v$ are polish spaces so that 
probability measures $F$ and $G$ are Radon measures. Assumption \ref{assu:regularity_minimax}\ref{enu:continous} imposes continuity, which automatically holds when $(w, z)$ are discrete. Finally, Assumptions~\ref{assu:regularity_minimax}\ref{enu:fintie_variance} and \ref{enu:envelop} provide the necessary envelope bounds to guarantee that the expected moment conditions are finite and continuous with respect to the weak topology on the space of joint distributions.

Under these regularity conditions, we can establish the following result, the proof of which is in Appendix \ref{sec:proof_id_validation}.

\begin{theorem}\label{thm:id_validation}
	Suppose Assumptions~\ref{assu:consistent_validation} and~\ref{assu:regularity_minimax} hold. Then $\theta \in \Theta_I(F,G)$ if and only if
	\begin{equation}\label{eq:maxmin_characterization}
		\max_{\lambda \in \mathbb{B}} \min_{H'\in \mathcal{H}'(F, G)} \expt_{H'}\big[\lambda^{\top}\tilde{q}(W, Z, \hat{Z}, S, \hat{Z}', S'; \theta)\big] \le 0.
	\end{equation}
	Moreover, the maximum and minimum in \eqref{eq:maxmin_characterization} are both attained.
\end{theorem}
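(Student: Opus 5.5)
The plan is to prove the equivalence through a minimax theorem applied to the bilinear functional
\[
L(\lambda,H') \coloneqq \expt_{H'}\big[\lambda^\top \tilde q(W,Z,\hat Z,S,\hat Z',S';\theta)\big]
\]
on $\mathbb{B}\times\mathcal{H}'(F,G)$. First, I would use Definition~\ref{def:alias_id_set} and the argument given after it to replace $\Theta_I(F,G)$ by $\Theta'_I(F,G)$. Given $H\in\mathcal{H}(F,G)$, set $\hat Z'=\hat Z$ and $S'=S$ to obtain $H'\in\mathcal{H}'(F,G)$. Conversely, if $H'$ satisfies \eqref{eq:alias_Z}--\eqref{eq:alias_S}, then $\hat Z=\hat Z'$ and $S=S'$ almost surely. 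The marginal of $(W,Z,\hat Z,S)$ under $H'$ then has $(W,\hat Z,S)\sim F$ and $(Z,\hat Z,S)=(Z,\hat Z',S')\sim G$, so it lies in $\mathcal{H}(F,G)$. Hence $\theta\in\Theta_I$ if and only if $\min_{H'}\max_{\lambda\in\mathbb{B}} L(\lambda,H')\le 0$.

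For that last reduction, note two things. Since $\mathbb{B}$ contains a neighborhood of zero, $\max_\lambda \lambda^\top m\ge 0$, with equality if and only if $m=0$. Also, $\expt_{H'}[\tilde q]=0$ is achievable exactly when this min–max value is at most $0$, provided the minimum over $H'$ is attained. Necessity of \eqref{eq:maxmin_characterization} is then the weak-duality argument already given in the text.

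For sufficiency, I would verify the hypotheses of a Sion-type minimax theorem, with $\mathcal{H}'(F,G)$ as the compact side. Convexity of both sets is immediate, and $L$ is linear in each argument. The compactness and continuity arguments are as follows.

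\begin{enumerate}
\item \textbf{Compactness.} $\mathcal{H}'(F,G)$ is a Fréchet class with two fixed marginals on Polish spaces (Assumption~\ref{assu:regularity_minimax}\ref{enu:polish}). It is therefore tight, since both marginals are tight, and weakly closed. By Prokhorov's theorem it is weakly compact.
\item \textbf{Continuity in $H'$.} I need $H'\mapsto L(\lambda,H')$ to be weakly continuous, or at least lower semicontinuous. Each component of $\tilde q$ is continuous by \ref{enu:continous}. It is dominated by an additively separable envelope: $b_1(w)+b_2(z)$ for $q$ by \ref{enu:envelop}, and $\|\hat z\|+\|\hat z'\|+\|s\|+\|s'\|$ for the absolute differences. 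Each piece of the envelope depends only on one marginal and is integrable by \ref{enu:fintie_variance}. Its expectation is therefore constant across $\mathcal{H}'(F,G)$, which gives uniform integrability along any weakly convergent sequence in the class. Concretely, I would truncate the continuous envelope at level $M$ and bound the tail uniformly using the fixed marginals. Weak convergence then yields convergence of $\expt_{H'}[\lambda^\top\tilde q]$.
\item \textbf{Continuity in $\lambda$.} Continuity of $L$ in $\lambda$ is trivial by linearity and finiteness of the moments.
\end{enumerate}

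With these in hand, Sion's theorem gives
\[
\max_\lambda\min_{H'} L=\min_{H'}\max_\lambda L .
\]
The inner minimum over the compact set $\mathcal{H}'(F,G)$ is attained by continuity. The function $\lambda\mapsto\min_{H'}L(\lambda,H')$ is an infimum of linear functions, hence concave and upper semicontinuous, so it attains its maximum on the compact set $\mathbb{B}$. The outer minimum $\min_{H'}\max_\lambda L$ is attained because $\max_\lambda L(\lambda,\cdot)$ is a supremum of continuous functions and is therefore lower semicontinuous on a compact set. Combining this with the first paragraph gives the equivalence, and attainment of both the max and the min as claimed.

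I expect the main obstacle to be the weak continuity step, since $\tilde q$ is unbounded. The key point is that the separable envelope has the same integral under every coupling in the class. I would formalize this using the standard lemma on convergence of integrals of continuous functions dominated by $\sum_i b_i(x_i)$ over couplings with fixed marginals, as in Villani's treatment of optimal transport costs.
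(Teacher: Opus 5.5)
Your proposal is correct and takes essentially the same route as the paper's proof. Both arguments reduce $\Theta_I$ to $\Theta'_I$ and then to $\min_{H'}\max_{\lambda}\expt_{H'}[\lambda^\top\tilde q]\le 0$. They obtain weak compactness of $\mathcal{H}'(F,G)$ via Prokhorov, and weak continuity of $H'\mapsto\expt_{H'}[\lambda^\top\tilde q]$ from the additively separable integrable envelope. Both then conclude with Sion's minimax theorem.

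There are two minor differences. You check continuity by a direct truncation and uniform-integrability argument, whereas the paper cites Villani's lower-semicontinuity lemma and applies it to $\pm\lambda$. You also make explicit that the reduction step needs the outer minimum to be attained, which the paper leaves implicit until it invokes Sion.
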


Theorem \ref{thm:id_validation} provides a sharp characterization of the identified set $\Theta_I(F, G)$. For a fixed pair $(\theta,\lambda)$, the inner minimization problem in Theorem~\ref{thm:id_validation} is a standard unconditional optimal transport problem between the two observed marginal distributions $F$ and $G$, with the cost function given by
\begin{equation}\label{eq:cost_function}
	c_{\theta,\lambda}(w,\hat{z},s; z,\hat{z}',s')
	\coloneqq \lambda^{\top}\tilde{q}(w,z,\hat{z},s,\hat{z}',s';\theta).
\end{equation}
As we demonstrate in Section~\ref{sec:validation_inference}, by exploiting the Kantorovich duality of this optimal transport problem, the resulting max-min characterization can be implemented as a convex optimization problem. This greatly simplifies computation and provides a tractable foundation for estimation and inference.

The identification analysis we developed here changes the role of the proxy $\hat{Z}$. Rather than treating $\hat{Z}$ primarily as a plug-in substitute for $Z$, the theorem treats it as a bridge linking the downstream and validation samples. Accordingly, the analysis shifts attention away from the approximation error $Z-\hat{Z}$ itself and toward the compatibility of the two samples through the common variables $(\hat{Z},S)$. This perspective is what allows our results to avoid structural assumptions on the prediction error and to accommodate highly general upstream prediction rules. For the applied researcher, the implication is straightforward: one need not restrict attention to ML methods for which a full theoretical analysis is available, but may instead choose the method that is most effective in the empirical setting at hand.

This perspective also suggests a different criterion for evaluating upstream ML methods. In standard ML practice, predictors are typically judged by how accurately they approximate $Z$. From the standpoint of downstream identification, however, a useful predictor is one that preserves as much of the information in $X$ about $Z$ as possible. In the extreme case in which the conditional distribution of $Z$ given $X$ depends on $X$ only through $(\hat{Z},S)$, the pair $(\hat{Z},S)$ is as informative as $X$ itself for the downstream problem: replacing $X$ with $(\hat{Z},S)$ entails no loss of information about $Z$. A predictor $\hat{Z}$ that closely approximates $Z$ is therefore sufficient, but not necessary, for $\hat{Z}$ to serve as an effective dimension-reduction device.

An immediate implication of this viewpoint is that $Z$ and $\hat{Z}$ need not take values in the same space. For example, when $Z$ is a binary label, $\hat{Z}$ may be the predicted probability that $Z=1$ rather than a hard binary classification. Likewise, when $Z$ is a multinomial group label, $\hat{Z}$ may record the model's top two predicted classes, or even a vector of class scores, rather than only its single best guess. These richer forms of $\hat{Z}$ retain information that would be lost if the proxy were collapsed to a plug-in estimate of $Z$, and Theorem~\ref{thm:id_validation} shows how such information can be incorporated into the downstream analysis.

More generally, viewing $\hat{Z}$ as a dimension-reduction device suggests a natural way to combine predictions from multiple ML methods. Suppose, for example, that two ML prediction rules produce proxies $\hat{Z}_1$ and $\hat{Z}_2$ respectively. One can then form the proxy $\hat{Z}$ as $\hat{Z} = (\hat{Z}_1,\hat{Z}_2)$, and carry out the analysis based on this combined proxy. This can be accommodated  because $\hat{Z}$ need not reside in the same space as $Z$. In this way, our framework can exploit complementary information from multiple prediction methods without requiring the researcher to commit to a single proxy \emph{ex ante}.

\begin{remark}
	More broadly, Theorem~\ref{thm:id_validation} delivers a new identification result for classical data combination problems based on unconditional optimal transport. It provides a sharp and computationally tractable characterization for general moment models in which the variables entering the moment restrictions are not jointly observed in a single dataset, but instead are distributed across two distinct samples. As such, the result is of independent theoretical interest and contributes more generally to the data combination literature. In particular, it complements recent work that uses optimal transport methods to study identification and partial identification in related data combination settings. See, for example, \textcite{hwang_bounding_2025}, \textcite{dhaultfoeuille_partially_2025} and \textcite{fan_partial_2025}.
\end{remark}

\section{Inference with Validation Data}\label{sec:validation_inference}

This section develops inference for testing whether a candidate parameter value $\theta$ belongs to the identified set $\Theta_I(F,G)$, building on the characterization in Theorem~\ref{thm:id_validation}. This inference problem is nonstandard for two reasons. First, the criterion is defined through an optimal-transport problem, and its sample analog exhibits nonstandard asymptotic behavior. Second, the identified-set characterization involves an outer max--min operation, which further complicates inference.

We address these challenges in two steps. First, we rewrite the max--min problem in \eqref{eq:maxmin_characterization} using the Kantorovich dual representation, so that both $\lambda$ and the dual functions are jointly determined by a convex optimization problem. We then approximate the infinite-dimensional class of dual functions by sieve spaces, thereby reducing the problem to a finite-dimensional convex program. Details are provided in Subsection~\ref{sec:convex_prog_sieve_approx}. Building on this representation, we develop a testing procedure based on sample splitting and cross-fitting; see Subsection~\ref{sec:cross_fit_inference}. A key practical advantage of this approach is that the test can be calibrated using an asymptotically pivotal upper bound on the distribution of the test statistic, so that valid critical values can be obtained from standard reference distributions without bootstrap or other simulation-based methods.

\subsection{Convex programming and sieve approximation}\label{sec:convex_prog_sieve_approx}
Recall that $\mathcal{X}_d$ and $\mathcal{X}_v$ denote the supports of $(W,\hat{Z},S)$ and $(Z,\hat{Z}',S')$, respectively. Let $\mathcal{X}_{w}$ and $\mathcal{X}_{z}$ denote the suport of $W$ and $Z$ respectively. And, let $\mathcal{C}(\mathcal{X}_v)$ denote the class of real-valued continuous functions on $\mathcal{X}_v$ that are integrable with respect to $G$.  The following lemma shows that, under Assumption~\ref{assu:regularity_minimax}, the max--min criterion in \eqref{eq:maxmin_characterization} admits a Kantorovich dual representation.

\begin{lemma}[Kantorovich Duality]\label{lem:kantorovich_duality}
Suppose Assumption~\ref{assu:regularity_minimax} holds, and fix an arbitrary $\theta\in\Theta$. Then
\begin{equation}\label{eq:kantorovich_duality_application}
\max_{\lambda \in \mathbb{B}} \min_{H'\in \mathcal{H}'(F, G)} \expt_{H'}\big[\lambda^{\top}\tilde{q}(W, Z, \hat{Z}, S, \hat{Z}', S'; \theta)\big]
= \mathscr D(F,G;\theta),
\end{equation}
where
\begin{multline}\label{eq:convex_characterization}
\mathscr D(F,G;\theta)
\coloneqq
\sup_{\lambda\in\mathbb B,\ \psi\in \mathcal{C}(\mathcal{X}_v)} \Bigg\{
\expt_F\!\left[
\inf_{(z,\hat z',s')\in\mathcal X_v}
\left\{
\lambda^{\top}\tilde q(W,z,\hat Z,S,\hat z',s';\theta)
-
\psi(z,\hat z',s')
\right\}
\right] \\
+
\expt_G\!\left[\psi(Z,\hat Z',S')\right]
\Bigg\}.
\end{multline}
Moreover, if moment function $q$ is uniformly continuous in $\mathcal{X}_w\times \mathcal{X}_z$, then the supremum in \eqref{eq:convex_characterization} is attained at some $\lambda^*\in \mathbb{B}$ and uniform continuous $\psi^*\in \mathcal{C}(\mathcal{X}_v)$.
\end{lemma}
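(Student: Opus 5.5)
The argument is a pointwise application of Kantorovich duality for each fixed $\lambda$, followed by exchanging the order of the two suprema.

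\textbf{Step 1: the inner transport problem for fixed $\lambda$.} Fix $\theta$ and $\lambda\in\mathbb B$. The inner problem $\min_{H'\in\mathcal H'(F,G)}\expt_{H'}[c_{\theta,\lambda}]$ is an optimal transport problem between $F$ on $\mathcal X_d$ and $G$ on $\mathcal X_v$. Its cost $c_{\theta,\lambda}$ is given in \eqref{eq:cost_function}.

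We check the hypotheses of the general Kantorovich duality theorem for continuous costs on Polish spaces (Villani 2009, Thm.~5.10), as in Theorem~\ref{thm:id_validation}:
\begin{itemize}
\item By Assumption~\ref{assu:regularity_minimax}\ref{enu:polish}, $\mathcal X_d$ and $\mathcal X_v$ are Polish.
\item By Assumption~\ref{assu:regularity_minimax}\ref{enu:continous}, $c_{\theta,\lambda}$ is continuous, since $q$ is continuous and $|\cdot|$ is continuous.
\item Since $\mathbb B$ is bounded, $|c_{\theta,\lambda}|\le C\{b_1(w)+\|\hat z\|+\|s\|\}+C\{b_2(z)+\|\hat z'\|+\|s'\|\}$. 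This is an additively separable integrable envelope $a(x_d)+b(x_v)$ with $a\in\mathcal L^1(F)$ and $b\in\mathcal L^1(G)$, by Assumption~\ref{assu:regularity_minimax}\ref{enu:fintie_variance}--\ref{enu:envelop}.
\end{itemize}

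Under these conditions there is no duality gap, and the primal minimum is attained. The dual may be written in the one-function ``$c$-transform'' form
\[
\min_{H'}\expt_{H'}[c_{\theta,\lambda}]=\sup_{\psi}\Big\{\expt_F[\psi^{c}(W,\hat Z,S)]+\expt_G[\psi(Z,\hat Z',S')]\Big\},
\qquad
\psi^{c}(x_d)=\inf_{x_v\in\mathcal X_v}\{c_{\theta,\lambda}(x_d,x_v)-\psi(x_v)\}.
\]
This uses the standard reduction that replaces a dual pair $(\varphi,\psi)$ by $(\psi^c,\psi)$, which only increases the objective.

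The step needing care is restricting the supremum to continuous $G$-integrable $\psi$, i.e.\ to $\mathcal C(\mathcal X_v)$:
\begin{itemize}
\item For the ``$\ge$'' direction, note that for any $\psi\in\mathcal C(\mathcal X_v)$ the pair $(\psi^c,\psi)$ is dual-feasible. Hence weak duality gives that each such value is at most the primal minimum.
\item For the ``$\le$'' direction, use that Villani's theorem allows the supremum to be taken over bounded continuous (even $C_b$) pairs. Bounded continuous functions are $G$-integrable, so the restricted supremum loses nothing.
\end{itemize}
We also verify measurability of $\psi^c$ and that $\expt_F[\psi^c]$ is well defined (possibly $-\infty$). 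Measurability holds because $\psi^c$ is an infimum of continuous functions over a separable space, hence upper semicontinuous. The expectation is well defined because $\psi^c\le c(x_d,x_v^0)-\psi(x_v^0)$ for a fixed point $x_v^0$, and the right side is $F$-integrable from above by the envelope.

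\textbf{Step 2: exchange the suprema.} Taking $\max_{\lambda\in\mathbb B}$ on both sides and merging $\sup_\lambda\sup_\psi$ into a joint supremum gives \eqref{eq:kantorovich_duality_application}. That the outer max over $\lambda$ is attained is already part of Theorem~\ref{thm:id_validation}. It also follows directly: the value is a minimum over $H'$ of functions linear in $\lambda$, so it is concave and upper semicontinuous on compact $\mathbb B$.

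\textbf{Step 3: attainment of $(\lambda^*,\psi^*)$ under uniform continuity.}
\begin{enumerate}
\item Take $\lambda^*$ from Step 2. It then suffices to find an optimal $\psi$ for the fixed cost $c^*=c_{\theta,\lambda^*}$.
\item Existence of optimal dual potentials for continuous costs with integrable separable bounds is also part of Villani Thm.~5.10. It yields a $c$-concave pair $(\varphi^*,\psi^*)$ with $\psi^*=\varphi^{*c}$ and $\varphi^*=\psi^{*c}$, where $\varphi^{*c}(x_v)=\inf_{x_d}\{c^*(x_d,x_v)-\varphi^*(x_d)\}$. This $\psi^*$ is a priori only measurable and integrable.
\item When $q$ is uniformly continuous, $c^*$ is uniformly continuous in $x_v$ uniformly over $x_d$: the terms $|\hat z_j-\hat z'_j|$ and $|s_k-s'_k|$ are $1$-Lipschitz, and $\lambda^*$ is bounded. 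So $c^*$ admits a common modulus of continuity $\omega$ in the $x_v$ argument.
\item A $c$-transform $\varphi^{*c}$ is an infimum of functions $x_v\mapsto c^*(x_d,x_v)-\varphi^*(x_d)$ sharing this modulus $\omega$. Since $\varphi^{*c}$ is finite $G$-a.e., it is finite everywhere and inherits $\omega$. Hence $\psi^*$ is uniformly continuous, lies in $\mathcal C(\mathcal X_v)$, and attains the supremum.
\end{enumerate}

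\textbf{Main obstacle.} I expect the main obstacle to be in Step 3(4): showing that $\psi^*$ is finite everywhere rather than merely a.e. The plan is to argue that if $\psi^*(x_v^0)=-\infty$ at one point, the uniform modulus forces $\psi^*=-\infty$ on all of $\mathcal X_v$, contradicting integrability. Similarly, $+\infty$ values are excluded because $\psi^*$ is bounded above by $c^*(x_d^0,\cdot)-\varphi^*(x_d^0)$ at any $x_d^0$ where $\varphi^*$ is finite.
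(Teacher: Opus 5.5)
Your proposal is correct and follows essentially the same route as the paper. Both proofs apply Villani's Theorem 5.10 for each fixed $\lambda$, reduce to the one-function $c$-transform form, and obtain attainment from an optimal potential written as a $c$-transform (your $\psi^*=\varphi^{*c}$ plays the role of the paper's $\psi^{cc}$), which inherits the common modulus of continuity of $c_{\theta,\lambda^*}$ in $x_v$. Your explicit check that $\psi^*$ is finite everywhere is more careful than the paper, which glosses over this point. One correction: $\psi^*(x_v^0)=-\infty$ forces $-\infty$ only on $\mathcal X_v\cap\{x_v:\|x_v-x_v^0\|\le\delta\}$ with $\omega(\delta)<\infty$, not necessarily on all of $\mathcal X_v$ when it is disconnected. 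That local statement is already enough, since this set has positive $G$-mass ($x_v^0$ lies in the support of $G$), which contradicts $\psi^*\in\mathcal L^1(G)$.
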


Combining Lemma~\ref{lem:kantorovich_duality} with Theorem~\ref{thm:id_validation} yields the following equivalent characterization of the identified set:
\[
\theta\in\Theta_I(F,G)
\qquad\text{if and only if}\qquad
\mathscr D(F,G;\theta)\le 0.
\]
Compared with \eqref{eq:maxmin_characterization}, the dual representation in \eqref{eq:convex_characterization} is computationally more convenient, provided that the inner minimization over $\mathcal X_v$ can be evaluated efficiently; see Remark~\ref{rmk:computation_dual}. In particular, it transforms the original max--min problem into a joint maximization over $(\lambda,\psi)$. Moreover, the objective function in \eqref{eq:convex_characterization} is jointly concave in $(\lambda,\psi)$, since the term inside the $\inf\{\dots\}$ is affine in $(\lambda,\psi)$ for each $(z,\hat z',s')$, and the pointwise infimum of affine functions is concave. 

To obtain a computationally implementable finite-dimensional approximation, we replace the infinite-dimensional class $\mathcal{C}(\mathcal{X}_v)$ with a sieve space. Let $\mathcal{S}_K(\mathcal{X}_v)$ be a linear sieve space spanned by $K$ known basis functions:
\[
\mathcal{S}_K(\mathcal{X}_v)
\coloneqq
\left\{
\psi:\ \psi(z,\hat z',s') = \beta^{\top}\varphi(z,\hat z',s')
\text{ for some } \beta\in\mathbb{R}^K
\right\},
\]
where $\varphi(z,\hat z',s') \coloneqq \big(\varphi_1(z,\hat z',s'),\dots,\varphi_K(z,\hat z',s')\big)^{\top}$ is a vector of known basis functions. Restricting the dual function $\psi$ in \eqref{eq:convex_characterization} to this finite-dimensional sieve space yields the sieve approximation
\begin{multline}\label{eq:sieve_dual_value}
\mathscr D_K(F,G;\theta)
\coloneqq
\sup_{\lambda\in\mathbb B,\ \beta\in\mathbb R^K}
\Bigg\{
\expt_F\!\left[
\inf_{(z,\hat z',s')\in\mathcal X_v}
\left\{
\lambda^{\top}\tilde q(W,z,\hat Z,S,\hat z',s';\theta)
-
\beta^{\top}\varphi(z,\hat z',s')
\right\}
\right] \\
+
\expt_G\!\left[\beta^{\top}\varphi(Z,\hat Z',S')\right]
\Bigg\}.
\end{multline}
Equation \eqref{eq:sieve_dual_value} reduces the problem to a finite-dimensional concave maximization problem over $(\lambda,\beta)$, which is the key to the computational tractability of our inference procedure. The next lemma establishes the relation between $\mathscr{D}(F, G;\theta)$ and $\mathscr{D}_K(F, G;\theta)$.

\begin{assumption}\label{assu:sieve_approximation}
Assume the sieve space $\mathcal{S}_K(\mathcal{X}_v)$ can approximate any continuous function on $\mathcal{X}_v$ arbitrarily well in $\mathcal{L}^{\infty}$ norm as $K\to \infty$. That is, for any $\psi^* \in \mathcal{X}_v$, 
\begin{equation*}
\lim_{K\to \infty} \inf_{\psi\in \mathcal{S}_K(\mathcal{X}_v)} \norm{\psi^* - \psi}_{\infty} = 0.
\end{equation*}
\end{assumption}

\begin{lemma}[Sieve approximation error]\label{lem:sieve_approx}
Assume that each basis function is continuous and integrable with respect to $G$, and that Assumptions~\ref{assu:regularity_minimax} hold. Fix an arbitrary $\theta\in \Theta$. Then,
\begin{equation}\label{eq:sieve_validity}
\mathscr D_K(F,G;\theta)\le \mathscr D(F,G;\theta).
\end{equation}
Moreover, under Assumption \ref{assu:sieve_approximation},
\begin{equation}\label{eq:sieve_power}
\lim_{K\to \infty}\mathscr D_K(F,G;\theta) = \mathscr D(F,G;\theta)
\end{equation}
\end{lemma}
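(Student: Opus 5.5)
The first claim, \eqref{eq:sieve_validity}, follows from set inclusion. Each basis function is continuous and $G$-integrable, so every $\psi=\beta^\top\varphi$ lies in $\mathcal{C}(\mathcal{X}_v)$. Hence $\mathcal{S}_K(\mathcal{X}_v)\subseteq\mathcal{C}(\mathcal{X}_v)$. The objective in \eqref{eq:sieve_dual_value} is exactly the objective in \eqref{eq:convex_characterization} evaluated at $\psi=\beta^\top\varphi$, so the supremum over the smaller class is at most the supremum over the larger one.

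I will first check that the objective is well defined, possibly taking the value $-\infty$, for every admissible $(\lambda,\psi)$. Write $J(\lambda,\psi)$ for the objective in \eqref{eq:convex_characterization}. The inner infimum is measurable in $(W,\hat Z,S)$ because it is an infimum of a continuous function over a closed subset of Euclidean space; one can reduce it to a countable dense subset by continuity. Its positive part is integrable under $F$: choose any fixed point $(z_0,\hat z_0',s_0')\in\mathcal{X}_v$ and bound the infimum above by its value there. That value is controlled by $b_1(W)+b_2(z_0)+\|\hat Z\|+\|S\|+\text{const}$, which is $F$-integrable by Assumption~\ref{assu:regularity_minimax}\ref{enu:fintie_variance}--\ref{enu:envelop}.

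For \eqref{eq:sieve_power}, I will use an $\varepsilon$-approximation argument. Fix $\varepsilon>0$. By the definition of the supremum, and by Lemma~\ref{lem:kantorovich_duality}, which shows that $\mathscr D(F,G;\theta)$ is finite, pick $(\lambda_\varepsilon,\psi_\varepsilon)$ with $J(\lambda_\varepsilon,\psi_\varepsilon)\ge \mathscr D(F,G;\theta)-\varepsilon$. Under Assumption~\ref{assu:sieve_approximation}, choose $\psi_K\in\mathcal{S}_K$ with $\delta_K\coloneqq\|\psi_\varepsilon-\psi_K\|_\infty\to0$.

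The key one-line estimate is a Lipschitz property of $J$ in $\psi$ under the sup norm. If $\|\psi-\psi'\|_\infty\le\delta$, then the inner infimum changes pointwise by at most $\delta$, because $\inf(f-\psi)\ge\inf(f-\psi')-\delta$. The $G$-expectation also changes by at most $\delta$. Hence $|J(\lambda,\psi)-J(\lambda,\psi')|\le2\delta$. This gives $\mathscr D_K\ge J(\lambda_\varepsilon,\psi_K)\ge \mathscr D-\varepsilon-2\delta_K$. Combined with \eqref{eq:sieve_validity}, this yields $\liminf_K\mathscr D_K\ge\mathscr D-\varepsilon$ and $\limsup_K\mathscr D_K\le\mathscr D$, and letting $\varepsilon\downarrow0$ gives \eqref{eq:sieve_power}. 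If the sieve spaces are nested, $\mathscr D_K$ is in addition monotone; without nesting, the liminf/limsup argument above still suffices.

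I expect the main obstacle to be a technical point about the sup-norm approximation. Assumption~\ref{assu:sieve_approximation} is stated for continuous functions on a possibly unbounded $\mathcal{X}_v$, and uniform approximation may fail unless $\psi_\varepsilon$ can be chosen suitably, for example bounded and uniformly continuous. To handle this, I will first show that the near-optimal $\psi_\varepsilon$ can be taken from the class to which Assumption~\ref{assu:sieve_approximation} applies. One route is to take the supremum in \eqref{eq:convex_characterization} over that subclass and verify, via the duality proof of Lemma~\ref{lem:kantorovich_duality}, that it still equals the primal value. Otherwise I would simply invoke the assumption as stated, since it is asserted for every continuous $\psi^*$. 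The remaining argument is the routine estimate above.
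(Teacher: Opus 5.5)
Your proof is correct and essentially matches the paper's: set inclusion gives \eqref{eq:sieve_validity}. For \eqref{eq:sieve_power} the paper likewise takes a near-optimal $(\lambda,\psi)$, invokes Assumption~\ref{assu:sieve_approximation} exactly as stated (your fallback route, so the subclass detour is unnecessary), and uses the same $2\delta$ sup-norm Lipschitz bound on the dual objective. The one difference is minor and in your favor: the paper passes to a diagonal sequence $K_n$ and then appeals to monotonicity of $\mathscr D_K$ in $K$, which tacitly presumes nested sieve spaces, whereas your fixed-$\varepsilon$ liminf/limsup argument works along the full sequence without nesting.
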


Lemma~\ref{lem:sieve_approx} has two important implications for inference. First, \eqref{eq:sieve_validity} shows that the computationally tractable condition
\begin{equation}\label{eq:convex_necessary_condition}
\mathscr{D}_K(F,G;\theta) \le 0
\end{equation}
is a valid \emph{necessary} condition for $\theta\in\Theta_I(F,G)$. Our inference procedure therefore proceeds by testing \eqref{eq:convex_necessary_condition}. Because \eqref{eq:convex_necessary_condition} may be weaker than \eqref{eq:convex_characterization} for fixed $K$, the resulting test is valid but potentially conservative. Second, \eqref{eq:sieve_power} shows that the gap between the finite-dimensional criterion $\mathscr{D}_K(F,G;\theta)$ and the population criterion $\mathscr{D}(F,G;\theta)$ vanishes as $K\to\infty$, provided that the underlying sieve space can approximate any continuous function arbitrarily well.

\begin{remark}\label{rmk:computation_dual}
A computational bottleneck in evaluating $\mathscr D_K(F,G;\theta)$ is the inner minimization problem over $\mathcal X_v$. When $(z,\hat z',s')$ has a discrete and finite support, this problem is straightforward to solve. By contrast, when $(z,\hat z',s')$ includes continuous components, evaluating the inner $\inf$ can be computationally costly, depending on the application. This consideration provides an additional motivation for the resampling-free inference procedure developed below: by avoiding bootstrap or other resampling methods, it also avoids repeatedly solving the inner minimization problem across resampled observations.

In some applications, it may be preferable to work with the alternative one-sided dual representation that places the dual potential, and hence the sieve approximation, on functions of $(w,\hat z,s)\in\mathcal X_d$. This formulation is especially attractive when the support of $(W,\hat Z,S)$ is discrete, or more generally when optimization over $\mathcal X_d$ is computationally easier than optimization over $\mathcal X_v$. In such cases, one can equivalently work with the alternative one-sided dual representation
\begin{multline}
\mathscr D^\dagger(F,G;\theta)
\coloneqq
\sup_{\lambda\in\mathbb B,\ \phi\in \mathcal{C}(\mathcal X_d)} \Bigg\{
\expt_F\!\left[\phi(W,\hat Z,S)\right]
\\
+
\expt_G\!\left[
\inf_{(w,\hat z,s)\in\mathcal X_d}
\left\{
\lambda^{\top}\tilde q(w,Z,\hat z,s,\hat Z',S';\theta)
-
\phi(w,\hat z,s)
\right\}
\right]
\Bigg\}
\le 0.
\end{multline}
Approximating $\phi$ by a sieve space then yields
\begin{multline}\label{eq:sieve_dual_value_alt}
\mathscr D^\dagger_K(F,G;\theta)
\coloneqq
\sup_{\lambda\in\mathbb B,\ \beta\in\mathbb R^K}
\Bigg\{
\expt_F\!\left[\beta^{\top}\varphi(W,\hat Z,S)\right]
\\
+
\expt_G\!\left[
\inf_{(w,\hat z,s)\in\mathcal X_d}
\Big(
\lambda^{\top}\tilde q(w,Z,\hat z,s,\hat Z',S';\theta)
-
\beta^{\top}\varphi(w,\hat z,s)
\Big)
\right]
\Bigg\},
\end{multline}
where the sieve basis functions $\varphi$ are now defined on $\mathcal X_d$. Thus, the choice between the two one-sided representations should be guided by the computational structure of the application: depending on whether $\mathcal X_d$ or $\mathcal X_v$ is easier to optimize over, one formulation may be more tractable than the other.
\end{remark}

\subsection{Sample Splitting and Cross-Fitted Inference}\label{sec:cross_fit_inference}
One possible route to inference based on \eqref{eq:convex_necessary_condition} would be to study the empirical process associated with the objective function in $\mathscr D_K(F,G;\theta)$ and establish a uniform Gaussian approximation over $(\lambda,\beta)$. Such an approach would rely on Donsker-type conditions for the relevant class of objective functions. While feasible in principle, it has two drawbacks. First, the required empirical-process conditions may be restrictive and difficult to verify. Second, the resulting limiting distribution is generally non-pivotal, so critical values typically must be obtained by bootstrap or other simulation-based methods, which can be computationally burdensome in practice.

We therefore develop an alternative inference procedure based on sample splitting and cross-fitting. The primary advantage of this approach is that it places minimal regularity assumptions on the moment function $q(\cdot)$ and yields a test statistic that can be bounded using a pivotal, analytical critical value, thereby avoiding bootstrapping or other simulation methods. The theoretical trade-off is that, because the exact joint asymptotic distribution of the cross-fitted statistics remains unknown, the critical value must account for the least favorable dependence structure. While this minimax bounding renders the resulting inference procedure potentially conservative in finite samples, it guarantees correct asymptotic size control without sacrificing computational tractability.

Assume that, conditional on the trained prediction rule $g$, the downstream sample
\[
\{(W_i,\hat Z_i,S_i): i=1,\ldots,n_d\}
\]
is i.i.d.\ from $F$, the validation sample
\[
\{(Z_j,\hat Z'_j,S'_j): j=1,\ldots,n_v\}
\]
is i.i.d.\ from $G$, and the two samples are mutually independent. 
Define 
\begin{equation*}
n \coloneqq n_d+n_v,\ \underline{n} \coloneqq \min(n_d, n_v)
\end{equation*}
Randomly partition each sample into two approximately equal folds, indexed by $m\in\{1,2\}$. Let $\mathcal I^d_m$ and $\mathcal I^v_m$ denote the index sets for fold $m$ in the downstream and validation samples, respectively.

For an arbitrary candidate parameter $\theta \in \Theta$, fix a sequence of basis dimensions $K_n\to\infty$ and a sequence of bounds $c_n\to\infty$. The proposed inference procedure consists of the following steps:

\vspace{0.5em}
\noindent\textbf{Step 1.} Using only the downstream observations in fold $1$, compute an estimator $\hat{\Omega}_{1,n}$ of the covariance matrix of the vector $q(W, \hat{Z};\theta)$ and an estimator $\hat{\Lambda}_{1,n}$ of the covaraince matrix of the vector $(\hat{Z}^\top, S^\top)^\top$.  Define
\begin{equation*}
\hat{\Sigma}_{1,n} = \begin{pmatrix}
	\hat{\Omega}_{1,n} &\\
	 & \hat{\Lambda}_{1,n}
\end{pmatrix}
\end{equation*}
and
\[
\mathbb B_{1,n}
\coloneqq
\bigl\{\hat\Sigma_{1,n}^{-1/2}\gamma:\ \|\gamma\|\le 1\bigr\},
\]
where $\hat\Sigma_{1,n}^{-1/2}$ denotes an inverse square root (or generalized inverse square root, if needed) for $\hat{\Sigma}_{1,n}$. This step provides an empirical normalization of the components of the augmented moment vector $\tilde{q}$.

\vspace{0.5em}
\noindent\textbf{Step 2.} Using only the downstream and validation observations in fold $1$, solve the empirical analogue of \eqref{eq:sieve_dual_value}:
\begin{multline}\label{eq:empirical_argmax}
(\hat\lambda_{1,n},\hat\beta_{1,n})
\in
\argmax_{\lambda\in\mathbb B_{1,n},\ \beta\in\mathbb C_n}
\Bigg\{
\frac{1}{|\mathcal I^d_1|}\sum_{i\in\mathcal I^d_1}
\inf_{(z,\hat z',s')\in\mathcal X_v}
\Big[
\lambda^{\top}\tilde q(W_i,z,\hat Z_i,S_i,\hat z',s';\theta)
-
\beta^{\top}\varphi(z,\hat z',s')
\Big]
\\
+
\frac{1}{|\mathcal I^v_1|}\sum_{j\in\mathcal I^v_1}
\beta^{\top}\varphi(Z_j,\hat Z'_j,S'_j)
\Bigg\},
\end{multline}
where $\mathbb C_n \coloneqq \{\beta\in\mathbb R^{K_n}:\ \|\beta\|\le c_n\}$ and $\varphi(\cdot)$ denotes the vector of $K_n$ sieve basis functions defined on $\mathcal X_v$. Restricting the coefficients to the compact set $\mathbb C_n$ explicitly ensures a finite empirical maximizer for $\hat{\beta}_{1,n}$ always exists and controls the complexity of the sieve approximation.

\vspace{0.5em}
\noindent\textbf{Step 3.} Hold $(\hat\lambda_{1,n},\hat\beta_{1,n})$ fixed and evaluate the criterion on fold $2$. Define
\begin{align*}
\hat u_i
&\coloneqq
\inf_{(z,\hat z',s')\in\mathcal X_v}
\Big[
\hat\lambda_{1,n}^{\top}\tilde q(W_i,z,\hat Z_i,S_i,\hat z',s';\theta)
-
\hat\beta_{1,n}^{\top}\varphi(z,\hat z',s')
\Big],
\qquad i\in\mathcal I^d_2,
\\
\hat v_j
&\coloneqq
\hat\beta_{1,n}^{\top}\varphi(Z_j,\hat Z'_j,S'_j),
\qquad j\in\mathcal I^v_2.
\end{align*}
The resulting held-out estimate of the sieve dual criterion is
\[
\widehat{\mathscr D}_2(\theta)
\coloneqq
\frac{1}{|\mathcal I^d_2|}\sum_{i\in\mathcal I^d_2}\hat u_i
+
\frac{1}{|\mathcal I^v_2|}\sum_{j\in\mathcal I^v_2}\hat v_j.
\]

\vspace{0.5em}
\noindent\textbf{Step 4.} Conditional on fold $1$, the quantity $\widehat{\mathscr D}_2(\theta)$ is the sum of two independent sample averages, one from the downstream sample and one from the validation sample. Its conditional variance can therefore be estimated by
\[
\hat V_2(\theta)
\coloneqq
\frac{\hat\sigma_{u,2}^2}{|\mathcal I^d_2|}
+
\frac{\hat\sigma_{v,2}^2}{|\mathcal I^v_2|},
\]
where $\hat\sigma_{u,2}^2$ and $\hat\sigma_{v,2}^2$ are the sample variances of $\{\hat u_i:i\in\mathcal I^d_2\}$ and $\{\hat v_j:j\in\mathcal I^v_2\}$, respectively. Since $\hat V_2(\theta)$ shrinks asymptotically at rate $1/\underline{n}$, $\widehat{\mathscr D}_2(\theta)$ should be scaled by $\sqrt{\underline{n}}$. To guard against degeneracy when $(\hat\lambda_{1,n},\hat\beta_{1,n})=(0,0)$, let $\epsilon>0$ be a small fixed constant. The fold-2 test statistic is then
\begin{equation}\label{eq:T2_stat}
T_2(\theta)
\coloneqq
\frac{\sqrt{\underline{n}}\widehat{\mathscr D}_2(\theta)}
{\sqrt{\max\{\underline{n}\hat V_2(\theta),\epsilon\}}}.
\end{equation}

\vspace{0.5em}
\noindent\textbf{Step 5.} Repeat Steps 1--4 after swapping the roles of the two folds. This yields a second statistic,
\begin{equation}\label{eq:T1_stat}
	T_1(\theta)
	\coloneqq
	\frac{\sqrt{\underline{n}}\widehat{\mathscr D}_1(\theta)}
	{\sqrt{\max\{\underline{n}\hat V_1(\theta),\epsilon\}}},
\end{equation}
where $\widehat{\mathscr D}_1(\theta)$ and $\hat V_1(\theta)$ are computed on fold $1$ using the optimizer $(\hat\lambda_{2,n},\hat\beta_{2,n})$ obtained from fold $2$. We then aggregate the two fold-specific statistics by
\[
T(\theta)\coloneqq \max\{T_1(\theta),T_2(\theta)\}.
\]
The associated $p$-value is
\[
p(\theta)
\coloneqq
\min\bigl\{1,\ 2\bigl[1-\Phi\bigl(T(\theta)\bigr)\bigr]\bigr\},
\]
where $\Phi(\cdot)$ denotes the standard normal distribution function. We reject the null hypothesis
\[
H_0:\ \theta\in\Theta_I(F,G)
\]
at nominal significance level $\alpha$ whenever $p(\theta)<\alpha$.

Some remarks are in order.

\begin{remark}
The covariance standardization in Step 1 provides a data-driven normalization for the components of $\tilde{q}$.
 This normalization is not ideal for two reasons. First, the natural target would be the covariance matrix of $q(W,Z;\theta)$, but this object is not directly estimable because $Z$ is not observed in the downstream sample. We therefore replace it with the covariance of $q(W,\hat Z;\theta)$. Since $\hat Z$ is only a proxy for $Z$, this choice may affect efficiency, but it does not affect the validity of the procedure. 
Second, it is not theoretically obvious what the optimal weighting scheme should be for the exact-matching penalties $|\hat{Z} - \hat{Z}'|$ and $|S - S'|$ in $\tilde{q}$. Standardizing these via the empirical covariance of $(\hat{Z}, S)$ provides a heuristic that performs well in simulations. 
Finding optimal weights remains a task for future research.
\end{remark}

\begin{remark}
Ideally, one would characterize the joint asymptotic distribution of $(T_1(\theta),T_2(\theta))$ and then derive the limiting distribution of $T(\theta)$. Without stronger assumptions, however, we only obtain the marginal asymptotic distribution of each fold-specific statistic. The difficulty is that the fold-specific optimizers $(\hat\lambda_{1,n},\hat\beta_{1,n})$ and $(\hat\lambda_{2,n},\hat\beta_{2,n})$ need not converge to a common limit under the null. This is in contrast to standard double/debiased machine-learning settings, where the corresponding estimators converge to a unique probability limit and the leading terms of the fold-specific statistics depend only on the observations in their respective evaluation folds, which in turn yields asymptotic independence. In our partial-identification framework, by contrast, the dual objective may be flat over a non-singleton set of maximizers under the null, so the empirical optimizers need not settle on a common limit.

To guarantee correct size without specifying the joint distribution of $(T_1(\theta),T_2(\theta))$, the critical value for $T(\theta)$ must be calibrated against the least-favorable joint distribution consistent with the known standard normal marginals. This least-favorable bound reduces to the Bonferroni correction used in the definition of $p(\theta)$.
\end{remark}

\begin{remark}
If, for computational reasons, one prefers to work with the alternative one-sided dual representation $\mathscr{D}^{\dagger}(F, G;\theta)$ and its sieve approximation $\mathscr{D}_K^{\dagger}(F, G;\theta)$, as discussed in Remark~\ref{rmk:computation_dual}, the cross-fitted inference procedure developed above continues to apply with only straightforward modifications. Specifically, one replaces the optimization problem in \eqref{eq:empirical_argmax} with
\begin{multline*}
	(\hat\lambda_{1,n},\hat\beta_{1,n})
	\in
	\argmax_{\lambda\in\mathbb B_{1,n},\ \beta\in\mathbb C_n}
	\Bigg\{
	\frac{1}{|\mathcal I^d_1|}\sum_{i\in\mathcal I^d_1}
	\beta^{\top}\varphi(W_i,\hat Z_i,S_i)
	\\
	+
	\frac{1}{|\mathcal I^v_1|}\sum_{j\in\mathcal I^v_1}
	\inf_{(w,\hat z,s)\in\mathcal X_d}
	\Big[
	\lambda^{\top}\tilde q(w,Z_j,\hat z,s,\hat Z'_j,S'_j;\theta)
	-
	\beta^{\top}\varphi(w,\hat z,s)
	\Big]
	\Bigg\},
\end{multline*}
and replaces the definitions of $\hat u_i$ and $\hat v_j$ with
\begin{align*}
	\hat u_i
	&\coloneqq
	\hat\beta_{1,n}^{\top}\varphi(W_i,\hat Z_i,S_i),
	\qquad i\in\mathcal I^d_2,
	\\
	\hat v_j
	&\coloneqq
	\inf_{(w,\hat z,s)\in\mathcal X_d}
	\Big[
	\hat\lambda_{1,n}^{\top}\tilde q(w,Z_j,\hat z,s,\hat Z'_j,S'_j;\theta)
	-
	\hat\beta_{1,n}^{\top}\varphi(w,\hat z,s)
	\Big],
	\qquad j\in\mathcal I^v_2.
\end{align*}
Apart from these substitutions, the remainder of the inference procedure is unchanged.
\end{remark}

To establish asymptotic size control formally, we impose the following primitive regularity conditions.

\begin{assumption}[Regularity for asymptotic inference]\label{assu:inference_regularity}
Assume the following conditions hold:
\begin{enumerate}
	\item\label{enu:fold_size}
	As $n\to\infty$, both $n_d\to\infty$ and $n_v\to\infty$ (and hence $\underline n\to\infty$). Moreover, the fold sizes are asymptotically proportional to their respective sample sizes. That is, there exist constants $\kappa_d,\kappa_v>0$ such that, for each $m\in\{1,2\}$,
	\[
	\liminf_{n\to\infty}\frac{|\mathcal I^d_m|}{n_d}\ge \kappa_d
	\qquad\text{and}\qquad
	\liminf_{n\to\infty}\frac{|\mathcal I^v_m|}{n_v}\ge \kappa_v.
	\]

	\item\label{enu:compact_support}
	The support $\mathcal X_d$ of $(W,\hat Z,S)$ under $F$ is compact. The support $\mathcal X_v$ of $(Z,\hat Z,S)$ under $G$ is also compact.

	\item\label{enu:pd_cov}
	The population covariance matrix $\Omega$ of $q(W,\hat{Z};\theta)$ and the population covariance matrix $\Lambda$ of $(\hat{Z}^\top, S^\top)^\top$ are positive definite.

\item\label{enu:sieve}
Each sieve basis function is uniformly bounded on $\mathcal X_v$. That is, there exists a constant $C<\infty$ such that, for every basis function $\varphi_k(\cdot)$,
\[
\sup_{(z,\hat z',s')\in\mathcal X_v}|\varphi_k(z,\hat z',s')|\le C.
\]
In addition, the sieve dimension $K_n$ and the coefficient bound $c_n$ satisfy
\begin{equation}\label{eq:sieve_complexity}
\lim_{\underline n\to\infty}\frac{c_n\sqrt{K_n}}{\sqrt{\underline n}}=0.
\end{equation}
\end{enumerate}
\end{assumption}

Assumption~\ref{assu:inference_regularity} consists of mild regularity conditions. Assumption~\ref{assu:inference_regularity}\ref{enu:fold_size} requires only that both the downstream and validation sample sizes increase to infinity and that no fold becomes asymptotically negligible. In particular, it places no restriction on the relative magnitudes of $n_d$ and $n_v$. This flexibility is important in applications where the validation sample may be either much smaller or much larger than the downstream sample.
Assumptions~\ref{assu:inference_regularity}\ref{enu:compact_support}--\ref{enu:pd_cov} are standard regularity conditions. In Assumption~\ref{assu:inference_regularity}\ref{enu:sieve}, the uniform bound on the sieve basis functions is largely a normalization: by rescaling the basis, one can always impose a common sup-norm bound, such as $C=1$. Equation \eqref{eq:sieve_complexity} is a rate condition, which restricts the growth of the effective complexity of the sieve space. Intuitively, because the sieve coefficients are constrained by $\|\beta\|\le c_n$, the relevant envelope of the sieve approximation on $\mathcal X_v$ is of order $c_n\sqrt{K_n}$, and \eqref{eq:sieve_complexity} requires this envelope to grow slowly relative to sampling variation.

\begin{theorem}[Asymptotic Size Control]\label{thm:asymptotic_size}
	Suppose Assumptions \ref{assu:consistent_validation}, \ref{assu:regularity_minimax}, and \ref{assu:inference_regularity} hold. For any $\theta \in \Theta_I(F, G)$, the cross-fitted inference procedure asymptotically controls the probability of a false rejection at the nominal significance level $\alpha \in (0, 1)$:
	\begin{equation}\label{eq:size_control}
		\limsup_{\underline{n} \to \infty} \, \prob\big(p(\theta) < \alpha\big) \le \alpha.
	\end{equation}
\end{theorem}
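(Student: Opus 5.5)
Fix $\theta\in\Theta_I(F,G)$. The plan is to show that each fold statistic is asymptotically bounded above by a standard normal: $\limsup\prob(T_m(\theta)>z)\le 1-\Phi(z)$ for every $z>0$ and $m\in\{1,2\}$. The Bonferroni union bound then handles the maximum:
\[
\prob(p(\theta)<\alpha)\le \prob(T_1>z_{1-\alpha/2})+\prob(T_2>z_{1-\alpha/2}).
\]
The left side is at most $\alpha$ in the limit. Nothing about the joint law of $(T_1,T_2)$ is needed. By symmetry it suffices to treat $T_2$.

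\textbf{Step 1: the conditional mean is nonpositive.} Condition on fold $1$. Then $(\hat\lambda_{1,n},\hat\beta_{1,n})$ is fixed, and fold $2$ is independent of it.

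$\hat\lambda_{1,n}\in\mathbb B_{1,n}$ is a fixed vector, not necessarily in $\mathbb B$. However, the max--min criterion is positively homogeneous in $\lambda$, and $\mathbb B$ contains a neighborhood of the origin. So $\theta\in\Theta_I$, via Theorem~\ref{thm:id_validation}, gives
\[
\min_{H'\in\mathcal H'(F,G)}\expt_{H'}[\lambda^\top\tilde q]\le 0\quad\text{for every }\lambda\in\real^{d}.
\]
Next, for any $\lambda$ and any $\beta\in\real^{K_n}$, weak duality gives
\[
\expt_F\Big[\inf_{x_v}\{\lambda^\top\tilde q-\beta^\top\varphi\}\Big]+\expt_G[\beta^\top\varphi]\le \min_{H'}\expt_{H'}[\lambda^\top\tilde q].
\]
To see this, integrate the pointwise inequality $\inf\{\cdot\}\le\lambda^\top\tilde q-\beta^\top\varphi$ against any coupling $H'$. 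Equivalently, this is the argument behind Lemma~\ref{lem:sieve_approx}. Hence the conditional mean $\mu_n\coloneqq\expt[\hat u_i]+\expt[\hat v_j]$, given fold $1$, satisfies $\mu_n\le 0$.

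\textbf{Step 2: conditional CLT.} Write
\[
T_2=\frac{\sqrt{\underline n}(\widehat{\mathscr D}_2-\mu_n)}{\sqrt{\max\{\underline n\hat V_2,\epsilon\}}}+\frac{\sqrt{\underline n}\mu_n}{\sqrt{\max\{\underline n\hat V_2,\epsilon\}}}.
\]
The second term is $\le 0$, so it suffices to bound the first in distribution by $N(0,1)$.

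Conditional on fold $1$, $\hat u_i$ and $\hat v_j$ are i.i.d. within each sample, and the two samples are independent. Under compactness (Assumption~\ref{assu:inference_regularity}\ref{enu:compact_support}) and continuity, $\tilde q$ is bounded on $\mathcal X_d\times\mathcal X_v$ by a constant $M$. By Assumption~\ref{assu:inference_regularity}\ref{enu:pd_cov} and consistency of $\hat\Sigma_{1,n}$, $\|\hat\lambda_{1,n}\|=O_p(1)$. By Cauchy--Schwarz and the basis bound, $|\beta^\top\varphi|\le c_n\sqrt{K_n}C$. So the summands have envelope $B_n=O_p(1+c_n\sqrt{K_n})=o_p(\sqrt{\underline n})$ by \eqref{eq:sieve_complexity}.

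Let $\sigma_n^2$ denote the true conditional variance $\underline n(\sigma_u^2/|\mathcal I_2^d|+\sigma_v^2/|\mathcal I_2^v|)$. Two cases arise.

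\emph{Case (a): $\sigma_n^2$ stays away from $0$.} Apply a Lindeberg/Berry--Esseen bound conditionally. The third-moment ratio is $\lesssim B_n/(\sqrt{\underline n}\,\sigma_n)\to0$, which gives conditional asymptotic normality. In addition, $\underline n\hat V_2/\sigma_n^2\to1$ in probability by a conditional Chebyshev bound using fourth moments, which are $\le B_n^2\sigma^2$. The flooring at $\epsilon$ only enlarges the denominator.

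\emph{Case (b): $\sigma_n^2\to0$.} The numerator is $o_p(1)$ while the denominator is at least $\sqrt\epsilon$, so that term vanishes.

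To combine the two cases, argue along subsequences: every subsequence has a further subsequence on which the conditional variance converges in $[0,\infty)$. The conditional bound then holds uniformly, and dominated convergence removes the conditioning.

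\textbf{Main obstacle.} The delicate step is case (a) when the variance is bounded away from zero but $B_n$ grows. Getting a Berry--Esseen remainder of order $B_n/\sqrt{\underline n}$ and consistency of $\hat V_2$ requires exactly the rate in \eqref{eq:sieve_complexity}. A secondary concern is the treatment of $\hat\lambda_{1,n}$: Step 1 needs only homogeneity, but the envelope bound needs $\hat\Sigma_{1,n}^{-1/2}$ to be stochastically bounded, which follows from positive definiteness of $\Sigma$ and a law of large numbers on bounded support. Once both fold bounds are in place, the union bound above completes the proof of \eqref{eq:size_control}.
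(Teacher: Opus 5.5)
Your proposal is correct and follows the paper's proof essentially step for step. The shared steps are the Bonferroni union bound, conditioning on fold~1 and using weak duality to get $\mu_n\le 0$, the envelope rate $c_n\sqrt{K_n}/\sqrt{\underline n}\to 0$, a conditional Lyapunov CLT with Chebyshev consistency of $\underline n\hat V_2$, and a subsequence argument followed by dominated convergence; your positive-homogeneity remark for $\hat\lambda_{1,n}\in\mathbb B_{1,n}$ makes explicit a point the paper leaves implicit. The only difference is that the paper avoids your two-case split by normalizing with $\tau_n^2=\max\{\sigma_n^2,\epsilon\}$, so that $\sigma_n^2/\tau_n^2\in[0,1]$ always has convergent subsequences. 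In your version, the subsequential limit of $\sigma_n^2$ should be taken in $[0,\infty]$ because $\sigma_n^2$ need not stay bounded, but your Case (a) already covers that.
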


Theorem~\ref{thm:asymptotic_size} establishes pointwise asymptotic size control for the proposed procedure. This result can be strengthened to uniform size control over suitably restricted classes of pairs $(F,G)$, for example, classes with uniformly bounded supports and covariance matrices whose eigenvalues are uniformly bounded away from zero. 

Because the critical value is calibrated against the least favorable joint distribution of $(T_1(\theta),T_2(\theta))$ consistent with their marginal standard normal limits, the resulting procedure may be conservative, so the inequality in \eqref{eq:size_control} can be strict. This conservativeness is the price of maintaining computational tractability under weak regularity conditions. Finally, both the proposed method and Theorem~\ref{thm:asymptotic_size} concern inference on the full parameter vector. A conservative procedure for subvector inference can be obtained by projecting the full-vector confidence region onto the subvector of interest. A sharper treatment of subvector inference is left for future work.

\section{Monte Carlo Simulations}\label{sec:simulation}

In this section, we conduct Monte Carlo simulations to assess the finite-sample performance of the proposed partial-identification and inference procedures. These simulations are designed to illustrate several key features of the methodology. First, we examine the size control of our procedure and compare it with that of a naive plug-in approach that ignores measurement error in the ML-generated proxy. Second, we illustrate the power of the proposed cross-fitted test and how it varies with sample size. Third, we show how incorporating a stratifying variable $S$ can tighten the identified bounds on the structural parameters. Fourth, we compare the results obtained using discrete and continuous proxies $\hat{Z}$. We also examine the role played by the number of sieve basis functions. This final exercise further illustrates how our framework accommodates settings in which the latent target $Z$ and its proxy $\hat{Z}$ take values in different spaces. The next subsection describes the data-generating process used throughout these simulations.

\subsection{Data-Generating Process}

We consider a design motivated by Example~\ref{ex:remote}, in which the downstream researcher is interested in the regression model
\begin{equation*}
	Y = Z\theta_1 + C\theta_2 + \varepsilon,
\end{equation*}
where $Z$ is a binary regressor that is unobserved in the downstream sample. Instead, the researcher observes a machine-learned proxy $\hat Z = g(X)$ constructed from a high-dimensional predictor vector $X$. The data-generating process (DGP) is designed so that $(Z,C)$ are exogenous in the structural equation, whereas $X$ contains both exogenous and endogenous components. Consequently, although the true regressor $Z$ is exogenous, the proxy $\hat Z$ may be endogenous. In the downstream sample, the researcher observes $(Y,C,X)$. For simplicity, we take $C$ to be scalar and set the true parameter vector to $\theta_0 = (1,1)^\top$.

Specifically, let $(V,U,\xi,\nu)$ be mutually independent random variables, each distributed as $N(0,1)$. We then generate the remaining variables as follows:
\begin{itemize}
	\item The regression error is
	\begin{equation*}
		\varepsilon = \xi + U.
	\end{equation*}

	\item The observed regressor $C$ is
	\begin{equation*}
		C = V + \nu.
	\end{equation*}

	\item The high-dimensional predictor vector is partitioned as $X = (X^{ex}, X^{en})$, where the exogenous block $X^{ex}$ is generated according to
	\begin{equation*}
		X^{ex}_j = V + \zeta^{ex}_j,
		\qquad
		j = 1,\dots,\dim(X^{ex}),
	\end{equation*}
	and the endogenous block $X^{en}$ is generated according to
	\begin{equation*}
		X^{en}_j = V + U + \zeta^{en}_j,
		\qquad
		j = 1,\dots,\dim(X^{en}).
	\end{equation*}
	Here, $\zeta^{ex}_j$ and $\zeta^{en}_j$ are i.i.d.\ $N(0,1)$ random variables, independent of $(V,U,\xi,\nu)$. We set
	\begin{equation*}
		\dim(X^{ex}) = 10
		\qquad\text{and}\qquad
		\dim(X^{en}) = 490.
	\end{equation*}
	In implementing the estimation and inference procedures, we do not reveal which components of $X$ belong to the exogenous block and which belong to the endogenous block.

	\item The latent binary regressor $Z$ is generated by
	\begin{equation}\label{eq:simu_Z_model}
		Z = \indicator\left\{ \kappa_0 + Q + \sigma_\eta(X^{ex})\eta > 0 \right\},
		\qquad
		Q = \frac{1}{\sqrt{\dim(X^{ex})}} \sum_{j=1}^{\dim(X^{ex})} X^{ex}_j.
	\end{equation}
	Here, $\eta$ is independent of all previously defined variables and follows a Type~I extreme-value distribution. The intercept $\kappa_0$ is calibrated so that the marginal probability $\prob(Z=1)$ equals $0.5$ in every simulation design. The scale function $\sigma_\eta(X^{ex})$ governs the difficulty of predicting $Z$ from $X$. In the baseline specification, we consider three homoskedastic designs,
	\begin{equation*}
		\sigma_\eta(X^{ex}) = s,
		\qquad
		s \in \{s_L, s_M, s_H\},
	\end{equation*}
	corresponding to low, medium, and high levels of prediction noise. Specifically, we set
	\begin{equation*}
		(s_L,s_M,s_H) = (0.1,1,3).
	\end{equation*}
	In Subsection~\ref{sec:simu_stratification}, we also consider heteroskedastic specifications for $\sigma_\eta(X^{ex})$ to illustrate the role of stratification through $S$. In the baseline design, however, we set $S$ to be constant and therefore omit it from the analysis.
\end{itemize}

By construction, $(Z,C)$ are exogenous in the downstream regression. At the same time, because $U$ enters both $\varepsilon$ and the endogenous block $X^{en}$, a flexible predictor $\hat Z = g(X)$ may inherit endogeneity through its dependence on $X^{en}$. 

\subsection{Construction of the Proxy $\hat{Z}$}

The prediction rule $g(\cdot)$ is estimated from an i.i.d.\ upstream training sample of $(Z,X)$ of size $n_{\text{train}}=5000$. This sample is generated independently of both the downstream sample and the auxiliary validation sample used to assess the proxy. We simulate the upstream training sample only once and keep both the realized sample and the resulting fitted rule $g(\cdot)$ fixed across all Monte Carlo replications. This design mirrors the theoretical framework in Sections~\ref{sec:validation_identification} and \ref{sec:validation_inference}, in which the downstream econometric analysis is conducted conditional on a pre-trained prediction rule.

In the simulations, we estimate $g(\cdot)$ using an $\ell_1$-penalized logistic regression (logistic LASSO). For a given penalty parameter $\gamma>0$, the estimator solves
\begin{multline*}
	(\hat{\tau}_0(\gamma),\hat{\tau}_1(\gamma))
	=
	\argmin_{\tau_0,\tau_1}
	\Bigg\{
	-\frac{1}{n_{\text{train}}}\sum_{i=1}^{n_{\text{train}}}
	\Big[
	Z_i\log \Lambda(\tau_0+X_i^\top\tau_1)
	\\
	+
	(1-Z_i)\log\big(1-\Lambda(\tau_0+X_i^\top\tau_1)\big)
	\Big]
	+
	\gamma\|\tau_1\|_1
	\Bigg\},
\end{multline*}
where $\|\tau_1\|_1=\sum_j |\tau_{1,j}|$ is the $\ell_1$ norm of the slope vector, and $\Lambda(\cdot)$ denotes the standard logistic cumulative distribution function,
\begin{equation*}
	\Lambda(v)=\frac{\exp(v)}{1+\exp(v)}.
\end{equation*}

To choose $\gamma$, we perform five-fold cross-validation over a dense grid of candidate values. Specifically, we randomly partition the training indices into five equal-sized folds, denoted by $\mathcal I_1,\dots,\mathcal I_5$. For each candidate $\gamma$ and each fold $k\in\{1,\dots,5\}$, we fit the model on the remaining four folds to obtain fold-specific estimates $\big(\hat{\tau}_0^{(-k)}(\gamma),\hat{\tau}_1^{(-k)}(\gamma)\big)$ and evaluate predictive performance on the held-out fold $k$. We then select the tuning parameter
\begin{equation*}
	\gamma^*
	=
	\argmin_{\gamma}
	\left\{
	-\frac{1}{5}\sum_{k=1}^5 \frac{1}{|\mathcal I_k|}
	\sum_{i\in\mathcal I_k}
	\Big[
	Z_i \log \hat{P}_i^{(-k)}(\gamma)
	+
	(1-Z_i)\log\big(1-\hat{P}_i^{(-k)}(\gamma)\big)
	\Big]
	\right\},
\end{equation*}
where
\[
\hat{P}_i^{(-k)}(\gamma)
=
\Lambda\big(\hat{\tau}_0^{(-k)}(\gamma)+X_i^\top \hat{\tau}_1^{(-k)}(\gamma)\big)
\]
is the predicted probability for observation $i$ when fold $k$ is held out.

After selecting $\gamma^*$, we refit the logistic LASSO on the full upstream training sample to obtain the final estimates $\hat{\tau}_0=\hat{\tau}_0(\gamma^*)$ and $\hat{\tau}_1=\hat{\tau}_1(\gamma^*)$. The resulting predicted probability for an observation with covariates $X$ is
\begin{equation*}
	\hat{P}(Z=1\mid X)=\Lambda(\hat{\tau}_0+X^\top \hat{\tau}_1).
\end{equation*}
In the baseline designs, we convert this score into a binary proxy via
\begin{equation}\label{eq:simu_binary_proxy}
	\hat{Z}=g(X)\coloneqq \indicator\big\{\hat{P}(Z=1\mid X)>0.5\big\}.
\end{equation}
The resulting classification accuracy, measured by $\prob(Z=\hat{Z})$, is approximately $98.42\%$ when $\sigma_\eta(X^{ex})=s_L$, $88.57\%$ when $\sigma_\eta(X^{ex})=s_M$, and $73.12\%$ when $\sigma_\eta(X^{ex})=s_H$.
\vspace{-0.5em}

\subsection{Size Control} We conduct the Monte Carlo study using $10{,}000$ replications. In each replication, the downstream sample consists of $n_d$ observations on $(Y,C,\hat Z)$, while the auxiliary validation sample consists of $n_v$ observations on $(Z,\hat Z)$. The baseline simulation design does not include a stratifying variable, so $S$ is omitted throughout this subsection. The role of stratification is examined separately in Subsection~\ref{sec:simu_stratification}.

For each replication, we implement the inference procedure developed in Section~\ref{sec:validation_inference}. Throughout the simulations, we set $c_n=\underline{n}^{2/5}$. We also experimented with the alternative choice $c_n=\underline{n}^{1/3}$ and obtained very similar results.  Because $(Z,\hat Z)$ is discrete and $S$ is absent, the inner $\inf$ problem in \eqref{eq:sieve_dual_value} is easy to solve numerically. Moreover, in this setting the sieve approximation is exact. We therefore take the basis vector $\varphi(Z,\hat Z)$ to be the four cell indicators corresponding to the support of $(Z,\hat Z)$:
\begin{equation}\label{eq:binary_proxy_construction}
\varphi(Z,\hat Z)
=
\begin{pmatrix}
\indicator(Z=1,\hat Z=1)\\
\indicator(Z=1,\hat Z=0)\\
\indicator(Z=0,\hat Z=1)\\
\indicator(Z=0,\hat Z=0)
\end{pmatrix}.
\end{equation}

An attractive feature of our framework is its robustness to asymmetric sample sizes in the downstream and validation datasets. To illustrate this property, we examine the rejection rate of the test under four sample-size configurations:
\[
(n_d,n_v)\in\{(500,500),\ (10000,10000),\ (10000,500),\ (500,10000)\}.
\]
The last two designs explicitly examine the performance of the procedure under substantial sample-size asymmetry. For each configuration, we report rejection frequencies at nominal significance levels of $10\%$, $5\%$, and $1\%$ when testing the true parameter value $\theta_0=(1,1)^\top$. For reference, on a Mac with an M4 Max CPU (16 CPU cores), the case $(n_d,n_v)=(10000,10000)$ takes about 90 seconds for all $10{,}000$ replications combined, or 9 milliseconds per replication.

A second implication of the theory is that the asymptotic validity of the procedure does not depend on the predictive accuracy of the upstream ML method. To illustrate this feature, we report size results for the three levels of prediction noise, $(s_L,s_M,s_H)$, introduced in the data-generating process. Although higher prediction noise naturally widens the identified set, it should not compromise the size control of the test.

As a benchmark, we also report the simulated rejection frequencies of a conventional plug-in $F$-test based on the ordinary least squares regression of $Y$ on $(\hat Z,C)$. This naive procedure treats $\hat Z$ as if it were observed without error and therefore ignores both the prediction error in $\hat Z$ and the endogeneity that $\hat Z$ may inherit from the endogenous components of $X$.

Table~\ref{tab:size_control} reports the results. Two conclusions emerge. First, our procedure controls size well across all sample-size configurations and all levels of prediction noise. Although conservative by construction, the conservativeness is mild in our simulation. For example, when $(n_d,n_v)=(10000,10000)$, the empirical rejection frequency at the $5\%$ level ranges from $4.2\%$ to $4.5\%$, depending on the prediction-noise design, while at the $1\%$ level it ranges from $0.9\%$ to $1.0\%$. Second, the naive plug-in $F$-test fails to control size. It exhibits substantial over-rejection whenever prediction noise is moderate or high, for all sample-size configurations. Even in the low-noise design, where the prediction accuracy is approximately $98.42\%$, over-rejection is still noticeable when $n_d=10{,}000$.

\begin{table}[htbp]
	\centering
	\caption{Empirical Rejection Probabilities Under the True Null Hypothesis}
	\label{tab:size_control}
	\begin{tabular}{ll *{3}{C{1.2cm}} @{\hspace{1em}} *{3}{C{1.2cm}}}
	  \toprule
			& & \multicolumn{3}{c}{\makebox[0pt][c]{Proposed Cross-Fitted Test}} & \multicolumn{3}{c}{\makebox[0pt][c]{Naive Plug-In $F$-Test}} \\
\cmidrule(lr){3-5} \cmidrule(lr){6-8}
		Prediction Noise & $(n_d, n_v)$ & $10\%$ & $5\%$ & $1\%$ &  $10\%$ & $5\%$ & $1\%$ \\
		\midrule
		\multirow{4}{*}{Low ($s_L$)} 
		 & (500,  500) & 7.3\% & 3.8\% & 0.7\% & 10.2\% & 5.0\% & 1.1\%  \\
 & (10000,  10000) & 8.5\% & 4.2\% & 0.9\% & 13.0\% & 7.1\% & 1.7\%  \\
 & (10000,  500) & 9.5\% & 4.7\% & 0.9\% & 13.0\% & 7.1\% & 1.7\%  \\
 & (500,  10000) & 6.9\% & 3.4\% & 0.7\% & 10.2\% & 5.0\% & 1.1\%  \\

 		&  & & &  &  & &   \\
		\multirow{4}{*}{Medium ($s_M$)} 
		 & (500,  500) & 8.6\% & 4.7\% & 0.8\% & 24.9\% & 15.8\% & 5.1\%  \\
 & (10000,  10000) & 9.0\% & 4.3\% & 1.0\% & 99.9\% & 99.5\% & 97.9\%  \\
 & (10000,  500) & 9.9\% & 5.1\% & 1.1\% & 99.9\% & 99.5\% & 97.9\%  \\
 & (500,  10000) & 9.2\% & 4.7\% & 1.0\% & 24.9\% & 15.8\% & 5.1\%  \\

 		&  & & &  &  & &   \\
		\multirow{4}{*}{High ($s_H$)} 
		 & (500,  500) & 9.1\% & 5.2\% & 1.0\% & 68.8\% & 56.9\% & 33.0\%  \\
 & (10000,  10000) & 8.9\% & 4.5\% & 1.0\% & 100\% & 100\% & 100\%  \\
 & (10000,  500) & 10.0\% & 5.1\% & 1.2\% & 100\% & 100\% & 100\%  \\
 & (500,  10000) & 10.0\% & 5.0\% & 1.0\% & 68.8\% & 56.9\% & 33.0\%  \\
\bottomrule
	\end{tabular}
    \begin{minipage}{15.5cm}
			{\footnotesize \textit{Notes:} Entries report empirical rejection frequencies across $N=10000$ Monte Carlo replications when the candidate parameter is set equal to the true value $\theta_0=(1,1)^\top$. The proposed procedure is the cross-fitted test developed in Section~\ref{sec:validation_inference} with sieve defined in \eqref{eq:binary_proxy_construction} and $c_n = \underline{n}^{2/5}$. The naive procedure is the conventional plug-in $F$-test from the regression of $Y$ on $(\hat Z,C)$, which treats $\hat Z$ as if it were observed without error. The proxy $\hat Z$ is constructed using the logistic LASSO prediction rule in \eqref{eq:simu_binary_proxy}. For reference, on a Mac with an M4 Max CPU (16 CPU cores), it takes about 90 seconds to finish all 10,000 replications of the tests when $(n_d,n_v)=(10000,10000)$ regardless of the value of predictio noise.}
	\end{minipage}
\end{table}

\subsection{Power and Informativeness}
Figure~\ref{fig:power} visualizes the finite-sample power of the proposed procedure through heatmaps over the two-dimensional parameter space $(\theta_1,\theta_2)$. For each design, we evaluate the test on a grid of $10{,}000$ parameter values and repeat the experiment over $500$ Monte Carlo replications. At each grid point, we record the fraction of replications in which the null hypothesis is rejected at the $5\%$ significance level. The resulting heatmap therefore summarizes the empirical rejection probability across the parameter space. The purple end of the color scale corresponds to a rejection frequency of $0\%$, meaning that the point lies in the $95\%$ confidence set in every replication, whereas the yellow end corresponds to a rejection frequency of $100\%$, meaning that the point is excluded from the $95\%$ confidence set in every replication.

Figure~\ref{fig:power} reports confidence sets for nine different sample-size combinations, arranged in a $3\times 3$ panel. Across rows, the downstream sample size $n_d$ increases from $500$ to $2000$, while across columns the validation sample size $n_v$ increases from $500$ to $2000$. As the figure shows, increasing either $n_d$ or $n_v$ sharpens the rejection surface and yields a more informative confidence set. The gains in power are most pronounced, however, when both sample sizes increase simultaneously. This pattern is consistent with our asymptotic theory, under which the precision of the procedure is governed by $\underline{n}=\min(n_d,n_v)$.

To illustrate how predictive performance affects the informativeness of our procedure, Figure~\ref{fig:power_noise} plots the empirical $95\%$ confidence sets under the three prediction-noise designs $s\in\{s_L,s_M,s_H\}$. The heatmaps are constructed in the same manner as in Figure~\ref{fig:power} with sample sizes fixed at $(n_d, n_v) = (1000, 1000)$. As expected, lower prediction noise and hence better predictive performance yield more informative inference, producing smaller confidence sets. 

\clearpage
\thispagestyle{plain}

\begin{center}
\includegraphics[width=\textwidth,height=.72\textheight,keepaspectratio]{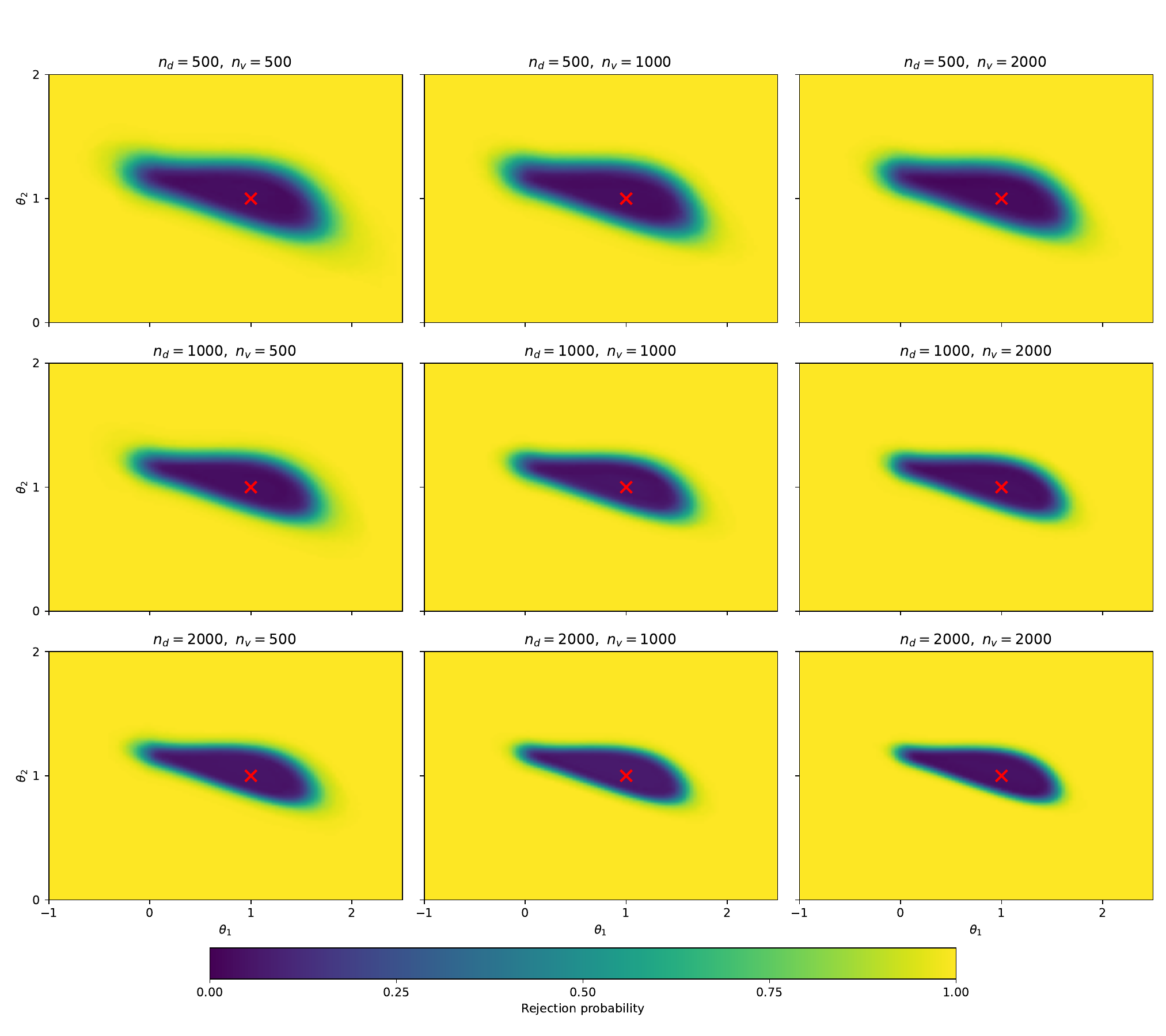}

\captionof{figure}{95\% Confidence Set across Different Sample Sizes}
\label{fig:power}

\begin{minipage}{0.96\linewidth}
\footnotesize
	\textit{Notes:} Each panel reports the empirical rejection frequency of the proposed test at the $5\%$ level over a grid of $10000$ candidate values of $(\theta_1,\theta_2)$, based on $500$ Monte Carlo replications. 
The sieve function is defined in \eqref{eq:binary_proxy_construction} and $c_n = \underline{n}^{2/5}$.
	The implied $95\%$ confidence set in each replication is the set of grid points not rejected by the test. Dark purple corresponds to a rejection frequency of $0\%$, meaning that the point is included in the confidence set in every replication, whereas bright yellow corresponds to a rejection frequency of $100\%$, meaning that the point is excluded in every replication. The panels are arranged in a $3\times 3$ layout: moving down the rows increases the downstream sample size $n_d$ from $500$ to $2000$, and moving across the columns increases the validation sample size $n_v$ from $500$ to $2000$.
\end{minipage}
\end{center}

\clearpage
\thispagestyle{plain}

\begin{center}
\includegraphics[width=.5\textwidth,height=.72\textheight,keepaspectratio]{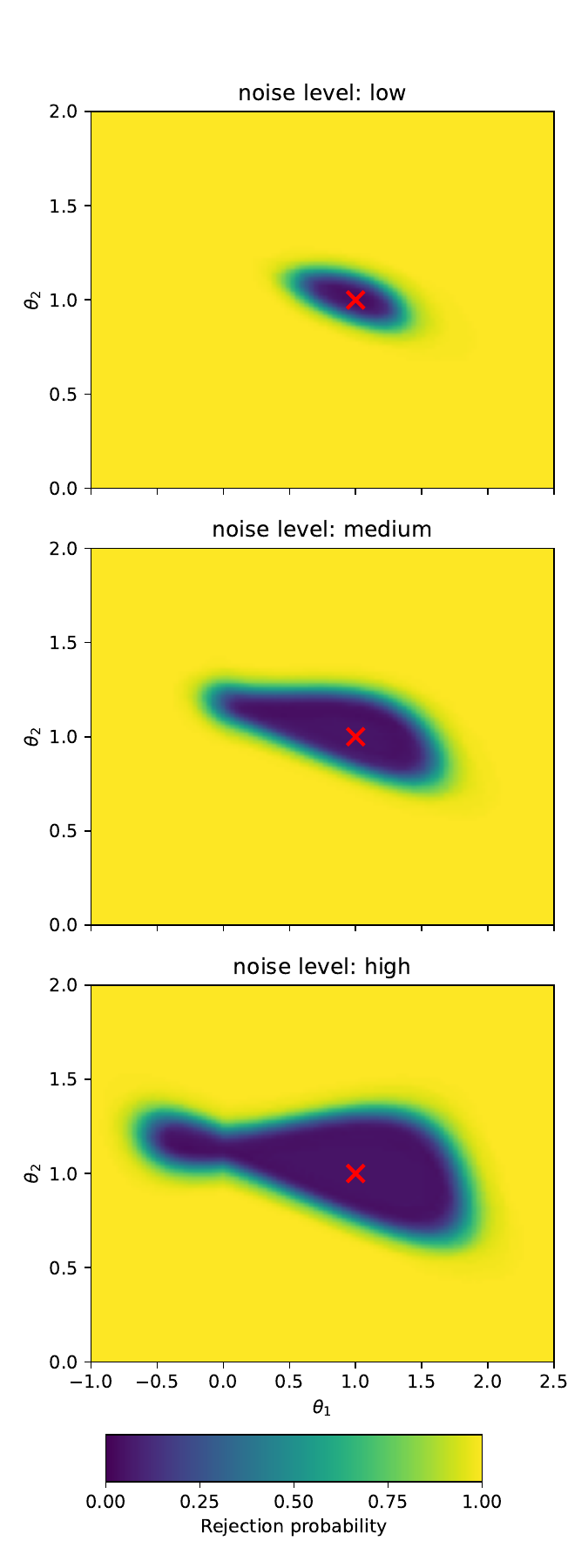}

\captionof{figure}{95\% Confidence Set across Prediction Noises}
\label{fig:power_noise}

\begin{minipage}{0.96\linewidth}
\footnotesize
\textit{Notes:} Each panel reports the empirical rejection frequency of the proposed test at the $5\%$ level under a different prediction-noise design. The heatmaps are constructed in the same manner as in Figure~\ref{fig:power}. Only the level of prediction noise varies across panels.
\end{minipage}
\end{center}
\clearpage

\subsection{Effect of Stratification}\label{sec:simu_stratification}

To illustrate the role of stratification, we modify the baseline design by allowing the scale of the latent shock in \eqref{eq:simu_Z_model} to vary with an observable component of $X$. Specifically, we let
\begin{equation*}
\sigma_\eta(X^{ex})
=
s_1 \indicator\{ |X_1| > 1 \}
+
s_2 \indicator\{ |X_1| \le 1 \},
\end{equation*}
where $X_1$ denotes the first component of $X$.\footnote{To preserve the exogeneity of $Z$ in the structural equation, we take the first component of $X$ to belong to the exogenous block in the data-generating process. This information is used only to define the stratification variable and is not otherwise exploited in the inference procedure.}
We then define the stratification variable by
\begin{equation*}
S = \indicator\{ |X_1| > 1 \}.
\end{equation*}
Under this design, the variance of the prediction noise varies across the two strata indexed by $S$. The setup is meant to mimic empirical environments in which the latent target variable $Z$ is easier to predict for some subpopulations than for others.

\begin{figure}[htbp!]
	\centering
	\includegraphics[width=\linewidth]{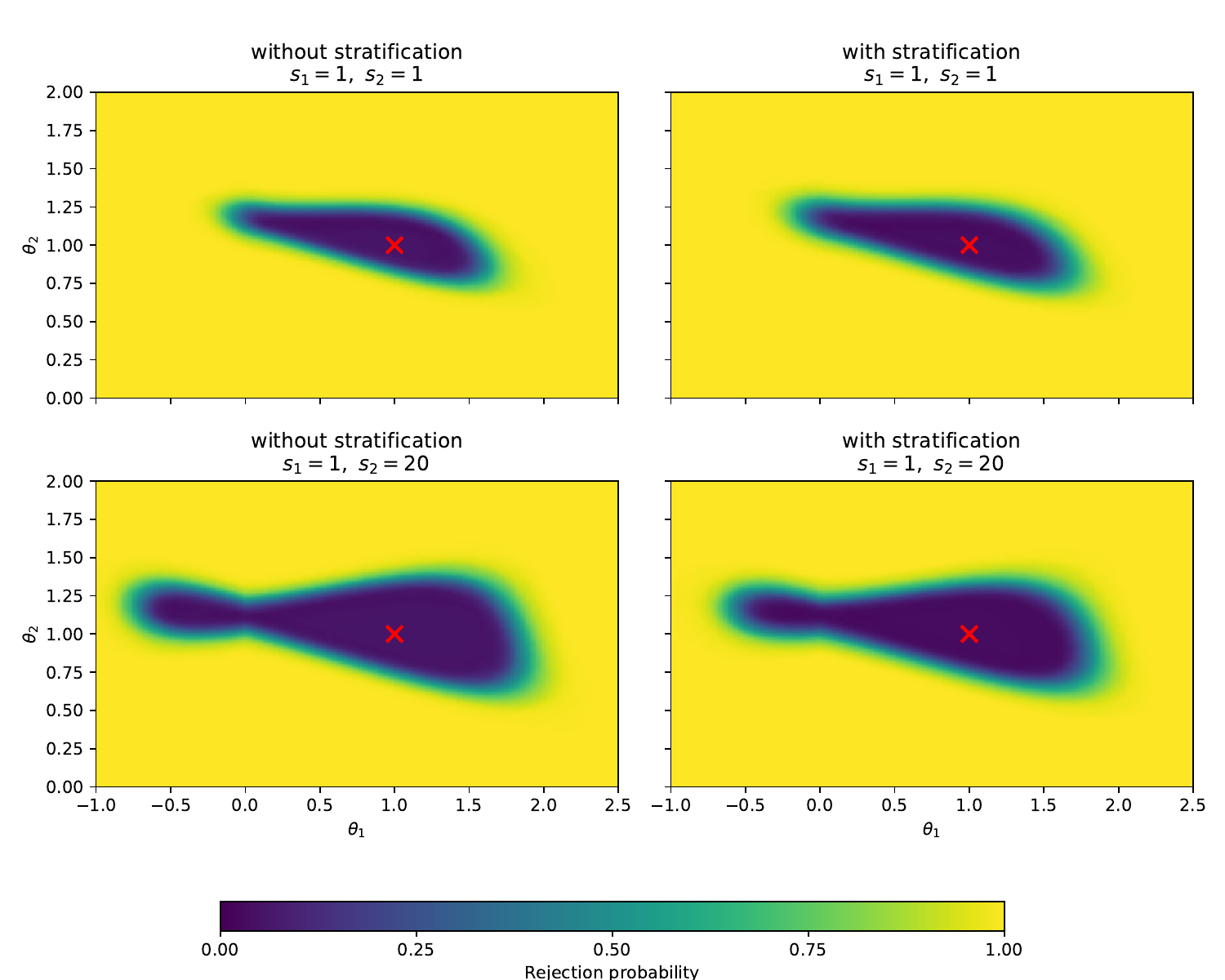}
	\caption{Effect of Stratification on Confidence Sets}
	\label{fig:stratification}
	\vspace{0.5em}
	\begin{minipage}{0.96\linewidth}
	\footnotesize
\textit{Notes:} Each panel reports the empirical rejection frequency of the proposed test at the $5\%$ level across homoskedastic and heteroskedastic prediction-noise designs, with and without stratification. The heatmaps are constructed in the same manner as in Figure~\ref{fig:power}.
	\end{minipage}
\end{figure}

We compare inference with and without incorporating $S$ under two configurations of $(s_1,s_2)$. We begin with the benchmark case $(s_1,s_2)=(1,1)$. In this design, the prediction noise is homoskedastic, so stratification does not provide any additional information for identification. Accordingly, incorporating $S$ should not tighten the identified set for $\theta$. This is what we see in the first row of Figure~\ref{fig:stratification}: adding $S$ delivers essentially no gain. If anything, because conditioning on $S$ increases the dimensionality of the problem, the finite-sample performance is slightly worse when stratification is included.

We next consider a highly heterogeneous design with $(s_1,s_2)=(1,20)$. In this case, conditional on $S=0$, the covariates $X$ are less informative about $Z$, so the classification accuracy of $\hat Z$ in that stratum is much lower. As shown in the lower panels of Figure~\ref{fig:stratification}, incorporating stratification yields noticeably tighter inference. Although $S$ does not improve the predictive performance of $\hat Z$ itself, it identifies the subpopulations in which the proxy is informative and those in which it is not.  More broadly, this exercise illustrates a central theme of our framework. The pair $(\hat Z,S)$ serves as a bridge that transfers information about $Z$ from the validation sample to the downstream moment conditions. As a result, even when $S$ does not directly enhance prediction, it can still improve inference by helping to characterize the conditional distribution of $Z$ given $(\hat Z,S)$.

\subsection{Continuous $\hat{Z}$ and Sieve Approximation}
To illustrate how the inference procedure applies to a continuous proxy and to assess the effect of sieve approximation, we modify the baseline design by replacing the binary proxy $\hat{Z}$ in \eqref{eq:simu_binary_proxy} with the continuous proxy
\begin{equation}\label{eq:simu_continuous_proxy}
    \hat{Z} = g(X) \coloneqq \hat{P}(Z = 1 \mid X).
\end{equation}
Because $Z$ remains binary whereas $\hat{Z}$ is continuous, this specification also shows that the procedure can accommodate settings in which $Z$ and $\hat{Z}$ take values in different spaces.

Since $\hat{Z}$ is continuous, the basis functions in \eqref{eq:binary_proxy_construction} are no longer appropriate. We therefore use the following order-$d$ sieve basis, which has dimension $K = 2(d + 1)$:
\begin{equation}\label{eq:continuous_proxy_sieve_example}
    \varphi_K(Z, \hat{Z}) =
    \begin{pmatrix}
    \indicator(Z = 1)\\
    \indicator(Z = 1)\hat{Z}\\
    \indicator(Z = 1)\hat{Z}^2\\
    \vdots\\
    \indicator(Z = 1)\hat{Z}^{d}\\
    \indicator(Z = 0)\\
    \indicator(Z = 0)\hat{Z}\\
    \indicator(Z = 0)\hat{Z}^2\\
    \vdots\\
    \indicator(Z = 0)\hat{Z}^{d}
    \end{pmatrix}.
\end{equation}
This basis can be viewed as the tensor product of the indicator basis $(\indicator(Z = 1), \indicator(Z = 0))$ and the polynomial basis $(1, \hat{Z}, \hat{Z}^2, \ldots, \hat{Z}^{d})$. We set $c_n = (\min\{n_d, n_v\})^{2/5}$.

\begin{table}[htbp]
    \centering
    \caption{Empirical Null Rejection Probabilities with Continuous Proxy}
    \label{tab:sieve_size}
    \begin{tabular}{cl *{3}{C{2cm}} }
        \toprule
        \multirow{2}{*}{Polynomial Order} & \multirow{2}{*}{$(n_d,n_v)$} & \multicolumn{3}{c}{Empirical Rejection Probability} \\
        \cmidrule(lr){3-5}
        & & $10\%$ & $5\%$ & $1\%$ \\
        \midrule

        \multirow{4}{*}{1}
			    & (500,  500) & 6.3\% & 3.3\% & 0.6\%  \\
 & (10000,  10000) & 4.2\% & 2.0\% & 0.4\%  \\
 & (10000,  500) & 7.8\% & 3.9\% & 0.8\%  \\
 & (500,  10000) & 6.3\% & 3.1\% & 0.5\%  \\

            &         &      &      &      \\

        \multirow{4}{*}{2}
			    & (500,  500) & 7.6\% & 3.9\% & 0.7\%  \\
 & (10000,  10000) & 8.8\% & 4.5\% & 0.9\%  \\
 & (10000,  500) & 9.2\% & 4.7\% & 1.1\%  \\
 & (500,  10000) & 8.1\% & 3.8\% & 0.6\%  \\

            &         &      &      &      \\
        \multirow{4}{*}{3}
			    & (500,  500) & 7.6\% & 3.9\% & 0.7\%  \\
 & (10000,  10000) & 8.9\% & 4.5\% & 0.9\%  \\
 & (10000,  500) & 9.5\% & 4.9\% & 1.1\%  \\
 & (500,  10000) & 8.1\% & 4.1\% & 0.7\%  \\

            &         &      &      &      \\

        \multirow{4}{*}{4}
			    & (500,  500) & 7.9\% & 4.0\% & 0.7\%  \\
 & (10000,  10000) & 9.3\% & 4.6\% & 0.9\%  \\
 & (10000,  500) & 9.6\% & 4.9\% & 1.1\%  \\
 & (500,  10000) & 8.3\% & 4.2\% & 0.8\%  \\

            &         &      &      &      \\

        \multirow{4}{*}{5}
			    & (500,  500) & 7.7\% & 4.0\% & 0.7\%  \\
 & (10000,  10000) & 9.1\% & 4.5\% & 0.9\%  \\
 & (10000,  500) & 9.4\% & 4.9\% & 1.1\%  \\
 & (500,  10000) & 8.2\% & 4.1\% & 0.7\%  \\
\bottomrule
    \end{tabular}

    \vspace{0.5em}
    \begin{minipage}{0.9\linewidth}
			\footnotesize \textit{Notes:} Entries report empirical null rejection frequencies across $N=10000$ Monte Carlo replications. The prediction noise is fixed at the homoskedastic level $s_M = 1.0$. The inference procedure is the cross-fitted test developed in Section~\ref{sec:validation_inference} with sieve defined in \eqref{eq:continuous_proxy_sieve_example} and $c_n = \underline{n}^{2/5}$. The proxy $\hat Z$ is constructed using the logistic LASSO prediction rule in \eqref{eq:simu_continuous_proxy}. For reference, on a Mac with an M4 Max CPU (16 CPU cores), it takes less than 260 seconds to finish all 10,000 replications of the tests when $(n_d,n_v)=(10000,10000)$, regardless of the polynomial order.
    \end{minipage}
\end{table}

Table~\ref{tab:sieve_size} reports empirical rejection probabilities for sieve orders $d = 1,2,\ldots,5$ across different sample-size configurations. In this exercise, the prediction noise is fixed at the homoskedastic medium level $s_M = 1.0$. As Lemma~\ref{lem:sieve_approx} suggests, replacing the infinite-dimensional dual space with a finite-dimensional sieve approximation preserves the validity of the underlying necessary condition. Accordingly, the proposed procedure exhibits good size control across all polynomial orders. At the same time, the polynomial order of the sieve affects the degree of conservativeness. When the sieve order is low, the sieve space may approximate the optimal dual function $\psi$ in \eqref{eq:convex_characterization} only coarsely, so the resulting approximation error can be non-negligible. As discussed in Section~\ref{sec:validation_inference}, this approximation error leads to a gap between the actual dual $\mathscr{D}(F, G;\theta)$ and its finite-dimensional approximation $\mathscr{D}_K(F, G;\theta)$, therefore making the test more conservative. This pattern is visible in Table~\ref{tab:sieve_size}, especially when the polynomial order is $1$.

As the polynomial order increases, the sieve approximation improves and the test becomes less conservative. This is borne out in the simulation results: moving from a first-order to a second-order sieve noticeably reduces conservativeness, and the third-order and the fourth-order sieve yield some further improvements. However, we observe diminishing gain as the order of polynomial increases. This suggests that, once the sieve is sufficiently flexible, the remaining conservativeness is driven primarily by the same force that underlies the results in Table~\ref{tab:size_control}. Specifically, our critical values are calibrated using the least favorable joint distribution consistent with the known standard normal marginals of the fold-specific statistics. This calibration yields a tractable pivotal critical value under relatively weak regularity conditions, but it also introduces an inherent degree of conservativeness.

Finally, we compare the informativeness of the continuous proxy in \eqref{eq:simu_continuous_proxy} with that of the binary proxy in \eqref{eq:simu_binary_proxy}. As discussed in Remark~\ref{rem:data_combination} and Section~\ref{sec:uncond_OT_characterization}, the proxy $\hat{Z}$ is used as a bridge linking the downstream and validation samples. The more information $\hat{Z}$ preserves about the relationship between the high-dimensional input $X$ and the latent target $Z$, the tighter the resulting identified set and confidence set should be. In the present design, the binary proxy is obtained from the continuous proxy by comparing it with $0.5$ and is therefore a coarsening of it. The continuous proxy thus contains more information than the binary proxy, suggesting that it should yield tighter confidence sets.

To examine this implication, we simulate confidence sets for both the binary proxy and the continuous proxy, the latter using polynomial sieve bases of orders $d=1,\ldots,5$. Figure~\ref{fig:sieve_power_1000} reports the resulting confidence sets for the case $n_d=n_v=1000$, while Figure~\ref{fig:sieve_power_5000} reports the corresponding results for the larger sample size $n_d=n_v=5000$. In both exercises, the prediction noise is fixed at the homoskedastic medium level $s_M=1.0$. As in Figure~\ref{fig:power}, we evaluate the test on a grid of $10{,}000$ candidate parameter values and repeat the exercise over $500$ Monte Carlo replications. The results confirm the conjecture. Even with a first-order polynomial sieve, the continuous proxy yields tighter confidence sets than the binary proxy. This gain becomes more pronounced at the larger sample size. Moreover, the gains from using higher-order polynomial sieves are also more evident when the sample size is larger, since the reduction in sieve approximation error then more clearly outweighs the cost of estimating a higher-dimensional set of sieve coefficients.

\clearpage
\thispagestyle{plain}

\begin{center}
\includegraphics[width=\textwidth,height=.8\textheight,keepaspectratio]{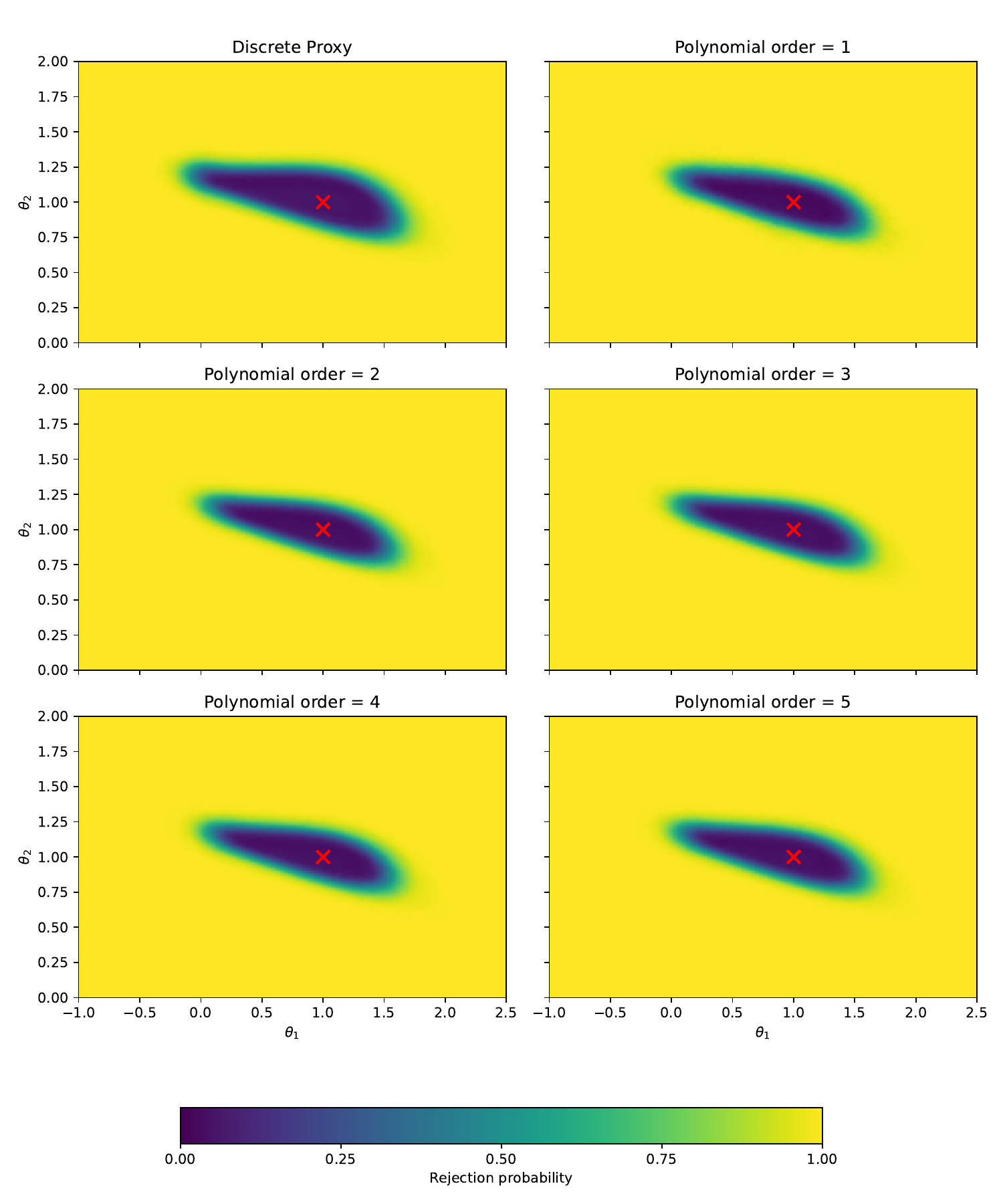}

\captionof{figure}{95\% Confidence Sets with Binary and Continuous Proxies: $(n_d,n_v)=(1000,1000)$}
\label{fig:sieve_power_1000}

\begin{minipage}{0.96\linewidth}
\footnotesize
\textit{Notes:} Each panel reports the empirical rejection frequency of the proposed test at the $5\%$ level over a grid of $10{,}000$ candidate values of $(\theta_1,\theta_2)$, based on $500$ Monte Carlo replications. The upper-left panel corresponds to the binary proxy defined in \eqref{eq:simu_binary_proxy}, using the sieve basis in \eqref{eq:binary_proxy_construction}. The remaining panels correspond to the continuous proxy defined in \eqref{eq:simu_continuous_proxy}, using the polynomial sieve basis in \eqref{eq:continuous_proxy_sieve_example} with orders $d=1,\ldots,5$. The sample sizes are fixed at $n_d=n_v=1000$.
\end{minipage}
\end{center}

\clearpage
\thispagestyle{plain}

\begin{center}
\includegraphics[width=\textwidth,height=.8\textheight,keepaspectratio]{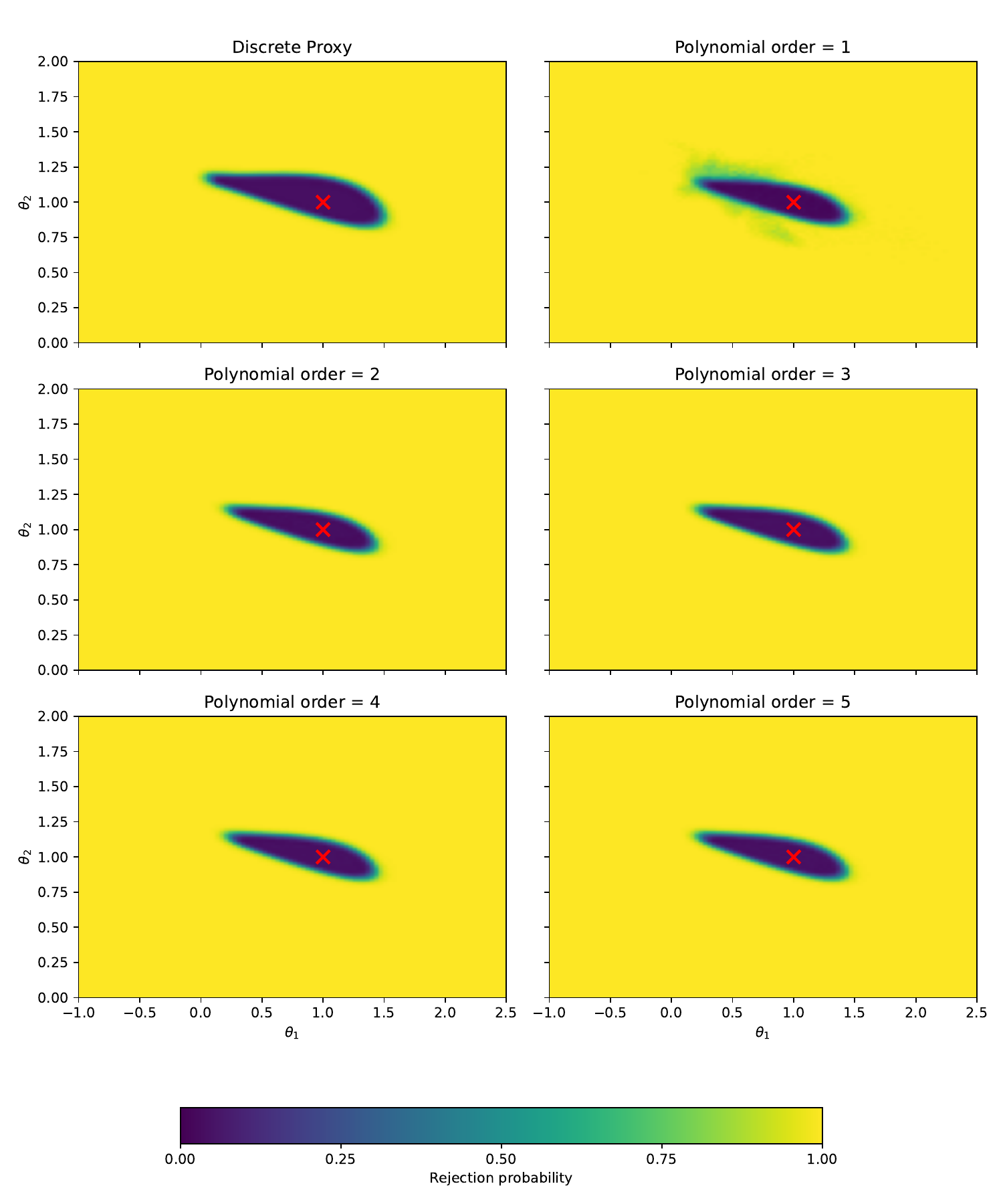}

\captionof{figure}{95\% Confidence Sets with Binary and Continuous Proxies: $(n_d,n_v)=(5000,5000)$}
\label{fig:sieve_power_5000}

\begin{minipage}{0.96\linewidth}
\footnotesize
\textit{Notes:} Each panel reports the empirical rejection frequency of the proposed test at the $5\%$ level over a grid of $10{,}000$ candidate values of $(\theta_1,\theta_2)$, based on $500$ Monte Carlo replications. The upper-left panel corresponds to the binary proxy defined in \eqref{eq:simu_binary_proxy}, using the sieve basis in \eqref{eq:binary_proxy_construction}. The remaining panels correspond to the continuous proxy defined in \eqref{eq:simu_continuous_proxy}, using the polynomial sieve basis in \eqref{eq:continuous_proxy_sieve_example} with orders $d=1,\ldots,5$. The sample sizes are fixed at $n_d=n_v=5000$.
\end{minipage}
\end{center}

\section{Extension}\label{sec:extensions}
We now consider an extension in which $\hat{Z}$ is vector-valued, but the available validation data do not jointly observe the full vector $(Z,\hat{Z},S)$. Instead, the researcher may have access to several validation samples, each of which contains information only on a subvector of $(Z,\hat{Z},S)$. In this case, the validation data identify a collection of lower-dimensional marginal distributions rather than the full joint distribution of $(Z,\hat{Z},S)$. As illustrated in Example~\ref{ex:media_slant}, this situation can arise when $\hat{Z}$ consists of several ML-generated proxies, each constructed separately rather than estimated jointly as a single vector-valued predictor. We show that this setting can be accommodated by a straightforward extension of the framework developed above.

Formally, partition $(Z,\hat{Z},S)$ into $L$ subvectors,
\[
(Z_1,\hat{Z}_1,S_1),\ldots,(Z_L,\hat{Z}_L,S_L).
\]
Let $G_l$ denote the population distribution of the validation observables $(Z_l,\hat{Z}_l,S_l)$ for $l=1,\ldots,L$. The compatibility condition in Assumption~\ref{assu:consistent_validation} is then replaced by the following.

\begin{assumption1prime}[Compatibility of Multiple Marginal Distributions]\label{assu:consistent_validation_multiple_marginal}
	Let $H_0$ denote the true joint distribution of $(W,Z,\hat Z,S)$. Assume that (\emph{i}) the marginal distribution of $(W,\hat Z,S)$ under $H_0$ is $F$, and (\emph{ii}) for each $l=1,\ldots,L$, the marginal distribution of $(Z_l,\hat Z_l,S_l)$ under $H_0$ is $G_l$.
\end{assumption1prime}

Define $\mathcal{H}_L\!\left(F,(G_l)_{l=1}^L\right)$ to be the set of all joint distributions $H$ over $(W,Z,\hat Z,S)$ whose marginal distribution on $(W,\hat Z,S)$ is $F$ and whose marginal distribution on $(Z_l,\hat Z_l,S_l)$ is $G_l$ for each $l=1,\ldots,L$. This leads to the following notion of identified set.

\begin{definition}[Identified Set with Multiple Validation Marginals]\label{def:id_set_validation_multiple_marginal}
	The identified set with multiple validation marginals $\Theta_I\!\left(F,(G_l)_{l=1}^L\right)$ is defined by
	\[
	\Theta_I\!\left(F,(G_l)_{l=1}^L\right)
	\coloneqq
	\left\{
	\theta\in\Theta :
	\exists\, H\in \mathcal{H}_L\!\left(F,(G_l)_{l=1}^L\right)
	\text{ such that }
	\expt_H\!\left[q(W,Z;\theta)\right]=0
	\right\}.
	\]
\end{definition}

As in the baseline case, we use the same decoupling device. Let $(\hat Z',S')$ denote auxiliary random vectors, and for each $l=1,\ldots,L$, let $(\hat Z_l',S_l')$ denote the subvector corresponding to $(\hat Z_l,S_l)$. Define $\mathcal{H}_L'\!\left(F,(G_l)_{l=1}^L\right)$ to be the set of all joint distributions over $(W,Z,\hat Z,S,\hat Z',S')$ whose marginal distribution on $(W,\hat Z,S)$ is $F$ and whose marginal distribution on $(Z_l,\hat Z_l',S_l')$ is $G_l$ for each $l=1,\ldots,L$. The regularity conditions in Assumption \ref{assu:regularity_minimax} then needs to be replaced by the following assumption.

\begin{assumption2prime}[Regularity conditions]\label{assu:regularity_minimax_multi_marginal}
	For any parameter $\theta\in \Theta$, assume that Assumptions \ref{assu:regularity_minimax}\ref{enu:polish}-\ref{enu:fintie_variance} hold, and that there exist non-negative continuous functions $b_1 \in \mathcal{L}^1(F)$ and $b_{2,l} \in \mathcal{L}^1(G_l)$ for each $l=1,...,L$ such that the structural moment function is bounded by an additively separable envelope:
		\[
			\|q(w, z; \theta)\| \le b_1(w) + \sum_{l=1}^L b_{2,l}(z) \quad \text{for all } w \in \mathcal{W} \text{ and } z \in \mathcal{Z}.
		\]
\end{assumption2prime}

We then obtain the following counterpart to Theorem~\ref{thm:id_validation}.

\begin{theorem}\label{thm:id_validation_multi_marginal}
	Suppose Assumptions 1' and 2' hold. Then $\theta \in \Theta_I\!\left(F,(G_l)_{l=1}^L\right)$ if and only if
	\begin{equation}\label{eq:maxmin_characterization_multi_marginal}
		\max_{\lambda \in \mathbb{B}}
		\min_{H'\in \mathcal{H}_L'\left(F,(G_l)_{l=1}^L\right)}
		\expt_{H'}\!\left[\lambda^{\top}\tilde{q}(W,Z,\hat{Z},S,\hat{Z}',S';\theta)\right]
		\le 0.
	\end{equation}
	Moreover, both the maximum and the minimum in \eqref{eq:maxmin_characterization_multi_marginal} are attained.
\end{theorem}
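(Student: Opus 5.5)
The plan is to mirror the argument behind Theorem~\ref{thm:id_validation}. First, establish the equivalence between $\Theta_I(F,(G_l)_{l=1}^L)$ and the augmented representation built on $\mathcal{H}_L'$. Then apply a minimax theorem to swap $\min_{H'}$ and $\max_\lambda$.

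\emph{Step 1 (decoupling equivalence).} Suppose $\theta\in\Theta_I(F,(G_l))$ with witness $H$. Take $H'$ to be the law of $(W,Z,\hat Z,S,\hat Z,S)$, i.e.\ set $\hat Z'=\hat Z$ and $S'=S$. Then $H'\in\mathcal{H}_L'$, since each $(Z_l,\hat Z_l',S_l')=(Z_l,\hat Z_l,S_l)\sim G_l$, and $\expt_{H'}[\tilde q(\cdot;\theta)]=0$.

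Conversely, suppose $H'\in\mathcal{H}_L'$ satisfies $\expt_{H'}[\tilde q]=0$. The absolute-value components force $\hat Z'=\hat Z$ and $S'=S$ almost surely. Hence the marginal of $H'$ on $(W,Z,\hat Z,S)$ has $(W,\hat Z,S)\sim F$ and $(Z_l,\hat Z_l,S_l)\sim G_l$ for every $l$, so it lies in $\mathcal{H}_L$ and witnesses $\theta\in\Theta_I$.

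Since $\mathbb{B}$ contains a neighborhood of the origin, $\expt_{H'}[\tilde q]=0$ holds iff $\sup_{\lambda\in\mathbb B}\lambda^\top\expt_{H'}[\tilde q]\le 0$. Therefore $\theta\in\Theta_I$ iff $\min_{H'}\max_\lambda\expt_{H'}[\lambda^\top\tilde q]\le 0$, provided the minimum is attained.

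\emph{Step 2 (compactness and continuity).} Next I show that $\mathcal{H}_L'$ is nonempty, convex, and weakly compact.

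Nonemptiness: couple $G_1,\dots,G_L$ independently, which is legitimate because the subvectors $(Z_l,\hat Z_l',S_l')$ partition the validation variables and so share no coordinates. Then take the product of this coupling with $F$.

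Convexity is immediate, since marginal constraints are linear in $H'$.

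Tightness: every marginal of $H'$ (one on $(W,\hat Z,S)$ and $L$ on the validation subvectors) is a fixed distribution, and each is tight on a Polish space by Assumption~\ref{assu:regularity_minimax}\ref{enu:polish}. A finite product of compacts is compact, so $\mathcal{H}_L'$ is tight. Weak closedness follows because marginal maps are continuous. Prokhorov's theorem then gives compactness.

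Continuity of the objective: for fixed $\lambda$, $H'\mapsto\expt_{H'}[\lambda^\top\tilde q]$ is linear. I claim it is weakly lower semicontinuous, in fact continuous, on $\mathcal{H}_L'$. The integrand is continuous and dominated by
\[
b_1(w)+\sum_l b_{2,l}(z)+\|\hat z\|+\|\hat z'\|+\|s\|+\|s'\|,
\]
up to a constant multiple depending on $\mathbb B$. Each summand depends on a single constrained marginal, so its integral is the same for all $H'\in\mathcal{H}_L'$. A separable envelope with constant integral along the class gives uniform integrability. Hence the integral is weakly continuous on the class, e.g.\ via the standard truncation argument as in Villani's treatment of OT.

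The objective is also linear, hence concave and continuous, in $\lambda$ over the compact convex set $\mathbb B$.

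\emph{Step 3 (minimax).} Apply Sion's minimax theorem to conclude
\[
\min_{H'}\max_\lambda\expt_{H'}[\lambda^\top\tilde q]=\max_\lambda\min_{H'}\expt_{H'}[\lambda^\top\tilde q].
\]
Attainment of both extrema follows from compactness together with the semicontinuity established in Step 2. Combined with Step 1, this gives the claimed equivalence.

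I expect the main obstacle to be Step 2, specifically the weak continuity of the moment functional under the additively separable envelope, because $q$ may be unbounded. The key is that each envelope piece depends only on one of the constrained marginals. I would handle it by truncating at level $M$: the truncated integrand is bounded and continuous, and the tail error is controlled uniformly over $\mathcal{H}_L'$ by $\sum$ of the marginal tail integrals $\expt_{G_l}[b_{2,l}\mathbb 1\{b_{2,l}>M\}]$ and their analogues for $F$, which tend to zero as $M\to\infty$.
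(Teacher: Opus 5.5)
Your proposal is correct and follows essentially the same route as the paper's proof. It uses the decoupling equivalence via the $|\hat Z_j-\hat Z_j'|$ and $|S_k-S_k'|$ moments, weak compactness of the multi-marginal Fr\'{e}chet class by Prokhorov, and weak continuity of $H'\mapsto \expt_{H'}[\lambda^\top\tilde q]$ from the additively separable envelope, whose pieces each depend on a single fixed marginal, followed by Sion's minimax theorem. The differences are minor refinements: you prove continuity by explicit truncation rather than citing (an extension of) Villani's Lemma 4.3 applied to $\pm\lambda$, and you make explicit the nonemptiness of $\mathcal{H}_L'$ and the attainment of $\min_{H'}\max_\lambda$ needed for the ``if'' direction, which the paper leaves implicit.
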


The inner minimization problem in \eqref{eq:maxmin_characterization_multi_marginal} is now a multi-marginal optimal transport problem. For each $l=1,\ldots,L$, let $\mathcal X_{v,l}$ denote the support of $(Z_l,\hat Z_l,S_l)$. This problem admits the dual representation
\begin{align*}
&\min_{H'\in\mathcal H_L'\left(F,(G_l)_{l=1}^L\right)}
\expt_{H'}\!\left[\lambda^{\top}\tilde q(W,Z,\hat Z,S,\hat Z',S';\theta)\right] \\
	=&
\sup_{\psi_l\in \mathcal{L}^1(G_{l}),\, l=1,\ldots,L}
\expt_F\!\left[
\inf_{(z,\hat z',s')\in\mathcal X_v}
\left\{
\lambda^{\top}\tilde q(W,z,\hat Z,S,\hat z',s';\theta)
-
\sum_{l=1}^L \psi_l(z_l,\hat z_l',s_l')
\right\}
\right]
+
\sum_{l=1}^L \expt_{G_l}\!\left[\psi_l(Z_l,\hat Z_l',S_l')\right].
\end{align*}
Consequently, the inequality in \eqref{eq:maxmin_characterization_multi_marginal} is equivalent to
\begin{multline}\label{eq:convex_characterization_multi_marginal}
\mathscr D\!\left(F,(G_l)_{l=1}^L;\theta\right)
\coloneqq
\sup_{\lambda\in\mathbb B,\ \psi_l\in \mathcal{L}^1(G_{l}),\, l=1,\ldots,L}
\Bigg\{\\
\expt_F\!\left[
\inf_{(z,\hat z',s')\in\mathcal X_v}
\left\{
\lambda^{\top}\tilde q(W,z,\hat Z,S,\hat z',s';\theta)
-
\sum_{l=1}^L \psi_l(z_l,\hat z_l',s_l')
\right\}
\right] 
+
\sum_{l=1}^L \expt_{G_l}\!\left[\psi_l(Z_l,\hat Z_l',S_l')\right]
\Bigg\}
\le 0.
\end{multline}

To obtain a computationally tractable approximation, we again replace each infinite-dimensional dual function class with a sieve space. For simplicity, suppose that each sieve has dimension $K$, and define
\[
\mathcal S_{l,K}(\mathcal X_{v,l})
\coloneqq
\left\{
\psi:\ \psi(z_l,\hat z_l',s_l')=\beta_l^\top \varphi_l(z_l,\hat z_l',s_l')
\text{ for some } \beta_l\in\mathbb R^K
\right\},
\]
where $\varphi_l(z_l,\hat z_l',s_l')$ is a vector of $K$ known basis functions on $\mathcal X_{v,l}$. Restricting each dual function $\psi_l$ in \eqref{eq:convex_characterization_multi_marginal} to the corresponding sieve space $\mathcal S_{l,K}(\mathcal X_{v,l})$ yields the finite-dimensional approximation
\begin{multline*}
\mathscr D_K\!\left(F,(G_l)_{l=1}^L;\theta\right)
\coloneqq
\sup_{\lambda\in\mathbb B,\ \beta_l\in\mathbb R^K,\, l=1,\ldots,L}
\Bigg\{\\
\expt_F\!\left[
\inf_{(z,\hat z',s')\in\mathcal X_v}
\left\{
\lambda^{\top}\tilde q(W,z,\hat Z,S,\hat z',s';\theta)
-
\sum_{l=1}^L \beta_l^{\top}\varphi_l(z_l,\hat z_l',s_l')
\right\}
\right]
+
\sum_{l=1}^L \expt_{G_l}\!\left[\beta_l^{\top}\varphi_l(Z_l,\hat Z_l',S_l')\right]
\Bigg\}.
\end{multline*}

The inference procedure developed in Section~\ref{sec:validation_inference} extends straightforwardly to this setting. In essence, one applies the same sample-splitting and cross-fitting strategy as before. One fold is used to solve for the optimizer $(\hat{\lambda},\hat{\beta}_1,\ldots,\hat{\beta}_L)$ in the multi-marginal sieve problem, while the other fold is used to evaluate the corresponding test statistic given $(\hat{\lambda},\hat{\beta}_1,\ldots,\hat{\beta}_L)$. The roles of the two folds are then reversed to obtain a second fold-specific statistic, and the two are aggregated using their maximum. Because these modifications are largely mechanical and introduce no new conceptual issues, we omit the details for brevity.

\section{Discussion}\label{sec:discussion}

The central conceptual innovation of our approach is to treat the ML-generated proxy not as a naive plug-in substitute for the unobserved latent variable, but as a dimension-reduction device that empirically bridges the downstream sample and an auxiliary validation sample. To operationalize this idea, we establish new identification results for data combination and propose a novel, resampling-free inference procedure based on sample splitting and cross-fitting, both of which are of independent econometric interest. 

For the applied researcher, this perspective fully decouples the \emph{validity} of econometric inference from the \emph{statistical consistency} of the upstream ML algorithm. Researchers are free to deploy the most flexible prediction algorithms available---whether they output binary classifications, continuous predicted probabilities, or multiple competing scores---confident that the resulting econometric bounds will remain asymptotically valid. Consequently, empirical practitioners can select whichever ML procedure proves most effective in practice for their specific application, unconstrained by the need for formal statistical guarantees from the prediction step.

For researchers developing upstream ML methods, our framework motivates an alternative criterion for evaluating predictive algorithms. When the ultimate goal is downstream empirical analysis, an optimal prediction rule need not minimize the discrepancy between the prediction and the true target variable. Instead, it should serve as an effective dimension-reduction tool that preserves as much information as possible from the raw, unstructured data regarding the latent target. Although not formally explored here, the design of ML algorithms explicitly tailored for this information-preservation objective represents a promising direction for future research.

Our framework also leaves several important theoretical questions open for future research. First, our baseline identification analysis presumes that the downstream target population and the upstream validation sample exhibit compatible marginal distributions over the overlap variables. In empirical practice, however, the validation data---often a held-out subset of an upstream training corpus---may suffer from covariate shift relative to the downstream population of interest. Formally integrating reweighting techniques, such as inverse probability weighting, into the optimal transport characterization to explicitly account for sample selection and distribution shift remains a highly relevant theoretical extension. Second, the cross-fitted inference procedure developed in this paper provides valid tests for the full structural parameter vector. While applied researchers can construct valid, conservative confidence sets for individual parameters of interest (e.g., a specific treatment effect) via projection, this approach sacrifices statistical power. Developing a localized, profiling-based inference theory that achieves sharp, non-conservative bounds for subvectors within this optimal transport framework is a challenging but vital next step.

More broadly, incorporating ML-generated proxies into empirical economic analysis is an important and increasingly relevant topic. By combining tools from optimal transport, partial identification, and cross-fitting, the framework developed here offers a tractable and versatile way to do so. It allows researchers to use complex machine-learned measures while remaining agnostic about the fine details of the upstream prediction step under mild data requirement. We hope that this perspective proves useful both for applied work that relies on ML-based measurement and for future research at the intersection of econometrics and machine learning.

\newpage 
\begin{appendix}
	\section{Proof for Theorem \ref{thm:id_validation}}\label{sec:proof_id_validation}
	Let $\theta \in \Theta$ be fixed. To streamline notation, let $X_d \coloneqq (W, \hat{Z}, S)$ and $X_v \coloneqq (Z, \hat{Z}', S')$. We define the bilinear objective functional $\Phi : \mathcal{H}'(F, G) \times \mathbb{B} \to \real$ as
	\[
		\Phi(H', \lambda) \coloneqq \expt_{H'}\big[\lambda^{\top}\tilde{q}(X_d, X_v; \theta)\big].
	\]
	The proof proceeds in three steps.
	
	\noindent\textbf{Step 1. }
	By Definition~\ref{def:alias_id_set}, $\theta \in \Theta'_I(F, G)$ if and only if there exists a joint distribution $H^* \in \mathcal{H}'(F, G)$ such that $\expt_{H^*}[\tilde{q}(X_d, X_v; \theta)] = \mathbf{0}$. 
	
	For any coupling $H'$, let $\mu(H') \coloneqq \expt_{H'}[\tilde{q}(\cdot; \theta)]$. Because $\mathbb{B}$ is a compact convex set containing an open neighborhood of the origin, there exists an $\epsilon > 0$ such that the closed ball of radius $\epsilon$ centered at the origin is entirely contained in $\mathbb{B}$. If $\mu(H') \neq \mathbf{0}$, we can select $\lambda^* = \epsilon \frac{\mu(H')}{\|\mu(H')\|} \in \mathbb{B}$, which yields
	\[
		\sup_{\lambda \in \mathbb{B}} \lambda^{\top}\mu(H') \ge (\lambda^*)^{\top}\mu(H') = \epsilon \|\mu(H')\| > 0.
	\]
	Conversely, if $\mu(H') = \mathbf{0}$, then trivially $\sup_{\lambda \in \mathbb{B}} \lambda^\top \mu(H') = 0$. Because $\mathbf{0} \in \mathbb{B}$, the supremum is always bounded below by zero. Thus, $\sup_{\lambda \in \mathbb{B}} \Phi(H', \lambda) \le 0$ if and only if exactly $\mu(H') = \mathbf{0}$. Therefore, $\theta \in \Theta'_I(F, G)$ if and only if there exists an $H^* \in \mathcal{H}'(F, G)$ such that $\sup_{\lambda \in \mathbb{B}} \Phi(H^*, \lambda) \le 0$, i.e.,
	\begin{equation}\label{eq:proof_min_sup}
		\inf_{H'\in \mathcal{H}'(F, G)} \sup_{\lambda \in \mathbb{B}} \, \Phi(H', \lambda) \le 0.
	\end{equation}
	
	\noindent\textbf{Step 2. }
	Because $\mathcal{X}_d$ and $\mathcal{X}_v$ are closed subsets of Euclidean spaces (Assumption~\ref{assu:regularity_minimax}(i)), the marginal probability measures $F$ and $G$ are tight. By Prokhorov's Theorem, the set of all joint probability measures on $\mathcal{X}_d \times \mathcal{X}_v$ with fixed tight marginals is relatively compact in the topology of weak convergence. Because the marginal constraints defining the Fr\'{e}chet class $\mathcal{H}'(F, G)$ are closed under weak convergence, $\mathcal{H}'(F, G)$ is weakly closed. Thus, $\mathcal{H}'(F, G)$ is a convex and weakly compact set. Moreover, the parameter space $\mathbb{B} \subset \real^{d_q + d_{\hat{Z}} + d_S}$ is, by definition, a compact and convex set.
	
	Clearly, $\Phi(H', \lambda)$ is linear (hence concave and convex) in both $H'$ and $\lambda$. For a fixed $H'$, $\Phi(\cdot, \lambda)$ is trivially continuous in $\lambda$. We must show that for a fixed $\lambda$, the mapping $H' \mapsto \Phi(H', \lambda)$ is continuous with respect to the weak topology on $\mathcal{H}'(F, G)$.
	
	By Assumption~\ref{assu:regularity_minimax}(ii), the augmented moment vector $\tilde{q}(X_d, X_v; \theta)$ is continuous in the topology of weak convergence. To establish weak continuity of its expectation over the Fr\'{e}chet class, we must bound it by an integrable envelope. Since all norms on finite-dimensional Euclidean spaces are equivalent, it suffices to bound the $L_1$ norm of $\tilde{q}$, which is simply the sum of the absolute values of its components:
	\[
		\|\tilde{q}(X_d, X_v; \theta)\|_1 = \|q(W, Z; \theta)\|_1 + \|\hat{Z} - \hat{Z}'\|_1 + \|S - S'\|_1.
	\]
	Applying the basic inequality $\|a - b\|_1 \le \|a\|_1 + \|b\|_1$ and the structural envelope bound from Assumption~\ref{assu:regularity_minimax}(iv) (which, by norm equivalence, implies a similar bound for the $L_1$ norm of $q$, up to a generic constant $C > 0$), we can construct an additively separable continuous bounding function $A(X_d) + B(X_v)$ such that $\|\tilde{q}(X_d, X_v; \theta)\|_1 \le A(X_d) + B(X_v)$, where
	\begin{align*}
		A(W, \hat{Z}, S) &\coloneqq C [b_1(W) + \|\hat{Z}\| + \|S\|], \\
		B(Z, \hat{Z}', S') &\coloneqq C [ b_2(Z) + \|\hat{Z}'\| + \|S'\|].
	\end{align*}
	Under Assumption~\ref{assu:consistent_validation}, the marginal distributions of $(\hat{Z}, S)$ under $F$ and $(\hat{Z}', S')$ under $G$ are identical. Consequently, Assumption~\ref{assu:regularity_minimax}(iii) implies that both sets of variables possess finite second moments under their respective marginals. It follows immediately that $A \in \mathcal{L}^1(F)$ and $B \in \mathcal{L}^1(G)$. 
	
	Because the marginals of any $H' \in \mathcal{H}'(F, G)$ are exactly $F$ and $G$, the expected value of this bounding envelope is identical and finite for every coupling in the Fr\'{e}chet class. By standard optimal transport theory \autocite[e.g., Lemma 4.3 in][]{villani_optimal_2009}, for any $\lambda\in \mathbb{B}$, $\Phi(H',\lambda)$ is lower semicontinuous in $H'$ with respect to weak convergence. Since $\Phi(H', -\lambda) = -\Phi(H',\lambda)$. This also implies that for any $\lambda\in \mathbb{B}$, $\Phi(H',\lambda)$ is upper semicontinuous in $H'$ with respect to weak convergence.  Thus, $H' \mapsto \Phi(H', \lambda)$ is weakly continuous.
	
\noindent\textbf{Step 3.}
	We have established that $\mathcal{H}'(F, G)$ and $\mathbb{B}$ are compact convex sets, and that $\Phi(H', \lambda)$ is a continuous bilinear functional. Therefore, all conditions for Sion's Minimax Theorem \textcite{sion_general_1958} are satisfied. This yields two crucial results: the infimum over $H'$ can be attained (i.e., $\inf$ is a $\min$), and we can switch the order of these operators:
	\begin{equation}\label{eq:proof_minimax_swap}
		\min_{H'\in \mathcal{H}'(F, G)} \max_{\lambda \in \mathbb{B}} \, \Phi(H', \lambda) 
		= 
		\max_{\lambda \in \mathbb{B}} \min_{H'\in \mathcal{H}'(F, G)} \, \Phi(H', \lambda).
	\end{equation}
	Substituting \eqref{eq:proof_minimax_swap} into \eqref{eq:proof_min_sup} yields that $\theta \in \Theta'_I(F, G)$ if and only if
	\[
		\max_{\lambda \in \mathbb{B}} \min_{H'\in \mathcal{H}'(F, G)} \expt_{H'}\big[\lambda^{\top}\tilde{q}(X_d, X_v; \theta)\big] \le 0.
	\]
	Since we established in the main text that $\Theta'_I(F, G) = \Theta_I(F, G)$, this completes the proof.

	\section{Proof for Lemma \ref{lem:kantorovich_duality}}
\begin{proof}
Fix an arbitrary $\theta\in \Theta$ throughout the proof.  For notational simplicity, let $x_d\coloneqq (w, \hat{z}, s) \in \mathcal{X}_d$ and $x_v\coloneqq (z, \hat{z}', s')\in \mathcal{X}_v$.  For any $\lambda\in \mathbb{B}$, define the optimal transport cost function as 
\begin{equation*}
c_{\theta, \lambda}(x_d, x_v) \coloneqq \lambda^\top \tilde{q}(x_d, x_v;\theta) 
\end{equation*}
Let $V(\lambda)\coloneqq  \min_{H'\in \mathcal{H}'(F, G)}\expt_{H'} c_{\theta, \lambda}(X_d, X_v)$.  We split the proof into two parts.

\noindent\textbf{Part 1: prove equation \eqref{eq:kantorovich_duality_application} hold.} For any $\lambda\in \mathbb{B}$, the cost fucntion is bounded below by 
\begin{equation*}
c_{\theta,\lambda}(x_d, x_v) \ge -M\left( \norm{q(w,z;\theta)} + \sum_{j=1}^{d_{\hat{Z}}}|\hat{z}_j| + M\sum_{j=1}^{d_{\hat{Z}}}|\hat{z}'_j| + \sum_{k=1}^{d_S}|s| + \sum_{k=1}^{d_S}|s'|\right) 
\end{equation*}
where $M\coloneqq \sup_{\lambda\in \mathbb{B}}\norm{\lambda} < \infty$. By Assumption ~\ref{assu:regularity_minimax}\ref{enu:envelop}, $\norm{q(w,z;\theta)} \le b_1(w) + b_2(z)$. Thus,
\begin{equation}\label{eq:aomnwfeiwfe}
\forall \lambda\in \mathbb{B},\quad c_{\theta, \lambda} \ge  -M\left( b_1(w) + b_2(z) + \sum_{j=1}^{d_{\hat{Z}}}|\hat{z}_j| + M\sum_{j=1}^{d_{\hat{Z}}}|\hat{z}'_j| + \sum_{k=1}^{d_S}|s| + \sum_{k=1}^{d_S}|s'|\right) 
\end{equation}
Because $b_1 \in \mathcal{L}^1(F)$, $b_2\in \mathcal{L}^1(G)$ and they are continuous, this establishes that for all $\lambda\in \mathbb{B}$, the continuous cost function $c_{\theta, \lambda}$ is bounded from below by an additively separable continuous integrable envelope. By Theorem 5.10 (\emph{i}) in \textcite{villani_optimal_2009} (see, also, Theorem 3.1 in \textcite{ambrosio_lectures_2021}), we know that for any $\lambda\in \mathbb{B}$,
\begin{eqnarray*}
	&&V(\lambda) \\
	= &&\sup_{\phi\in \mathcal{C}(\mathcal{X}_d), \psi\in \mathcal{C}(\mathcal{X}_v)} \expt_F \phi(X_d) + \expt_G \psi(X_v)\\
							 & s.t.& \phi(x_d) + \psi(x_v) \le c_{\theta, \lambda}(x_d, x_v), \forall x_d\in \mathcal{X}_d,\forall x_v\in \mathcal{X}_v
\end{eqnarray*}
Since $\phi(x_d) + \psi(x_v) \le c_{\theta, \lambda}(x_d, x_v), \forall x_d\in \mathcal{X}_d,\forall x_v\in \mathcal{X}_v$ implies that for any $x_d \in \mathcal{X}_d,$
\begin{equation*}
\phi(x_d) \le  \inf_{x_v \in\mathcal X_v} \left\{ c_{\theta, \lambda}(x_d, x_v) - \psi(x_v) \right\},
\end{equation*}
we know that for any $\lambda\in \mathbb{B}$,
\begin{equation*}
V(\lambda) \le \sup_{\psi\in \mathcal{C}(\mathcal{X}_v)} \expt_F \left\{ \inf_{x_v \in\mathcal X_v} \left\{ c_{\theta, \lambda}(X_d, x_v) - \psi(x_v) \right\} \right\}  + \expt_G \psi(X_v)
\end{equation*}
Moreover, by construction, $ \inf_{x_v \in\mathcal X_v} \left\{ c_{\theta, \lambda}(x_d, x_v) - \psi(x_v) \right\} + \psi(x_v) \le c_{\theta, \lambda}(x_d, x_v)$ for all $x_d\in \mathcal{X}_d$ and $x_v \in \mathcal{X}_v$. Hence, for any $\lambda\in \mathbb{B}$,
\begin{equation}\label{eq:qo34infaq}
V(\lambda) = \sup_{\psi\in \mathcal{C}(\mathcal{X}_v)} \expt_F \left\{ \inf_{x_v \in\mathcal X_v} \left\{ c_{\theta, \lambda}(X_d, x_v) - \psi(x_v) \right\} \right\}  + \expt_G \psi(X_v),
\end{equation}
which completes the proof for \eqref{eq:kantorovich_duality_application}.

\noindent\textbf{Part 2: Attainment.}
By the same argument as in the proof for Theorem \ref{thm:id_validation}, we know there exists $\lambda^*\in \mathbb{B}$ such that 
\begin{equation*}
		\max_{\lambda \in \mathbb{B}} \min_{H'\in \mathcal{H}'(F, G)} \expt_{H'}\big[\lambda^{\top}\tilde{q}(W, Z, \hat{Z}, S, \hat{Z}', S'; \theta)\big]  =  V(\lambda^*).
\end{equation*}
For any function $\psi: \mathcal{X}_v \to \real$, its $c$-transform $\psi^c(\cdot): \mathcal{X}_d \to \real \cup \{-\infty\}$ is defined as
\[
    \psi^c(x_d) \coloneqq \inf_{x_v \in \mathcal{X}_v} \big\{ c_{\theta,\lambda^*}(x_d, x_v) - \psi(x_v) \big\}.
\]
Analogously, for a function $\phi: \mathcal{X}_d \to \real$, its $c$-transform $\phi^c(\cdot;\lambda): \mathcal{X}_v \to \real \cup \{-\infty\}$ is defined as
\[
    \phi^c(x_v) \coloneqq \inf_{x_d \in \mathcal{X}_d} \big\{ c_{\theta,\lambda^*}(x_d, x_v) - \phi(x_d) \big\}.
\]
With this notation, \eqref{eq:qo34infaq} implies that
\begin{equation*}
V(\lambda^*) = \sup_{\psi\in \mathcal{C}_b(\mathcal{X}_v)} \expt_F \psi^c(X_d;\lambda) + \expt_G \psi(X_v).
\end{equation*}
Given \eqref{eq:aomnwfeiwfe}, Theorem 5.10(\emph{iii}) in \textcite{villani_optimal_2009} implies that
\begin{equation}\label{eq:nfoq2i394r}
    V(\lambda^*) = \max_{\psi\in \mathcal{L}^1(G)}\expt_F[\psi^c(X_d)] + \expt_G[\psi(X_v)],
\end{equation}
where the $\max$ is attained at some $\psi \in \mathcal{L}^1(G)$. 

Now, assume $q(\cdot;\theta)$ is uniformly continuous in $\mathcal{X}_w \times \mathcal{X}_z$. Since functions like $|\hat{z}_j - \hat{z}'_j|$ and $|s_k - s'_k|$ is Lipschitz continuous in $\mathcal{X}_{d}\times \mathcal{X}_{v}$. We know $\tilde{q}(\cdot;\theta)$ is uniformly continuous and, hence,  $c_{\theta,\lambda^*}(x_d, x_v)$ is also uniformly continuous. 

Let $\psi \in \mathcal{L}^1(G)$ be the optimal potential that attains the maximum value in \eqref{eq:nfoq2i394r}, and consider the double $c$-transform $\psi^{cc} \coloneqq (\psi^c)^c$, which maps from $\mathcal{X}_v$ to $\real \cup \{-\infty\}$. A fundamental property of $c$-transforms is that $\psi^{cc} \ge \psi$, and $(\psi^{cc})^c = \psi^c$. Consequently, replacing $\psi$ with $\psi^{cc}$ weakly increases the value of $\expt_G[\psi(X_v)]$ without changing $\expt_F[\psi^c(X_d)]$. Thus, $\psi^{cc}$ is also an optimal dual potential.

We now show that $\psi^{cc}$ is continuous. For any $x_v, y_v \in \mathcal{X}_v$, we have
\begin{align*}
    \psi^{cc}(x_v) - \psi^{cc}(y_v) &= \inf_{x_d \in \mathcal{X}_d} \big\{ c_{\theta,\lambda^*}(x_d, x_v) - \psi^c(x_d) \big\} - \inf_{x_d \in \mathcal{X}_d} \big\{ c_{\theta,\lambda^*}(x_d, y_v) - \psi^c(x_d) \big\} \\
    &\le \sup_{x_d \in \mathcal{X}_d} \big\{ c_{\theta,\lambda^*}(x_d, x_v) - c_{\theta,\lambda^*}(x_d, y_v) \big\}.
\end{align*}
By symmetry, $|\psi^{cc}(x_v) - \psi^{cc}(y_v)| \le \sup_{x_d \in \mathcal{X}_d} |c_{\theta,\lambda}(x_d, x_v) - c_{\theta,\lambda^*}(x_d, y_v)|$. Because $c_{\theta,\lambda^*}(x_d, x_v)$ is uniformly continuous on $\mathcal{X}_d \times \mathcal{X}_v$, it admits a modulus of continuity $\omega(\cdot)$ independent of $x_d$. Thus,
\[
    \big|\psi^{cc}(x_v) - \psi^{cc}(y_v)\big| \le \omega(\|x_v - y_v\|).
\]
This establishes that $\psi^{cc}$ is uniformly continuous. Therefore, $\psi^{cc} \in \mathcal{C}(\mathcal{X}_v)$, and the maximum is attained in this space.  This completes the proof.
\end{proof}

\section{Proof for Lemma \ref{lem:sieve_approx}}

\begin{proof}
We first establish \eqref{eq:sieve_validity}. By assumption, each basis function $\varphi_k(\cdot)$ for $k=1,\dots,K$ is continuous and integrable with respect to $G$. Thus, any function $\psi$ in the linear sieve space $\mathcal{S}_K(\mathcal{X}_v)$ is also continuous and $G$-integrable, which implies that $\mathcal{S}_K(\mathcal{X}_v) \subseteq \mathcal{C}(\mathcal{X}_v)$. Because $\mathscr D_K(F,G;\theta)$ maximizes the exact same objective function as $\mathscr D(F,G;\theta)$ but over a restricted set of dual functions, it immediately follows that $\mathscr D_K(F,G;\theta) \le \mathscr D(F,G;\theta)$.

Next, we establish \eqref{eq:sieve_power} under Assumption \ref{assu:sieve_approximation}. By the definition of $\mathscr{D}(F, G;\theta)$, there must exists a sequence $\{(\lambda_n, \psi_n): n=1,2,...\}$ such that, $\forall n \ge 1$, $\lambda_n \in \mathbb{B}$ and $\psi_n\in \mathcal{C}(\mathcal{X}_v)$, and 
\begin{equation}\label{eq:qio44frjn}
	\mathscr{D}(F, G;\theta) - \frac{1}{n} \le \mathcal{D}_n(\lambda_n, \psi_n) \le \mathscr{D}(F, G;\theta) 
\end{equation}
where
\begin{equation*}
\mathcal{D}_n(\lambda_n, \psi_n)\coloneqq
\expt_F\!\left[
\inf_{(z,\hat z',s')\in\mathcal X_v}
\left\{
\lambda_n^{\top}\tilde q(W,z,\hat Z,S,\hat z',s';\theta)
-
\psi_n(z,\hat z',s')
\right\}
\right] 
+
\expt_G\!\left[\psi_n(Z,\hat Z',S')\right]
\end{equation*}
Under Assumption \ref{assu:sieve_approximation}, there must exist some a sequence  $\{(K_n, \psi'_n):n=1,2,...\}$ such that (\emph{i}) $K_n$ is weakly increasing in $n$; (\emph{ii}) for each $n\ge 1$, $\psi'_n \in \mathcal{S}_{K_n}(\mathcal{X}_v)$, and $\norm{\psi_n - \psi'_n}_\infty < \frac{1}{n}$. Thus, for any $(z, \hat{z}', s')\in \mathcal{X}_v$,
\begin{equation*}
	| \{\lambda_n^{\top}\tilde q(W,z,\hat Z,S,\hat z',s';\theta) - \psi_n(z,\hat z',s')\} - 
 \{\lambda_n^{\top}\tilde q(W,z,\hat Z,S,\hat z',s';\theta) - \psi'_n(z,\hat z',s')\} | < \frac{1}{n}.
\end{equation*}
As a result,
\begin{multline*}
\Bigg| \expt_F\!\left[
\inf_{(z,\hat z',s')\in\mathcal X_v}
\left\{
\lambda_n^{\top}\tilde q(W,z,\hat Z,S,\hat z',s';\theta)
-
\psi_n(z,\hat z',s')
\right\}
\right] - \\
\expt_F\!\left[
\inf_{(z,\hat z',s')\in\mathcal X_v}
\left\{
\lambda_n^{\top}\tilde q(W,z,\hat Z,S,\hat z',s';\theta)
-
\psi'_n(z,\hat z',s')
\right\}
\right] \Bigg| \le \frac{1}{n}. 
\end{multline*}
This implies that 
\begin{equation}\label{eq:gnq9io4r}
|\mathcal{D}_n(\lambda_n, \psi_n) - \mathcal{D}_n(\lambda_n, \psi'_n)| \le \frac{2}{n}.
\end{equation}
On the other hand, since $\psi'_n \in \mathcal{S}_{K_n}(\mathcal{X}_v)$, we must have 
\begin{equation}\label{eq:ngvqo2i43}
\mathcal{D}_n(\lambda_n, \psi'_n) \le \mathscr{D}_{K_n}(F, G;\theta)
\end{equation}
Combining \eqref{eq:qio44frjn}-\eqref{eq:ngvqo2i43}, we must have 
\begin{equation*}
\mathscr{D}(F, G;\theta) - \frac{3}{n} \le \mathscr{D}_{K_n}(F, G;\theta).
\end{equation*}
Since we have already proved that $ \mathscr{D}_{K}(F, G;\theta) \le \mathscr{D}(F, G;\theta)$ for any $K$, we know 
\begin{equation*}
\lim_{n\to\infty} \mathscr{D}_{K_n}(F, G;\theta) = \mathscr{D}(F, G;\theta).
\end{equation*}
Since $\mathscr{D}_{K}(F, G;\theta)$ is weakly increasing in $K$, the above result implies \eqref{eq:sieve_power}.

\end{proof}

\section{Proof of Theorem \ref{thm:asymptotic_size}}\label{app:proof_thm_size}
\begin{proof}
Fix a parameter $\theta \in \Theta_I(F, G)$ and a nominal level $\alpha \in (0,1)$. Let $z_{1-\alpha/2} \coloneqq \Phi^{-1}(1-\alpha/2) > 0$. The decision rule rejects $H_0$ if $p(\theta) < \alpha$, which holds if and only if $\max\{T_1(\theta), T_2(\theta)\} > z_{1-\alpha/2}$. Because
\begin{equation}\label{eq:proof_union_bound}
\prob\Big(\max\big\{T_1(\theta),T_2(\theta)\big\}>z_{1-\alpha/2}\Big)
\le
\prob\big(T_1(\theta)>z_{1-\alpha/2}\big)
+
\prob\big(T_2(\theta)>z_{1-\alpha/2}\big),
\end{equation}
it suffices to show that for $m\in\{1,2\}$ and any fixed $c>0$,
\[
\limsup_{\underline n\to\infty}\prob\big(T_m(\theta)>c\big)\le 1-\Phi(c).
\]
Without loss of generality, we focus on $T_2(\theta)$. The proof for $T_1(\theta)$ is symmetric.

Let $\mathfrak S_1$ denote the $\sigma$-algebra generated by all observations in fold 1. Conditional on $\mathfrak S_1$, the pair $(\hat\lambda_{1,n},\hat\beta_{1,n})$ is deterministic. For any compact convex set $\mathbb A\subset\mathbb R^{d_q+d_{\hat Z}+d_S}$ containing an open neighborhood of the origin, define
\begin{multline*}
\mathscr D_{K,\mathbb A}(F,G;\theta)
\coloneqq
\sup_{\lambda\in\mathbb A,\ \beta\in\mathbb R^{K_n}}
\Bigg\{
\expt_F\!\left[
\inf_{(z,\hat z',s')\in\mathcal X_v}
\Big(
\lambda^\top\tilde q(W,z,\hat Z,S,\hat z',s';\theta)
-
\beta^\top\varphi(z,\hat z',s')
\Big)
\right]
\\
+
\expt_G\!\left[\beta^\top\varphi(Z,\hat Z',S')\right]
\Bigg\}.
\end{multline*}
Since $\theta\in \Theta_I(F, G)$, conditional on $\mathfrak S_1$, the weak duality implies that the population objective evaluated at the feasible point $(\hat\lambda_{1,n},\hat\beta_{1,n})$ satisfies
\begin{multline}
\mu_n
\coloneqq
\expt_F\!\left[
\inf_{(z,\hat z',s')\in\mathcal X_v}
\Big\{
\hat\lambda_{1,n}^{\top}\tilde q(W,z,\hat Z,S,\hat z',s';\theta)
-
\hat\beta_{1,n}^{\top}\varphi(z,\hat z',s')
\Big\}
\right]
\\
+
\expt_G\!\left[\hat\beta_{1,n}^{\top}\varphi(Z,\hat Z',S')\right]
\le 0
\label{eq:mu_n_nonpositive}
\end{multline}
almost surely.

Let
\[
S_n
\coloneqq
\sqrt{\underline n}\big(\widehat{\mathscr D}_2(\theta)-\mu_n\big).
\]
Because $\mu_n\le 0$, we have $\sqrt{\underline n}\widehat{\mathscr D}_2(\theta)\le S_n$. Therefore,
\begin{equation}\label{eq:proof_T2_bound}
T_2(\theta)
\le
R_n
\coloneqq
\frac{S_n}{\sqrt{\max\{\underline n\hat V_2(\theta),\,\epsilon\}}}.
\end{equation}

Conditional on $\mathfrak S_1$, $S_n$ is the sum of two independent sample averages centered at their conditional means. Let
\[
\sigma_n^2
\coloneqq
\var(S_n\mid\mathfrak S_1)
=
\frac{\underline n}{|\mathcal I^d_2|}\sigma_{u,n}^2
+
\frac{\underline n}{|\mathcal I^v_2|}\sigma_{v,n}^2,
\]
where
\[
\sigma_{u,n}^2\coloneqq \var_F(\hat u_i\mid\mathfrak S_1),
\qquad
\sigma_{v,n}^2\coloneqq \var_G(\hat v_j\mid\mathfrak S_1).
\]
Define the regularized conditional variance
\[
\tau_n^2\coloneqq \max\{\sigma_n^2,\epsilon\}.
\]
By construction, $\tau_n^2\ge\epsilon>0$ and $\sigma_n^2/\tau_n^2\le 1$.

Next, we uniformly bound the absolute centered magnitudes of the components comprising $S_n$.  Let
\[
\zeta_n
\coloneqq
\sup_{(z,\hat z',s')\in\mathcal X_v}\|\varphi(z,\hat z',s')\|.
\]
Under Assumption~\ref{assu:inference_regularity}\ref{enu:sieve}, we have $\zeta_n\le C\sqrt{K_n}$.  For the validation component $\hat{v}_j$, we have
\[
|\hat v_j|
\le
\|\hat\beta_{1,n}\|\,\|\varphi(Z_j,\hat Z'_j,S'_j)\|
\le
c_n\zeta_n.
\]
For the downstream component, because 
\begin{equation*}
\hat\Sigma_{1,n}\to\Sigma \coloneqq \begin{pmatrix}
	\Omega & \\
				 & \Lambda
\end{pmatrix}
\end{equation*}
is almost surely and $\Sigma$ is positive definite, the multipliers $\hat\lambda_{1,n}\in\mathbb B_{1,n}$ are uniformly almost surely bounded for all sufficiently large $n$. Combined with the continuity of $q(\cdot)$ and the compactness of $\mathcal X_d\times\mathcal X_v$, there exists a finite constant $M>0$ such that
\[
\sup_{(w,\hat z,s)\in\mathcal X_d}\sup_{(z,\hat z',s')\in\mathcal X_v}
\left|
\hat\lambda_{1,n}^{\top}\tilde q(w,z,\hat z,s,\hat z',s';\theta)
\right|
\le M
\]
almost surely for all sufficiently large $n$. Therefore,
\begin{equation}\label{eq:4iun09213}
|\hat u_i| \le M+c_n\zeta_n.
\end{equation}
Under Assumption~\ref{assu:inference_regularity}\ref{enu:fold_size}, for sufficiently large $\underline n$,
\[
|\mathcal I^d_2|\ge (\kappa_d/2)n_d,
\qquad
|\mathcal I^v_2|\ge (\kappa_v/2)n_v.
\]
Since $\underline n\le n_d$, the downstream scaling factor satisfies:
\begin{equation*}
	\frac{\sqrt{\underline{n}}}{|\mathcal{I}^d_2|} \le \frac{\sqrt{\underline{n}}}{(\kappa_d / 2) n_d} \le \frac{\sqrt{n_d}}{(\kappa_d / 2) n_d}  = \frac{1}{(\kappa_d /2) \sqrt{n_d}} \le \frac{1}{(\kappa_d/2) \sqrt{\underline{n}}} = \mathcal O(\underline n^{-1/2})
\end{equation*}
An identifical symmetric bound of $\mathcal O(\underline n^{-1/2})$ applies to the validation factor $\sqrt{\underline{n}}/ |\mathcal{I}^v_2|$.  Hence, if we write
\begin{equation}\label{eq:fniuoi3}
S_n
=
\sum_{i\in\mathcal I^d_2} X_i
+
\sum_{j\in\mathcal I^v_2} Y_j,
\end{equation}
where
\[
X_i
\coloneqq
\frac{\sqrt{\underline n}}{|\mathcal I^d_2|}
\Big(
\hat u_i-\expt_F[\hat u_i\mid\mathfrak S_1]
\Big),
\qquad
Y_j
\coloneqq
\frac{\sqrt{\underline n}}{|\mathcal I^v_2|}
\Big(
\hat v_j-\expt_G[\hat v_j\mid\mathfrak S_1]
\Big),
\]
then the centered variables $X_i$ and $Y_j$ are uniformly bounded by a nonrandom sequence
\[
B_n=\mathcal O\!\left(\frac{c_n\zeta_n}{\sqrt{\underline n}}\right)
\]
conditional on $\mathfrak{S}_1$ almost surely.

	Next, we evaluate the uniform consistency of the regularized sample variance estimator $\underline{n}\hat{V}_2(\theta)$. Conditional on $\mathfrak{S}_1$, $\underline{n}\hat{V}_2(\theta)$ is a weighted sum of two sample variances. Focusing on the downstream component, its conditional variance is bounded by the standard property of sample variances of bounded independent random variables: 
\begin{equation*}
\mathrm{Var}\left(\frac{\underline{n}}{|\mathcal{I}^d_2|} \hat{\sigma}_{u,2}^2 \Big| \mathfrak{S}_1\right) \le \left(\frac{\underline{n}}{|\mathcal{I}^d_2|}\right)^2 \frac{\expt_F[(\hat{u}_i - \expt_F[\hat{u}_i \mid \mathfrak{S}_1])^4 \mid \mathfrak{S}_1]}{|\mathcal{I}^d_2|}.
\end{equation*}
Because of \eqref{eq:4iun09213}, we know $| \hat{u}_i - \expt_F[\hat{u}_i \mid \mathfrak{S}_1] | \le \mathcal{O}(c_n \zeta_n)$, so that the fourth central moment is bounded by $\mathcal{O}(c_n^2 \zeta_n^2) \sigma_u^2$. Thus, the variance of this term is bounded by $\big(\frac{\underline{n}}{|\mathcal{I}^d_2|}\big)^2 \frac{ \mathcal{O}(c_n^2 \zeta_n^2)}{|\mathcal{I}^d_2|} \sigma_u^2$. Because $|\mathcal{I}^d_2| \ge (\kappa_d/2)\underline{n}$ for large $\underline{n}$, we have $\frac{\underline{n}}{|\mathcal{I}^d_2|^2} \le \mathcal{O}(\underline{n}^{-1})$. Hence, this variance is bounded by $\mathcal{O}\big(\frac{c_n^2 \zeta_n^2}{\underline{n}}\big) \big(\frac{\underline{n}}{|\mathcal{I}^d_2|} \sigma_u^2\big)$. Applying an identical argument to the validation component $\hat{v}_j$ yields the combined bound: $\mathrm{Var}(\underline{n}\hat{V}_2(\theta) \mid \mathfrak{S}_1) \le \mathcal{O}\big( \frac{c_n^2 \zeta_n^2}{\underline{n}} \big) \sigma_n^2$. 
	
	By Chebyshev's inequality, we bound the relative deviation of $\underline{n}\hat{V}_2(\theta)$ from $\sigma_n^2$:
	\begin{equation}
		\prob\left( \frac{|\underline{n}\hat{V}_2(\theta) - \sigma_n^2|}{\tau_n^2} > \delta \mathrel{\Bigg|} \mathfrak{S}_1 \right) \le \frac{\mathrm{Var}(\underline{n}\hat{V}_2(\theta) \mid \mathfrak{S}_1)}{\delta^2 \tau_n^4} \le \mathcal{O}\left( \frac{c_n^2 \zeta_n^2}{\underline{n} \delta^2} \frac{\sigma_n^2}{\tau_n^4} \right).
	\end{equation}
	Because $\sigma_n^2 / \tau_n^2 \le 1$ and $1 / \tau_n^2 \le 1 / \epsilon$, the bound simplifies to $\mathcal{O}\big( \frac{c_n^2 \zeta_n^2}{\underline{n} \epsilon \delta^2} \big)$. Under Assumption~\ref{assu:inference_regularity}\ref{enu:sieve}, $c_n \zeta_n / \sqrt{\underline{n}} \to 0$. Because $\epsilon$ is a fixed positive constant, the entire bound vanishes, yielding $\underline{n}\hat{V}_2(\theta) - \sigma_n^2 = o_p(\tau_n^2)$. Because the function $f(x) = \max\{x, \epsilon\}$ is $1$-Lipschitz, the absolute deviation $|\max\{\underline{n}\hat{V}_2(\theta), \epsilon\} - \tau_n^2|$ is bounded by $|\underline{n}\hat{V}_2(\theta) - \sigma_n^2| = o_p(\tau_n^2)$. Consequently, we know that the ratio 
\begin{equation*}
\frac{\max\{\underline{n}\hat{V}_2(\theta), \epsilon\}}{\tau_n^2}  \xrightarrow{p} 1
\end{equation*}
conditionally on $\mathfrak{S}_1$.
	
	By Slutsky's Theorem, the conditional asymptotic distribution of $T_2(\theta)$ is governed entirely by the regularized sum $S_n / \tau_n$. To establish normality, we verify the Lyapunov condition for the row-independent triangular array. Given \eqref{eq:fniuoi3}, let $\Gamma_n$ denote the sum of the absolute third central moments of the components comprising $S_n$, i.e., 
	\begin{equation*}
	\Gamma_n \coloneqq \sum_{i\in \mathcal{I}^d_2}\expt[|X_i|^3] + \sum_{j\in \mathcal{I}^v_2}\expt[|Y_j|^3].
	\end{equation*}
	Because $|X_i|^3 \le \max|X_i| \cdot X_i^2$, $|Y_i|^3 \le \max|Y_i| \cdot Y_i^2$, we bound the third moment sum by factoring out the maximum absolute deviation $B_n$, leaving the sum of the conditional variances $\sigma_n^2$. The Lyapunov ratio therefore  satisfies:
	\begin{equation}
		\frac{\Gamma_n}{\tau_n^3} \le \frac{B_n \sigma_n^2}{\tau_n^3} = B_n \left(\frac{\sigma_n^2}{\tau_n^2}\right) \frac{1}{\tau_n} \le \mathcal{O}\left( \frac{c_n \zeta_n}{\sqrt{\underline{n}}} \right) \frac{1}{\sqrt{\epsilon}} = \mathcal{O}\left( \frac{c_n \zeta_n}{\sqrt{\underline{n}}} \right) \to 0.
	\end{equation}
	Let us then invoke Lyapunov central limit theorem for subsequences. For any arbitrary sequence of sample sizes, the bounded variance ratio $\sigma_n^2 / \tau_n^2 \in [0,1]$ must contain a convergent sub-subsequence where $\sigma_{n_k}^2 / \tau_{n_k}^2 \to \rho^2 \in [0,1]$.
	
	By the Lyapunov Central Limit Theorem, along this sub-subsequence, $S_{n_k} / \tau_{n_k} \xrightarrow{d} \mathcal{N}(0, \rho^2)$ conditional on $\mathfrak{S}_1$. (If $\rho^2 = 0$, this yields a degenerate limit at $0$, which trivially bounds the exceedance probability). For all $\rho^2 \in [0,1]$, the limit distribution $\mathcal{N}(0, \rho^2)$ is stochastically dominated by $\mathcal{N}(0,1)$. Hence, along any convergent sub-subsequence, the conditional exceedance probability satisfies $\lim_{k \to \infty} \prob(T_2(\theta) > c \mid \mathfrak{S}_1) \le 1 - \Phi(c)$. Since this holds for every possible subsequence, the entire sequence converges uniformly almost surely:
	\begin{equation}\label{eq:proof_conditional_limit}
		\limsup_{\underline{n} \to \infty} \, \prob\big(T_2(\theta) > c \mid \mathfrak{S}_1\big) \le 1 - \Phi(c) \quad \text{a.s.}
	\end{equation}
	Therefore,
	\begin{align}\label{eq:proof_unconditional_DCT}
		\limsup_{\underline{n} \to \infty} \, \prob\big(T_2(\theta) > c\big) 
		&= \limsup_{\underline{n} \to \infty} \, \expt_{\mathfrak{S}_1}\Big[ \prob\big(T_2(\theta) > c \mid \mathfrak{S}_1\big) \Big] \nonumber \\
		&\le \expt_{\mathfrak{S}_1}\left[ \limsup_{\underline{n} \to \infty} \, \prob\big(T_2(\theta) > c \mid \mathfrak{S}_1\big) \right] \nonumber \\
		&\le \expt_{\mathfrak{S}_1}\big[ 1 - \Phi(c) \big] = 1 - \Phi(c).
	\end{align}
	Setting $c = z_{1-\alpha/2}$ yields $\limsup_{\underline{n} \to \infty} \prob(T_2(\theta) > z_{1-\alpha/2}) \le \alpha/2$. By symmetry, we also have $\limsup_{\underline{n} \to \infty} \prob(T_1(\theta) > z_{1-\alpha/2}) \le \alpha/2$. This completes the proof.
\end{proof}

\section{Proof for Theorem \ref{thm:id_validation_multi_marginal}}\label{sec:proof_id_validation_multi_marginal}

Let $\theta \in \Theta$ be fixed. To streamline notation, let $X_d \coloneqq (W, \hat{Z}, S)$ and, for each $l=1,\dots,L$, let $X_{v,l} \coloneqq (Z_l, \hat{Z}'_l, S'_l)$. We denote $X_v \coloneqq (Z, \hat{Z}', S')$, which is the concatenation of $(X_{v,l})_{l=1}^L$. We define the bilinear objective functional $\Phi : \mathcal{H}_L'\!\left(F, (G_l)_{l=1}^L\right) \times \mathbb{B} \to \real$ as
\[
	\Phi(H', \lambda) \coloneqq \expt_{H'}\big[\lambda^{\top}\tilde{q}(X_d, X_v; \theta)\big].
\]
The proof proceeds in three steps, mirroring the structure of the proof for Theorem~\ref{thm:id_validation}.

\noindent\textbf{Step 1. }
By Definition~\ref{def:id_set_validation_multiple_marginal} and a decoupling argument identical to that in the baseline setting, we introduce the equivalent augmented representation of the identified set. Let $\Theta'_I\!\left(F, (G_l)_{l=1}^L\right)$ be the set of parameters $\theta \in \Theta$ such that there exists a joint distribution $H' \in \mathcal{H}_L'\!\left(F, (G_l)_{l=1}^L\right)$ satisfying $\expt_{H'}[\tilde{q}(X_d, X_v; \theta)] = \mathbf{0}$. The exact-matching conditions $\expt_{H'}\big[|\hat{Z}_j - \hat{Z}'_j|\big] = 0$ and $\expt_{H'}\big[|S_k - S'_k|\big] = 0$ embedded in $\tilde{q}$ imply that under any such $H'$, $\hat{Z}' = \hat{Z}$ and $S' = S$ almost surely. Any distribution in $\mathcal{H}_L'\!\left(F, (G_l)_{l=1}^L\right)$ satisfying these exact-matching conditions collapses to a valid distribution in $\mathcal{H}_L\!\left(F, (G_l)_{l=1}^L\right)$. It follows that $\Theta'_I\!\left(F, (G_l)_{l=1}^L\right) = \Theta_I\!\left(F, (G_l)_{l=1}^L\right)$. Thus, $\theta \in \Theta_I\!\left(F, (G_l)_{l=1}^L\right)$ if and only if there exists an $H^* \in \mathcal{H}_L'\!\left(F, (G_l)_{l=1}^L\right)$ such that $\expt_{H^*}[\tilde{q}(X_d, X_v; \theta)] = \mathbf{0}$.

For any coupling $H'$, let $\mu(H') \coloneqq \expt_{H'}[\tilde{q}(\cdot; \theta)]$. Because $\mathbb{B}$ is a compact convex set containing an open neighborhood of the origin, there exists an $\epsilon > 0$ such that the closed ball of radius $\epsilon$ centered at the origin is entirely contained in $\mathbb{B}$. If $\mu(H') \neq \mathbf{0}$, we can select $\lambda^* = \epsilon \frac{\mu(H')}{\|\mu(H')\|} \in \mathbb{B}$, which yields
\[
	\sup_{\lambda \in \mathbb{B}} \lambda^{\top}\mu(H') \ge (\lambda^*)^{\top}\mu(H') = \epsilon \|\mu(H')\| > 0.
\]
Conversely, if $\mu(H') = \mathbf{0}$, then trivially $\sup_{\lambda \in \mathbb{B}} \lambda^\top \mu(H') = 0$. Because $\mathbf{0} \in \mathbb{B}$, the supremum is always bounded below by zero. Thus, $\sup_{\lambda \in \mathbb{B}} \Phi(H', \lambda) \le 0$ if and only if exactly $\mu(H') = \mathbf{0}$. Therefore, $\theta \in \Theta_I\!\left(F, (G_l)_{l=1}^L\right)$ if and only if there exists an $H^* \in \mathcal{H}_L'\!\left(F, (G_l)_{l=1}^L\right)$ such that $\sup_{\lambda \in \mathbb{B}} \Phi(H^*, \lambda) \le 0$, i.e.,
\begin{equation}\label{eq:proof_min_sup_multi}
	\inf_{H'\in \mathcal{H}_L'\left(F, (G_l)_{l=1}^L\right)} \sup_{\lambda \in \mathbb{B}} \, \Phi(H', \lambda) \le 0.
\end{equation}

\noindent\textbf{Step 2. }
Because $\mathcal{X}_d$ and $\mathcal{X}_{v,l}$ for all $l=1,\dots,L$ are closed subsets of Euclidean spaces (Assumption~\ref{assu:regularity_minimax}(i)), the marginal probability measures $F$ and $G_1, \dots, G_L$ are tight. By Prokhorov's Theorem, the set of all joint probability measures on $\mathcal{X}_d \times \prod_{l=1}^L \mathcal{X}_{v,l}$ with fixed tight marginals is relatively compact in the topology of weak convergence. Because the marginal constraints defining the Fr\'{e}chet class $\mathcal{H}_L'\!\left(F, (G_l)_{l=1}^L\right)$ are closed under weak convergence, $\mathcal{H}_L'\!\left(F, (G_l)_{l=1}^L\right)$ is weakly closed. Thus, $\mathcal{H}_L'\!\left(F, (G_l)_{l=1}^L\right)$ is a convex and weakly compact set. Moreover, the parameter space $\mathbb{B} \subset \real^{d_q + d_{\hat{Z}} + d_S}$ is, by definition, a compact and convex set.

Clearly, $\Phi(H', \lambda)$ is linear (hence concave and convex) in both $H'$ and $\lambda$. For a fixed $H'$, $\Phi(\cdot, \lambda)$ is trivially continuous in $\lambda$. We must show that for a fixed $\lambda$, the mapping $H' \mapsto \Phi(H', \lambda)$ is continuous with respect to the weak topology on $\mathcal{H}_L'\!\left(F, (G_l)_{l=1}^L\right)$.

By Assumption~\ref{assu:regularity_minimax}(ii), the augmented moment vector $\tilde{q}(X_d, X_v; \theta)$ is continuous in the topology of weak convergence. To establish weak continuity of its expectation over the Fr\'{e}chet class, we must bound it by an integrable envelope. Since all norms on finite-dimensional Euclidean spaces are equivalent, it suffices to bound the $L_1$ norm of $\tilde{q}$, which is bounded by the sum of the absolute values of its components:
\[
	\|\tilde{q}(X_d, X_v; \theta)\|_1 = \|q(W, Z; \theta)\|_1 + \|\hat{Z} - \hat{Z}'\|_1 + \|S - S'\|_1.
\]
Applying the basic inequality $\|a - b\|_1 \le \|a\|_1 + \|b\|_1$ and the structural envelope bound from Assumption~2', we have:
\[
    \|\tilde{q}(X_d, X_v; \theta)\|_1 \le b_1(W) + \sum_{l=1}^L b_{2,l}(Z) + \|\hat{Z}\|_1 + \|S\|_1 + \|\hat{Z}'\|_1 + \|S'\|_1
\]
Thus, there exists a constant $C> 0$ such that we can construct an additively separable continuous bounding function $A(X_d) + \sum_{l=1}^L B_l(X_{v,l})$ such that $\|\tilde{q}(X_d, X_v; \theta)\|_1 \le A(X_d) + \sum_{l=1}^L B_l(X_{v,l})$, where
\begin{align*}
	A(W, \hat{Z}, S) &\coloneqq C \big[b_1(W) + \|\hat{Z}\| + \|S\|\big], \\
	B_l(Z_l, \hat{Z}'_l, S'_l) &\coloneqq C \big[b_{2,l}(Z_l) + \|\hat{Z}'_l\| + \|S'_l\|\big].
\end{align*}
Under Assumption 1', the marginal distributions of $(\hat{Z}_l, S_l)$ under $F$ and $(\hat{Z}'_l, S'_l)$ under $G_l$ are identical. Consequently, Assumption~\ref{assu:regularity_minimax}(iii) implies that both sets of variables possess finite first moments under their respective marginals. It follows immediately that $A \in \mathcal{L}^1(F)$ and $B_l \in \mathcal{L}^1(G_l)$ for each $l=1,\dots,L$.

Because the marginals of any $H' \in \mathcal{H}_L'\!\left(F, (G_l)_{l=1}^L\right)$ are exactly $F$ and $G_l$ for $l=1,\dots,L$, the expected value of this bounding envelope is identical and finite for every coupling in the Fr\'{e}chet class. By standard optimal transport theory \autocite[e.g., a slight extension of Lemma 4.3 in][]{villani_optimal_2009}, 
 for any $\lambda\in \mathbb{B}$, $\Phi(H',\lambda)$ is lower semicontinuous in $H'$ with respect to weak convergence. Since $\Phi(H', -\lambda) = -\Phi(H',\lambda)$. This also implies that for any $\lambda\in \mathbb{B}$, $\Phi(H',\lambda)$ is upper semicontinuous in $H'$ with respect to weak convergence.  Thus, $H' \mapsto \Phi(H', \lambda)$ is weakly continuous.

\noindent\textbf{Step 3. }
We have established that $\mathcal{H}_L'\!\left(F, (G_l)_{l=1}^L\right)$ and $\mathbb{B}$ are compact convex sets, and that $\Phi(H', \lambda)$ is a continuous bilinear functional. Therefore, all conditions for Sion's Minimax Theorem \textcite{sion_general_1958} are satisfied. This yields two crucial results: the infimum over $H'$ can be attained (i.e., $\inf$ is a $\min$), and we can switch the order of these operators:
\begin{equation}\label{eq:proof_minimax_swap_multi}
	\min_{H'\in \mathcal{H}_L'\left(F, (G_l)_{l=1}^L\right)} \max_{\lambda \in \mathbb{B}} \, \Phi(H', \lambda) 
	= 
	\max_{\lambda \in \mathbb{B}} \min_{H'\in \mathcal{H}_L'\left(F, (G_l)_{l=1}^L\right)} \, \Phi(H', \lambda).
\end{equation}
Substituting \eqref{eq:proof_minimax_swap_multi} into \eqref{eq:proof_min_sup_multi} yields that $\theta \in \Theta_I\!\left(F, (G_l)_{l=1}^L\right)$ if and only if
\[
	\max_{\lambda \in \mathbb{B}} \min_{H'\in \mathcal{H}_L'\left(F, (G_l)_{l=1}^L\right)} \expt_{H'}\big[\lambda^{\top}\tilde{q}(W, Z, \hat{Z}, S, \hat{Z}', S'; \theta)\big] \le 0.
\]
Moreover, the maximum and minimum in \eqref{eq:maxmin_characterization_multi_marginal} are both attained. This completes the proof.

\end{appendix}

\newpage

\printbibliography

@article{hu_instrumental_2008,
	title = {Instrumental {Variable} {Treatment} of {Nonclassical} {Measurement} {Error} {Models}},
	volume = {76},
	issn = {0012-9682, 1468-0262},
	url = {http://doi.wiley.com/10.1111/j.0012-9682.2008.00823.x},
	doi = {10.1111/j.0012-9682.2008.00823.x},
	language = {en},
	number = {1},
	urldate = {2021-09-25},
	journal = {Econometrica},
	author = {Hu, Yingyao and Schennach, Susanne M.},
	month = jan,
	year = {2008},
	pages = {195--216},
}

@book{villani_optimal_2009,
	address = {Berlin, Heidelberg},
	series = {Grundlehren der mathematischen {Wissenschaften}},
	title = {Optimal {Transport}: {Old} and {New}},
	volume = {338},
	isbn = {978-3-540-71049-3 978-3-540-71050-9},
	url = {http://link.springer.com/10.1007/978-3-540-71050-9},
	doi = {10.1007/978-3-540-71050-9},
	urldate = {2021-04-19},
	publisher = {Springer Berlin Heidelberg},
	author = {Villani, Cédric},
	editor = {Berger, M. and Eckmann, B. and de la Harpe, P. and Hirzebruch, F. and Hitchin, N. and Hörmander, L. and Kupiainen, A. and Lebeau, G. and Ratner, M. and Serre, D. and Sinai, Ya. G. and Sloane, N. J. A. and Vershik, A. M. and Waldschmidt, M.},
	year = {2009},
}

@article{chen_measurement_2005,
	title = {Measurement {Error} {Models} with {Auxiliary} {Data}},
	volume = {72},
	issn = {0034-6527, 1467-937X},
	url = {https://academic.oup.com/restud/article-lookup/doi/10.1111/j.1467-937X.2005.00335.x},
	doi = {10.1111/j.1467-937X.2005.00335.x},
	language = {en},
	number = {2},
	urldate = {2022-03-07},
	journal = {Review of Economic Studies},
	author = {Chen, Xiaohong and Hong, Han and Tamer, Elie},
	month = apr,
	year = {2005},
	pages = {343--366},
}

@article{sion_general_1958,
	title = {On general minimax theorems.},
	volume = {8},
	number = {1},
	journal = {Pacific Journal of mathematics},
	publisher = {Pacific Journal of Mathematics, A Non-profit Corporation},
	author = {Sion, Maurice},
	year = {1958},
	pages = {171--176},
}

@book{ambrosio_lectures_2021,
	address = {Cham},
	series = {Unitext - {La} matematica per il 3 + 2},
	title = {Lectures on optimal transport},
	isbn = {978-3-030-72161-9},
	doi = {10.1007/978-3-72162-6},
	language = {eng},
	number = {volume 130},
	publisher = {Springer},
	author = {Ambrosio, Luigi and Brué, Elia and Semola, Daniele},
	year = {2021},
}

@article{cross_regressions_2002,
	title = {Regressions, {Short} and {Long}},
	volume = {70},
	issn = {0012-9682, 1468-0262},
	url = {http://doi.wiley.com/10.1111/1468-0262.00279},
	doi = {10.1111/1468-0262.00279},
	language = {en},
	number = {1},
	urldate = {2024-01-05},
	journal = {Econometrica},
	author = {Cross, Philip J. and Manski, Charles F.},
	month = jan,
	year = {2002},
	pages = {357--368},
}

@article{mullainathan_machine_2017,
	title = {Machine {Learning}: {An} {Applied} {Econometric} {Approach}},
	volume = {31},
	doi = {10.1257/jep.31.2.87},
	number = {2},
	journal = {Journal of Economic Perspectives},
	publisher = {American Economic Association},
	author = {Mullainathan, Sendhil and Spiess, Jann},
	month = may,
	year = {2017},
	pages = {87--106},
}

@article{berry_automobile_1995,
	title = {Automobile {Prices} in {Market} {Equilibrium}},
	volume = {63},
	doi = {10.2307/2171802},
	number = {4},
	journal = {Econometrica},
	publisher = {JSTOR},
	author = {Berry, Steven and Levinsohn, James and Pakes, Ariel},
	year = {1995},
	pages = {841},
}

@misc{battaglia_inference_2025,
	title = {Inference for {Regression} with {Variables} {Generated} by {AI} or {Machine} {Learning}},
	url = {https://arxiv.org/abs/2402.15585},
	author = {Battaglia, Laura and Christensen, Timothy and Hansen, Stephen and Sacher, Szymon},
	year = {2025},
	note = {\_eprint: 2402.15585},
}

@article{sager_clean_2025,
	title = {Clean {Identification}? {The} {Effects} of the {Clean} {Air} {Act} on {Air} {Pollution}, {Exposure} {Disparities}, and {House} {Prices}},
	volume = {17},
	issn = {1945-7731, 1945-774X},
	shorttitle = {Clean {Identification}?},
	url = {https://pubs.aeaweb.org/doi/10.1257/pol.20220745},
	doi = {10.1257/pol.20220745},
	language = {en},
	number = {1},
	urldate = {2026-02-19},
	journal = {American Economic Journal: Economic Policy},
	author = {Sager, Lutz and Singer, Gregor},
	month = feb,
	year = {2025},
	pages = {1--36},
}

@article{meng_estimated_2019,
	title = {Estimated {Long}-{Term} (1981–2016) {Concentrations} of {Ambient} {Fine} {Particulate} {Matter} across {North} {America} from {Chemical} {Transport} {Modeling}, {Satellite} {Remote} {Sensing}, and {Ground}-{Based} {Measurements}},
	volume = {53},
	copyright = {http://pubs.acs.org/page/policy/authorchoice\_termsofuse.html},
	issn = {0013-936X, 1520-5851},
	url = {https://pubs.acs.org/doi/10.1021/acs.est.8b06875},
	doi = {10.1021/acs.est.8b06875},
	language = {en},
	number = {9},
	urldate = {2026-02-19},
	journal = {Environmental Science \& Technology},
	author = {Meng, Jun and Li, Chi and Martin, Randall V. and Van Donkelaar, Aaron and Hystad, Perry and Brauer, Michael},
	month = may,
	year = {2019},
	pages = {5071--5079},
}

@techreport{hansen_remote_2023,
	title = {Remote {Work} across {Jobs}, {Companies}, and {Space}},
	url = {http://www.jstor.org/stable/resrep64667},
	urldate = {2026-02-21},
	institution = {IZA - Institute of Labor Economics},
	author = {Hansen, Stephen and Lambert, Peter John and Bloom, Nick and Davis, Steven J. and Sadun, Raffaella and Taska, Bledi},
	year = {2023},
}

@techreport{barrero_why_2021,
	address = {Cambridge, MA},
	title = {Why {Working} from {Home} {Will} {Stick}},
	url = {http://www.nber.org/papers/w28731.pdf},
	doi = {10.3386/w28731},
	language = {en},
	number = {w28731},
	urldate = {2026-02-21},
	institution = {National Bureau of Economic Research},
	author = {Barrero, Jose Maria and Bloom, Nicholas and Davis, Steven},
	month = apr,
	year = {2021},
	pages = {w28731},
}

@techreport{aksoy_working_2022,
	address = {Cambridge, MA},
	title = {Working from {Home} {Around} the {World}},
	url = {http://www.nber.org/papers/w30446.pdf},
	doi = {10.3386/w30446},
	language = {en},
	number = {w30446},
	urldate = {2026-02-21},
	institution = {National Bureau of Economic Research},
	author = {Aksoy, Cevat Giray and Barrero, Jose Maria and Bloom, Nicholas and Davis, Steven and Dolls, Mathias and Zarate, Pablo},
	month = sep,
	year = {2022},
	pages = {w30446},
}

@misc{sanh_distilbert_2020,
	title = {{DistilBERT}, a distilled version of {BERT}: smaller, faster, cheaper and lighter},
	shorttitle = {{DistilBERT}, a distilled version of {BERT}},
	url = {http://arxiv.org/abs/1910.01108},
	doi = {10.48550/arXiv.1910.01108},
	urldate = {2026-02-21},
	publisher = {arXiv},
	author = {Sanh, Victor and Debut, Lysandre and Chaumond, Julien and Wolf, Thomas},
	month = mar,
	year = {2020},
	note = {arXiv:1910.01108 [cs]},
}

@misc{fan_partial_2025,
	title = {Partial {Identification} in {Moment} {Models} with {Incomplete} {Data}--{A} {Conditional} {Optimal} {Transport} {Approach}},
	url = {http://arxiv.org/abs/2503.16098},
	doi = {10.48550/arXiv.2503.16098},
	urldate = {2026-02-26},
	publisher = {arXiv},
	author = {Fan, Yanqin and Park, Hyeonseok and Pass, Brendan and Shi, Xuetao},
	month = oct,
	year = {2025},
	note = {arXiv:2503.16098 [econ]},
}

@incollection{ridder_chapter_2007,
	title = {Chapter 75 {The} {Econometrics} of {Data} {Combination}},
	volume = {6},
	copyright = {https://www.elsevier.com/tdm/userlicense/1.0/},
	isbn = {978-0-444-53200-8},
	url = {https://linkinghub.elsevier.com/retrieve/pii/S1573441207060758},
	doi = {10.1016/S1573-4412(07)06075-8},
	language = {en},
	urldate = {2026-02-27},
	booktitle = {Handbook of {Econometrics}},
	publisher = {Elsevier},
	author = {Ridder, Geert and Moffitt, Robert},
	year = {2007},
	pages = {5469--5547},
}

@misc{li_finite_2025,
	title = {Finite {Sample} {Inference} in {Incomplete} {Models}},
	url = {http://arxiv.org/abs/2204.00473},
	doi = {10.48550/arXiv.2204.00473},
	urldate = {2026-02-27},
	publisher = {arXiv},
	author = {Li, Lixiong and Henry, Marc},
	month = oct,
	year = {2025},
	note = {arXiv:2204.00473 [econ]},
}

@article{dhaultfoeuille_partially_2025,
	title = {Partially {Linear} {Models} under {Data} {Combination}},
	volume = {92},
	copyright = {https://academic.oup.com/pages/standard-publication-reuse-rights},
	issn = {0034-6527, 1467-937X},
	url = {https://academic.oup.com/restud/article/92/1/238/7637571},
	doi = {10.1093/restud/rdae022},
	language = {en},
	number = {1},
	urldate = {2026-02-28},
	journal = {Review of Economic Studies},
	author = {D’Haultfœuille, X and Gaillac, C and Maurel, A},
	month = jan,
	year = {2025},
	pages = {238--267},
}

@article{hwang_bounding_2025,
	title = {Bounding {Omitted} {Variable} {Bias} {Using} {Auxiliary} {Data}: {With} an {Application} to {Estimate} {Neighborhood} {Effects}},
	issn = {0735-0015, 1537-2707},
	shorttitle = {Bounding {Omitted} {Variable} {Bias} {Using} {Auxiliary} {Data}},
	url = {https://www.tandfonline.com/doi/full/10.1080/07350015.2025.2606063},
	doi = {10.1080/07350015.2025.2606063},
	language = {en},
	urldate = {2026-02-28},
	journal = {Journal of Business \& Economic Statistics},
	author = {Hwang, Yujung},
	month = dec,
	year = {2025},
	pages = {1--25},
}

@article{gentzkow_what_2010,
	title = {What {Drives} {Media} {Slant}? {Evidence} {From} {U}.{S}. {Daily} {Newspapers}},
	volume = {78},
	copyright = {http://doi.wiley.com/10.1002/tdm\_license\_1.1},
	issn = {0012-9682},
	shorttitle = {What {Drives} {Media} {Slant}?},
	url = {http://doi.wiley.com/10.3982/ECTA7195},
	doi = {10.3982/ECTA7195},
	language = {en},
	number = {1},
	urldate = {2026-03-25},
	journal = {Econometrica},
	author = {Gentzkow, Matthew and Shapiro, Jesse},
	year = {2010},
	pages = {35--71},
}

@article{groseclose_measure_2005,
	title = {A {Measure} of {Media} {Bias}},
	volume = {120},
	issn = {0033-5533, 1531-4650},
	url = {https://academic.oup.com/qje/article-lookup/doi/10.1162/003355305775097542},
	doi = {10.1162/003355305775097542},
	language = {en},
	number = {4},
	urldate = {2026-03-25},
	journal = {The Quarterly Journal of Economics},
	author = {Groseclose, T. and Milyo, J.},
	month = nov,
	year = {2005},
	pages = {1191--1237},
}

@article{hassan_firm-level_2019,
	title = {Firm-{Level} {Political} {Risk}: {Measurement} and {Effects}},
	volume = {134},
	copyright = {https://academic.oup.com/journals/pages/open\_access/funder\_policies/chorus/standard\_publication\_model},
	issn = {0033-5533, 1531-4650},
	shorttitle = {Firm-{Level} {Political} {Risk}},
	url = {https://academic.oup.com/qje/article/134/4/2135/5531768},
	doi = {10.1093/qje/qjz021},
	language = {en},
	number = {4},
	urldate = {2026-03-27},
	journal = {The Quarterly Journal of Economics},
	author = {Hassan, Tarek A and Hollander, Stephan and Van Lent, Laurence and Tahoun, Ahmed},
	month = nov,
	year = {2019},
	pages = {2135--2202},
}

@article{hu_illuminating_2022,
	title = {Illuminating economic growth},
	volume = {228},
	issn = {03044076},
	url = {https://linkinghub.elsevier.com/retrieve/pii/S0304407621001767},
	doi = {10.1016/j.jeconom.2021.05.007},
	language = {en},
	number = {2},
	urldate = {2026-03-27},
	journal = {Journal of Econometrics},
	author = {Hu, Yingyao and Yao, Jiaxiong},
	month = jun,
	year = {2022},
	pages = {359--378},
}

@article{gentzkow_text_2019,
	title = {Text as {Data}},
	volume = {57},
	issn = {0022-0515},
	url = {https://pubs.aeaweb.org/doi/10.1257/jel.20181020},
	doi = {10.1257/jel.20181020},
	language = {en},
	number = {3},
	urldate = {2026-03-27},
	journal = {Journal of Economic Literature},
	author = {Gentzkow, Matthew and Kelly, Bryan and Taddy, Matt},
	month = sep,
	year = {2019},
	pages = {535--574},
}

@misc{kluger_prediction-powered_2025,
	title = {Prediction-{Powered} {Inference} with {Imputed} {Covariates} and {Nonuniform} {Sampling}},
	url = {http://arxiv.org/abs/2501.18577},
	doi = {10.48550/arXiv.2501.18577},
	urldate = {2026-03-28},
	publisher = {arXiv},
	author = {Kluger, Dan M. and Lu, Kerri and Zrnic, Tijana and Wang, Sherrie and Bates, Stephen},
	month = nov,
	year = {2025},
	note = {arXiv:2501.18577 [stat]},
}

@article{chen_unified_2000,
	title = {A {Unified} {Approach} to {Regression} {Analysis} {Under} {Double}-{Sampling} {Designs}},
	volume = {62},
	copyright = {https://academic.oup.com/journals/pages/open\_access/funder\_policies/chorus/standard\_publication\_model},
	issn = {1369-7412, 1467-9868},
	url = {https://academic.oup.com/jrsssb/article/62/3/449/7083278},
	doi = {10.1111/1467-9868.00243},
	language = {en},
	number = {3},
	urldate = {2026-03-28},
	journal = {Journal of the Royal Statistical Society Series B: Statistical Methodology},
	author = {Chen, Yi-Hau and Chen, Hung},
	month = sep,
	year = {2000},
	pages = {449--460},
}

@misc{meango_combining_2025,
	title = {Combining stated and revealed preferences},
	url = {http://arxiv.org/abs/2507.13552},
	doi = {10.48550/arXiv.2507.13552},
	urldate = {2026-03-28},
	publisher = {arXiv},
	author = {Meango, Romuald and Henry, Marc and Mourifie, Ismael},
	month = nov,
	year = {2025},
	note = {arXiv:2507.13552 [econ]},
}

@article{athey_surrogate_2025,
	title = {The {Surrogate} {Index}: {Combining} {Short}-{Term} {Proxies} to {Estimate} {Long}-{Term} {Treatment} {Effects} {More} {Rapidly} and {Precisely}},
	copyright = {https://creativecommons.org/licenses/by-nc/4.0/},
	issn = {0034-6527, 1467-937X},
	shorttitle = {The {Surrogate} {Index}},
	url = {https://academic.oup.com/restud/advance-article/doi/10.1093/restud/rdaf087/8268796},
	doi = {10.1093/restud/rdaf087},
	language = {en},
	urldate = {2026-03-28},
	journal = {Review of Economic Studies},
	author = {Athey, Susan and Chetty, Raj and Imbens, Guido W and Kang, Hyunseung},
	month = sep,
	year = {2025},
	pages = {rdaf087},
}

@article{santavirta_name-based_2024,
	title = {Name-{Based} {Estimators} of {Intergenerational} {Mobility}},
	volume = {134},
	copyright = {https://academic.oup.com/journals/pages/open\_access/funder\_policies/chorus/standard\_publication\_model},
	issn = {0013-0133, 1468-0297},
	url = {https://academic.oup.com/ej/article/134/663/2982/7673076},
	doi = {10.1093/ej/ueae035},
	language = {en},
	number = {663},
	urldate = {2026-03-28},
	journal = {The Economic Journal},
	author = {Santavirta, Torsten and Stuhler, Jan},
	month = sep,
	year = {2024},
	pages = {2982--3016},
}

@article{kallus_assessing_2022,
	title = {Assessing {Algorithmic} {Fairness} with {Unobserved} {Protected} {Class} {Using} {Data} {Combination}},
	volume = {68},
	issn = {0025-1909, 1526-5501},
	url = {https://pubsonline.informs.org/doi/10.1287/mnsc.2020.3850},
	doi = {10.1287/mnsc.2020.3850},
	language = {en},
	number = {3},
	urldate = {2026-03-28},
	journal = {Management Science},
	author = {Kallus, Nathan and Mao, Xiaojie and Zhou, Angela},
	month = mar,
	year = {2022},
	pages = {1959--1981},
}

@misc{sanford_adversarial_2025,
	title = {Adversarial {Debiasing} for {Unbiased} {Parameter} {Recovery}},
	copyright = {Creative Commons Attribution 4.0 International},
	url = {https://arxiv.org/abs/2502.12323},
	doi = {10.48550/ARXIV.2502.12323},
	urldate = {2026-03-29},
	publisher = {arXiv},
	author = {Sanford, Luke C and Ayers, Megan and Gordon, Matthew and Stone, Eliana},
	year = {2025},
	note = {Version Number: 1},
}

@misc{miao_task-agnostic_2024,
	title = {Task-{Agnostic} {Machine}-{Learning}-{Assisted} {Inference}},
	copyright = {Creative Commons Attribution 4.0 International},
	url = {https://arxiv.org/abs/2405.20039},
	doi = {10.48550/ARXIV.2405.20039},
	urldate = {2026-03-29},
	publisher = {arXiv},
	author = {Miao, Jiacheng and Lu, Qiongshi},
	year = {2024},
	note = {Version Number: 3},
}

@misc{zrnic_cross-prediction-powered_2023,
	title = {Cross-{Prediction}-{Powered} {Inference}},
	copyright = {arXiv.org perpetual, non-exclusive license},
	url = {https://arxiv.org/abs/2309.16598},
	doi = {10.48550/ARXIV.2309.16598},
	urldate = {2026-03-29},
	publisher = {arXiv},
	author = {Zrnic, Tijana and Candès, Emmanuel J.},
	year = {2023},
	note = {Version Number: 3},
}

@article{zhang_debiasing_2023,
	title = {Debiasing {Machine}-{Learning}- or {AI}-{Generated} {Regressors} in {Partial} {Linear} {Models}},
	issn = {1556-5068},
	url = {https://www.ssrn.com/abstract=4636026},
	doi = {10.2139/ssrn.4636026},
	language = {en},
	urldate = {2026-03-29},
	journal = {SSRN Electronic Journal},
	author = {Zhang, Jingwen and Xue, Wendao and Yu, Yifan and Tan, Yong},
	year = {2023},
}

@misc{angelopoulos_ppi_2024,
	title = {{PPI}++: {Efficient} {Prediction}-{Powered} {Inference}},
	shorttitle = {{PPI}++},
	url = {http://arxiv.org/abs/2311.01453},
	doi = {10.48550/arXiv.2311.01453},
	urldate = {2026-03-29},
	publisher = {arXiv},
	author = {Angelopoulos, Anastasios N. and Duchi, John C. and Zrnic, Tijana},
	month = mar,
	year = {2024},
	note = {arXiv:2311.01453 [stat]},
}

@article{angelopoulos_prediction-powered_2023,
	title = {Prediction-powered inference},
	volume = {382},
	number = {6671},
	journal = {Science},
	publisher = {American Association for the Advancement of Science},
	author = {Angelopoulos, Anastasios N and Bates, Stephen and Fannjiang, Clara and Jordan, Michael I and Zrnic, Tijana},
	year = {2023},
	pages = {669--674},
}

@article{allon_machine_2023,
	title = {Machine learning and prediction errors in causal inference},
	journal = {The Wharton School Research Paper Forthcoming},
	author = {Allon, Gad and Chen, Daniel and Jiang, Zhenling and Zhang, Dennis},
	year = {2023},
}

@misc{rambachan_program_2024,
	title = {Program {Evaluation} with {Remotely} {Sensed} {Outcomes}},
	copyright = {Creative Commons Zero v1.0 Universal},
	url = {https://arxiv.org/abs/2411.10959},
	doi = {10.48550/ARXIV.2411.10959},
	urldate = {2026-03-29},
	publisher = {arXiv},
	author = {Rambachan, Ashesh and Singh, Rahul and Viviano, Davide},
	year = {2024},
	note = {Version Number: 3},
}

@article{obradovic_identification_2026,
	title = {Identification of {Long}-{Term} {Treatment} {Effects} via {Temporal} {Links}, {Observational}, and {Experimental} {Data}},
	journal = {Working Paper},
	author = {Obradović, Filip},
	year = {2026},
}

\end{document}